\newtheorem{theorem}{Theorem}
\newtheorem{proposition}{Proposition}
\newtheorem{corollary}{Corollary}
\newtheorem{definition}{Definition}
\newcommand{\bigtimes}{\text{\LARGE $\times$}}
\newcommand{\NPHard}{$\mathsf{NP}$-hard}
\newcommand{\NP}{$\mathsf{NP}$}
\newcommand{\Poly}{$\mathsf{P}$}
\newcommand{\PolyAPX}{Poly-$\mathsf{APX}$}
\journal{Journal of \LaTeX\ Templates}
\begin{document}

\begin{frontmatter}

\title{Leadership in Singleton Congestion Games: \protect\\ What is Hard and What is Easy}
\tnotetext[mytitlenote]{A preliminary version of this work appeared in~\cite{Marchesi18:leadership}.
%% the IJCAI~2018 proceedings~\cite{Marchesi18:leadership}. 
  This extended version contains new computational complexity proofs for problems addressed only partially in the preliminary version, proofs of results concerning problems not addressed previously, new algorithms, and their experimental evaluation.}

%% Group authors per affiliation:
\author{Matteo Castiglioni, Alberto Marchesi, Nicola Gatti}
\address{Politecnico di Milano, Italy}

\author{Stefano Coniglio}
\address{University of Southampton, UK}

%% Group authors per affiliation:
%\author{Elsevier\fnref{myfootnote}}
%\address{Radarweg 29, Amsterdam}
%\fntext[myfootnote]{Since 1880.}

%% or include affiliations in footnotes:
%\author[mymainaddress,mysecondaryaddress]{Elsevier Inc}
%\ead[url]{www.elsevier.com}

%\author[mysecondaryaddress]{Global Customer Service\corref{mycorrespondingauthor}}
%\cortext[mycorrespondingauthor]{Corresponding author}
%\ead{support@elsevier.com}

%\address[mymainaddress]{1600 John F Kennedy Boulevard, Philadelphia}
%\address[mysecondaryaddress]{360 Park Avenue South, New York}

\begin{abstract}
We study the problem of computing \emph{Stackelberg equilibria} \emph{Stackelberg games} whose underlying structure is in {\em congestion games}, focusing on the case where each player can choose a single resource (a.k.a. \emph{singleton congestion games}) and one of them acts as leader. In particular, we address the cases where the players either have the same action spaces (i.e., the set of resources they can choose is the same for all of them) or different ones, and where their costs are either monotonic functions of the resource congestion or not.

We show that, in the case where the players have different action spaces, the cost the leader incurs in a Stackelberg equilibrium cannot be approximated in polynomial time up to within any polynomial factor in the size of the game unless $\mathsf{P}=\mathsf{NP}$, independently of the cost functions being monotonic or not. We show that a similar result also holds when the players have nonmonotonic cost functions, even if their action spaces are the same. Differently, we prove that the case with identical action spaces and monotonic cost functions is easy, and propose polynomial-time algorithm for it.
We also improve an algorithm for the computation of a socially optimal equilibrium in singleton congestion games with the same action spaces without leadership, and extend it to the computation of a Stackelberg equilibrium for the case where the leader is restricted to pure strategies.

For the cases in which the problem of finding an equilibrium is hard, we show how, in the optimistic setting where the followers break ties in favor of the leader, the problem can be formulated via mixed-integer linear programming techniques, which computational experiments show to scale quite well.
\end{abstract}

\begin{keyword}
Algorithmic Game Theory \sep Stackelberg Equilibria \sep Congestion Games \sep Computational Complexity \sep Bilevel Programming
%\texttt{elsarticle.cls}\sep \LaTeX\sep Elsevier \sep template
%\MSC[2010] 00-01\sep  99-00
\end{keyword}

\end{frontmatter}

%\linenumbers

\section{Introduction}\label{sec:introduction}

In {\em Stackelberg Games} (SGs), a player, acting as \emph{leader}, has the ability to commit to a (possibly) mixed strategy beforehand, while the other players, acting as \emph{followers}, observe the leader's commitment and, then, decide how to play~\cite{von2010leadership}. 
Over the last years, Stackelberg games and the corresponding \emph{Stackelberg Equilibria} (SEs) have received a lot of attention in the artificial intelligence literature.
The vast majority of the works related to the topic has focused on the problem of finding an SE when mixed-strategy commitments are allowed---a problem often referred to as \emph{computing the optimal strategy to commit to}~\cite{conitzer2006computing}.
%
% The problem of finding a \emph{Stackelberg equilibrium} (SE) when mixed-strategy commitments are allowed is receiving a lot of attention in the artificial intelligence literature.
%
% also thanks to the many successful real-world applications in, e.g., security~\cite{tambe2011security}. 
%
%
% Different versions of SEs can be defined based on how the followers break ties: \emph{optimistic} (OSE) if in favor of the leader, and \emph{pessimistic} (PSE) if against her.

The recent surge of interest in SGs is mainly motivated by their many successful real-world applications.
A prominent one is that of \emph{security games}, which model situations where a defender, acting as leader, has to allocate scarce resources to protect valuable targets from an attacker, acting as follower~\cite{paruchuri2008playing,KiekintveldJTPOT09,an2011guards,tambe2011security}.
Other interesting application are found in {\em toll-setting games}, where the leader is a central authority collecting tolls from the users of a network who, acting as followers, decide on how to best travel through the network so to minimize their cost after observing the pricing strategy chosen by the authority~\cite{labbe1998bilevel,labbe2016bilevel}.
Besides the security domain and toll-setting games, applications of SGs can be found in, among others, interdiction games~\cite{caprara2016bilevel,matuschke2017protection}, network routing~\cite{amaldi2013network}, and mechanism design~\cite{sandholm2002evolutionary}.

\subsection{State of the Art on Stackelberg Equilibrium Computation}

When studying SGs, two crucial aspects need to be considered: how the followers break \emph{ties} among the multiple equilibria that could arise after observing the leader's commitment, and the \emph{structure} of the underlying followers' game.

As to the first aspect, %different versions of SEs can be defined based on how the followers break ties. 
%
%In the literature,
two extreme cases are usually taken into account: the optimistic and the pessimistic one. 
In an \emph{Optimistic Stackelberg Equilibrium} (OSE), the followers are assumed to break ties in favor of the leader.
In a \emph{Pessimistic Stackelberg Equilibrium} (PSE), they are assumed to do it against her~\cite{von2010leadership}.\footnote{Many works refer to OSEs and PSEs as, respectively, \emph{strong} SEs and \emph{weak} SEs, following the terminology of~\cite{Breton88:Sequential}, where the two concepts are first introduced.}

As to the second aspect, Table~\ref{tab:state_of_art} summarizes most of the known computational results for the problem of computing an O/PSE according to the structure of the underlying game.
The problem is known to be easy in 2-player normal-form games in both the optimistic and pessimistic setting, as shown in, respectively,~\cite{conitzer2006computing}~and~\cite{von2010leadership}.
In particular, \cite{conitzer2011commitment} shows that the problem of computing an OSE can be formulated as a single \emph{Linear Program} (LP), while~\cite{von2010leadership} illustrates that a PSE can be computed by solving a polynomial number of LPs.

When one considers the case of $n$-player normal-form games with $n \geq 3$, many cases are possible, depending on how the followers behave after observing the leader's commitment.
A reasonable choice, which has been widely investigated in the literature, is to assume that they play simultaneously and noncooperatively, reaching a \emph{Nash Equilibrium} (NE).
In this case, it is known that finding an O/PSE is not in \PolyAPX~unless \Poly $=$ \NP, even when there are only two followers (i.e., with $n = 3$)~\cite{basilico2017methods}.
Computing an OSE becomes easy for any $n$ if we restrict the followers to only play pure strategies, as it requires the solution of an LP for each outcome of the followers' game, whose number is polynomial in the size of the game representation~\cite{algo2018computing}.
On the other hand, computing a PSE is still \NPHard~even with only two followers playing pure strategies~\cite{coniglio2017pessimistic} and it is not in \PolyAPX~unless \Poly $=$ \NP~with at least three followers (i.e., with $n \geq 4$)~\cite{algo2018computing}.

As for algorithms, the authors of~\cite{basilico2017bilevel} show how to formulate the problem of finding an OSE in $n$-player normal-form games as a \emph{nonlinear} and \emph{nonconvex} mathematical program, which they solve via spatial branch-and-bound techniques.
As shown in~\cite{coniglio2017pessimistic}, when the followers are restricted to play pure strategies a PSE can be found by employing an algorithm that solves multiple \emph{lexicographic Mixed-Integer Linear Programs} (lex-MILPs), which, as the authors show, can be further enhanced by embedding it in a branch-and-bound scheme.

%% Let us remark that some works consider $n$-player normal-form games where the followers do not play an NE.
%% %
%% In particular, \cite{conitzer2011commitment} shows that finding an OSE is easy when the followers can play correlated strategies, while~\cite{conitzer2006computing} proves that the problem becomes \NPHard~if the followers play in a hierarchical fashion.
As to works on $n$-player normal-form games where the followers do not play an NE, \cite{conitzer2011commitment} shows that finding an OSE is easy when the followers can play correlated strategies, while~\cite{conitzer2006computing} proves that the problem becomes \NPHard~if the followers play in a hierarchical fashion.

Besides normal-form games, the literature has devoted considerable attention to Bayesian 2-player normal-form games where the follower can be of different types, mainly due to their relevance in security games.
In this setting, it is known that finding an OSE is \PolyAPX-complete~\cite{letchford2009learning}, and that an equilibrium can be found by solving a \emph{Mixed-Integer Linear Program} (MILP)~\cite{paruchuri2008playing}.
As recently shown in~\cite{alberto2018computing}, the same hardness result also holds for the problem of computing a PSE. \cite{alberto2018computing} also provides an algorithm for computing an equilibrium via the solution of exponentially many LPs, without resorting to the normal-form representation.
%
% QUI IN REALTA' VA SPIEGATO MEGLIO

Over the last years, the Stackelberg paradigm has also been applied to 2-player extensive-form games.
In particular, the authors of~\cite{Letchford10:Computing} prove that finding an OSE is \NPHard~even in games without nature.
The results are extended by the authors of~\cite{Farina18:trembling}, who prove that computing a PSE is also \NPHard.
%
% CON NATURA SAREBBE UN BAYESIANO
% 
%
Works such as~\cite{Bosansky15:Sequence,Cermak16:Using} address the problem of computing an OSE in extensive-form games, providing worst-case exponential time algorithms based on MILPs.
In the context of extensive-form games, attempts have also been made towards the refinement of SEs.
In particular, the authors of~\cite{Kroer18:Robust} introduce the idea of a \emph{robust} SEs, where an optimal commitment is found against a worst-case follower's utility model.
Pursuing a different approach, the authors of~\cite{Farina18:trembling} show how to guarantee an optimal commitment off the equilibrium path by adopting the idea of \emph{trembling-hand perfection} to the Stackelberg setting.

Other works attempted to relax the general structure of normal-form games, trying to identify games with many players where SEs can be efficiently computed.
Along this line of research, the authors of~\cite{alberto2018computing} analyze \emph{polymatrix games}, which are games where the players interact pairwise and each player takes part to a 2-player normal-form game with each of the other players.
For these games, \cite{basilico2017methods} shows that, when the followers play mixed strategies, finding an O/PSE is not in \PolyAPX~unless \Poly $=$ \NP. The result is extended in~\cite{alberto2018computing}, where the authors show it to hold even when the followers are restricted to pure strategies.
While, for fixed $n$, finding an OSE with the followers playing pure strategies is easy, the same does not hold for PSEs, as the problem is hard even with only three followers (i.e., with $n = 4$)~\cite{de2018computing}.

For works applying the Stackelberg paradigm to other game models, such as stochastic games and Bayesian signaling games, we refer the reader to~\cite{letchford2012computing,vorobeychik2012computing,xu2016signaling}.

% The same holds for finding an OSE in $n$-player games where the followers play simultaneously in a correlated fashion~\cite{conitzer2011commitment}.
%
% In more general situations, though, the problem is hard. Indeed, computing an O/PSE for normal-form games with two followers who play simultaneously a \emph{Nash equilibrium} (NE) is not in Poly-$\mathsf{APX}$ unless $\mathsf{P}=\mathsf{NP}$~\cite{basilico2017methods}.
%  
% In polymatrix games with the followers restricted to pure strategies, finding an OSE is not in Poly-$\mathsf{APX}$ unless $\mathsf{P}=\mathsf{NP}$ if the number of followers is not fixed, while it is easy if their number is a constant~\cite{alberto2018computing}. 
%
%Finding a PSE is $\mathsf{NP}$-hard in normal-form games even with two followers playing pure strategies~\cite{coniglio2017pessimistic}. 
%
% Finding an OSE is also $\mathsf{NP}$-hard~\cite{conitzer2006computing} with multiple followers playing sequentially.

\afterpage{
\begin{landscape}
  \begin{table}[!htp]
    \caption{Summary of the results on the computation of SEs in normal-form games, Bayesian normal-form games, extensive-form games, and polymatrix games. In all cases, we assume that the leader is allowed to play mixed strategies.}
    \label{tab:state_of_art}
    \centering
    %	\resizebox{1.32\textwidth}{!}{
    \renewcommand{\arraystretch}{1.3}
    \setlength{\tabcolsep}{2pt}
    \begin{tabular}{c|l||c|c||c|c}
      \multicolumn{2}{r||}{} & \multicolumn{2}{c||}{Optimistic} & \multicolumn{2}{c}{Pessimistic} \\
      \hline
      \multicolumn{2}{r||}{Followers' strategies} & Pure & Mixed & Pure & Mixed \\
      \hline
      \hline
      \multicolumn{6}{c}{Normal-form games} \\
      \hline
      \hline
      \multirow{2}{*}{$n=2$} & Complexity & \multicolumn{2}{|c||}{ \Poly~\cite{conitzer2006computing}} & \multicolumn{2}{|c}{\Poly~\cite{von2010leadership}}   \\
      & Algorithm & \multicolumn{2}{|c||}{multi-LP~\cite{conitzer2006computing}} & \multicolumn{2}{|c}{multi-LP~\cite{von2010leadership}} \\
      \hline
      \multirow{2}{*}{$n = 3$} & Complexity & \Poly~\cite{algo2018computing} &  \NPHard, $\notin$ \PolyAPX~\cite{basilico2017methods} &  \NPHard~\cite{coniglio2017pessimistic} & \NPHard, $\notin$ \PolyAPX~\cite{basilico2017methods} \\
      & Algorithm & multi-LP~\cite{algo2018computing} & spatial branch-and-bound~\cite{basilico2017bilevel}  &  multi-lex-MILP~\cite{coniglio2017pessimistic} & -- \\
      \hline
      \multirow{2}{*}{$n\geq 4$} & Complexity & \Poly~\cite{algo2018computing} &  \NPHard, $\notin$ \PolyAPX~\cite{basilico2017methods} &  \NPHard~\cite{coniglio2017pessimistic}, $\notin$ \PolyAPX~\cite{algo2018computing} & \NPHard, $\notin$ \PolyAPX~\cite{basilico2017methods} \\ 
      & Algorithm & multi-LP~\cite{algo2018computing} & spatial branch-and-bound~\cite{basilico2017bilevel}  &  multi-lex-MILP~\cite{coniglio2017pessimistic}  & -- \\
      \hline
      \hline
      \multicolumn{6}{c}{Bayesian normal-form games} \\
      \hline
      \hline
      \multirow{2}{*}{$n=2$} & Complexity & \multicolumn{2}{|c||}{\NPHard~\cite{conitzer2006computing},  \PolyAPX-complete~\cite{letchford2009learning}} & \multicolumn{2}{|c}{\NPHard,  \PolyAPX-complete~\cite{alberto2018computing}} \\
      & Algorithm & \multicolumn{2}{|c||}{MILP~\cite{paruchuri2008playing}} & \multicolumn{2}{|c}{multi-LP~\cite{alberto2018computing}} \\
      \hline
      \hline
      \multicolumn{6}{c}{Extensive-form games} \\
      \hline
      \hline
      \multirow{2}{*}{$n=2$} & Complexity & \multicolumn{2}{|c||}{\NPHard~\cite{letchford2012computing}} & \multicolumn{2}{|c}{\NPHard~\cite{Farina18:trembling}}  \\
      & Algorithm & \multicolumn{2}{|c||}{MILP~\cite{Bosansky15:Sequence,Cermak16:Using}} & \multicolumn{2}{|c}{multi-LP~\cite{von2010leadership}}  \\
      \hline
      \hline
      \multicolumn{6}{c}{Polymatrix games} \\
      \hline
      \hline
      \multirow{2}{*}{$n= 3$} & Complexity & \Poly~\cite{algo2018computing} & \NPHard, $\notin$ \PolyAPX~\cite{basilico2017methods} & \NPHard & \NPHard, $\notin$ \PolyAPX~\cite{basilico2017methods}  \\
      & Algorithm & multi-LP~\cite{algo2018computing} & spatial branch-and-bound~\cite{basilico2017bilevel} & multi-lex-MILP~\cite{algo2018computing} & -- \\
      \hline
      \multirow{2}{*}{$n\geq 4$ (fixed)} & Complexity & \Poly~\cite{algo2018computing} & \NPHard, $\notin$ \PolyAPX~\cite{basilico2017methods} & \NPHard, $\notin$ \PolyAPX~\cite{de2018computing}
      & \NPHard, $\notin$ \PolyAPX~\cite{basilico2017methods}  \\
      & Algorithm & multi-LP~\cite{algo2018computing} & spatial branch-and-bound~\cite{basilico2017bilevel} & multi-lex-MILP~\cite{algo2018computing} & -- \\\hline
      \multirow{2}{*}{$n \geq 4$ (free)} & Complexity & \NPHard, $\notin$ \PolyAPX~\cite{alberto2018computing} & \NPHard, $\notin$ \PolyAPX~\cite{basilico2017methods} & \NPHard, $\notin$ \PolyAPX~\cite{de2018computing} & \NPHard, $\notin$ \PolyAPX~\cite{basilico2017methods}  \\
      & Algorithm & multi-LP~\cite{algo2018computing} & spatial branch-and-bound~\cite{basilico2017bilevel} & multi-lex-MILP~\cite{algo2018computing} & -- \\
    \end{tabular}
    % }
  \end{table}
\end{landscape}}

\subsection{The Stackelberg Paradigm in Congestion Games}

We focus, in this work, on \emph{Congestion Games} (CGs),
%with the aim of investigating whether the Stackelberg paradigm applied to them yields a computationally tractable equilibrium-computation problem.
%CGs
which
model situations in which the players compete for the use of a finite set of shared resources.
The players' actions are subsets of the resources and the costs the players incur depend (monotonically or not) on the level of resource utilization, typically referred to as \emph{resource congestion}.
%
% In CGs, given a set of resources, the players' actions are subsets of the resources and the costs the players perceive depend (monotonically or not) on the level of resource utilization (a.k.a. \emph{congestion}).
%
Crucially, CGs always admit pure-strategy NEs~\cite{rosenthal1973class}.
Such equilibria are always achievable by \emph{best-response dynamics}, i.e., by applying an iterative procedure by which, at each iteration, a player changes her action and switches to playing a best-response to the actions currently played by the other players~\cite{monderer1996potential}.

Many classes of CGs have been introduced in the literature.
These games can be characterized according to the combinatorial structure of the players' action spaces.
In this work, we focus on \emph{Singleton CGs} (SCGs)~\cite{ieong2005fast}, i.e., on CGs where each player cannot use more than a single resource.
Computing NEs in SCGs is easy~\cite{ackermann2008impact} and, for the case in which all the players have the same action space (we will refer to these games as \emph{symmetric}), finding an NE minimizing the social cost is also easy~\cite{ieong2005fast}.

Other classes of CGs have been studied in the literature.
For instance, the authors of~\cite{ackermann2008impact} propose a generalization of SCGs where a player's action space is expressed as a \emph{matroid} defined over the set of resources.
Many works have also addressed CGs played on a network, e.g., games where the players' actions are paths connecting a source to a destination~\cite{fabrikant2004complexity}, or spanning trees~\cite{werneck2000finding}.

In this work, we apply a Stackelberg paradigm to SCGs, assuming the presence of a special player acting as leader.
%
%In particular, we focus, here, on the case where there is a single leader, as, even in this simplified setting, the computational problem of finding an SE is non-trivial.}\tinytodo{Bisogna trovare una formula migliore. E' troppo debole l'argomento.}
%
The leader commits to a (possibly) mixed strategy, while all the other players, acting as followers, observe the leader's commitment and then decide how to play,
%, simultaneously,
 reaching an NE in the resulting SCGs.
%
%~\footnote{To our knowledge, the only works related to ours are~\cite{roughgarden2004stackelberg,fotakis2010stackelberg} and their extensions. However, they analyze a different Stackelberg paradigm where the leader is an authority whose objective is to minimize the social cost of the NE reached by the followers. }
%
In particular, we study the case in which the followers play pure strategies after observing the leader's commitment, which is reasonable as this followers' game always admits at least a pure-strategy NE reachable by best-response dynamics.
%\tinytodo{Sarebbe utile citare un'applicazione. Toll-setting?}
%
%
%We also consider the case in which the leader incurs a cost which is different from the followers'.
For the sake of generality, we assume that the leader's cost may differ from the followers'.

Practical scenarios where Stackelberg SCGs are relevant are those where the set of players contains a higher-priority player who can decide which resource to use before the other ones do.
%, e.g., when the resources can be used for free, but gaining a higher priority requires a payment.
%\tinytodo{Deboluccio. Vien subito da pensare al caso in cui OGNI player possa pagare per giocare prima (tolto il fatto che andrebbe definito che succede quando piu' di un player diventa leader (giocano un NE anche loro stante la giocata del livello sotto?)}
%

While the Stackelberg paradigm has already been applied to CGs, the only works which, to our knowledge, pursue this line of research are~\cite{roughgarden2004stackelberg} and its extensions~\cite{fotakis2010stackelberg,bonifaci2010stackelberg,bilo2015stackelberg}. 
We remark, though, that the author of~\cite{roughgarden2004stackelberg} considers a different Stackelberg paradigm where the leader is an authority whose objective is to minimize the social cost of the NE reached by the followers.
Differently, in this work we assume that the leader is a special player who has the ability to commit to a strategy beforehand with the aim of minimizing her own cost.
%\tinytodo{Anche qua, citare qualche applicatio aiuterebbe.}

\subsection{Original Contributions}

We provide, in this work, an extensive study of the problem of computing SEs in SCGs with leadership.
In particular, we identify four possible cases, according to two orthogonal features of SCGs. 
The first one concerns the relationship among the action spaces of the players. We analyze two possibilities: the one where the players share the same set of resources, and the one where the sets of resources available to them may differ.
The second feature we address is related to the shape of the players' cost functions. We consider two cases: the one where these functions are monotonically increasing in the resource congestion, and the one in which they are not.

\begin{table}[!htp]
  \caption{Summary of the original contributions on the problem of computing an O/PSEs in SCGs provided in this paper.}
  \label{tab:original}
  \centering
%\resizebox{\textwidth}{!}{
  \renewcommand{\arraystretch}{1.2}
  \setlength{\tabcolsep}{2pt}
  \begin{tabular}{c|c|l||c|c}
    %\multicolumn{3}{r||}{} & 
    \multicolumn{5}{c}{Optimistic}  \\
    \hline
    \multicolumn{3}{r||}{Leader's commitment} & Pure & Mixed \\
    \hline
    \multirow{4}{*}{ {\renewcommand{\arraystretch}{0.7}\begin{tabular}{@{}c@{}}Identical \\ action spaces \\(symmetric)\end{tabular}} } & \multirow{2}{*}{  {\renewcommand{\arraystretch}{0.7}\begin{tabular}{@{}c@{}}Monotonic \\ costs\end{tabular}}  } & Complexity & \Poly & \Poly \\
    & & Algorithm & Greedy & Greedy \\
    \cline{2-5}
    & \multirow{2}{*}{  {\renewcommand{\arraystretch}{0.7}\begin{tabular}{@{}c@{}}Non-monotonic \\ costs\end{tabular}}   } & Complexity & \Poly &  \NPHard, $\notin$ \PolyAPX  \\
    & & Algorithm & Dynamic Programming & MILP  \\
    \hline
    \multirow{4}{*}{  {\renewcommand{\arraystretch}{0.7}\begin{tabular}{@{}c@{}}Different \\ action spaces\end{tabular}} } & \multirow{2}{*}{  {\renewcommand{\arraystretch}{0.7}\begin{tabular}{@{}c@{}}Monotonic \\ costs\end{tabular}}  } & Complexity & \NPHard, $\notin$ \PolyAPX & \NPHard, $\notin$ \PolyAPX \\
    & & Algorithm & MILP & MILP \\
    \cline{2-5}
    & \multirow{2}{*}{   {\renewcommand{\arraystretch}{0.7}\begin{tabular}{@{}c@{}}Non-monotonic \\ costs\end{tabular}} } & Complexity & \NPHard, $\notin$ \PolyAPX & \NPHard, $\notin$ \PolyAPX \\
    & & Algorithm & MILP & MILP \\
    \hline\hline
    %\multicolumn{3}{r||}{} & 
    \multicolumn{5}{c}{Pessimistic}  \\
    \hline
    \multicolumn{3}{r||}{Leader's commitment} & Pure & Mixed \\
    \hline
    \multirow{4}{*}{ {\renewcommand{\arraystretch}{0.7}\begin{tabular}{@{}c@{}}Identical \\ action spaces\\(symmetric)\end{tabular}}  } & \multirow{2}{*}{ {\renewcommand{\arraystretch}{0.7}\begin{tabular}{@{}c@{}}Monotonic \\ costs\end{tabular}} } & Complexity & \Poly & \Poly \\
    & & Algorithm & Greedy & Greedy \\
    \cline{2-5}
    & \multirow{2}{*}{ {\renewcommand{\arraystretch}{0.7}\begin{tabular}{@{}c@{}}Non-monotonic \\ costs\end{tabular}}   } & Complexity & \Poly &  \NPHard, $\notin$ \PolyAPX  \\
    & & Algorithm & Dynamic Programming & --  \\
    \hline
    \multirow{4}{*}{ {\renewcommand{\arraystretch}{0.7}\begin{tabular}{@{}c@{}}Different \\ action spaces\end{tabular}}  } & \multirow{2}{*}{  {\renewcommand{\arraystretch}{0.7}\begin{tabular}{@{}c@{}}Monotonic \\ costs\end{tabular}} } & Complexity & \NPHard, $\notin$ \PolyAPX & \NPHard, $\notin$ \PolyAPX \\
    & & Algorithm & -- & -- \\
    \cline{2-5}
    & \multirow{2}{*}{ {\renewcommand{\arraystretch}{0.7}\begin{tabular}{@{}c@{}}Non-monotonic \\ costs\end{tabular}}  } & Complexity & \NPHard, $\notin$ \PolyAPX & \NPHard, $\notin$ \PolyAPX \\
    & & Algorithm & -- & -- \\
  \end{tabular}
  %}
\end{table}

Table~\ref{tab:original} summarizes the original results that we provide with this work.
In particular, we show that, in SCGs where the players' action spaces can be different, computing an O/PSE is not in \PolyAPX~unless \Poly $=$ \NP, even when the players' cost functions are monotonic, the leader has only one available action, and her costs are equal to the followers'.
This also shows that, as we will better explain in the following, the same inapproximability result also holds for the problem of computing, in the same game setting, an NE which minimizes/maximizes the cost incurred by any given player.

For the symmetric case where the players have access to the same set of resources, we show that the complexity of computing an O/PSE depends on the nature of the players' cost functions. 
We prove that the problem is not in \PolyAPX~unless \Poly $=$ \NP~for the case where the players' costs are nonmonotonic functions of the resource congestion.
On the other hand, we show that, in the symmetric case where the players have access to the same set of resources, the problem of computing an O/PSE can be solved in polynomial time when the cost functions are monotonic.
%
%This result holds even when the leader's and followers' cost functions are different.
%GIA' DICIAMO CHE CONSIDERIAMO SEMPRE QUESTO CASO PER GENERALITA'
%
While proving the correctness of the algorithm is straightforward when the leader's commitment is a pure strategy, the analysis is more involved with mixed-strategy commitments.
Our result follows, as we will show, from the fact that mixed-strategy commitments do not allow the leader to incur a cost smaller than the one she incurs with a pure-strategy.
We also consider the case where the leader is restricted to pure-strategy commitments, providing a polynomial-time algorithm for its solution.

Finally, we provide two mathematical programming formulations to compute an OSE for
%the cases where one cannot be found in polynomial time.
%
%In particular, in games with general nonmonotonic cost functions, we provide a MILP to find an OSE for
games with different action spaces in, at most, exponential time, and a more compact one for the symmetric case.
We also evaluate, experimentally, the scalability of the two formulations when fed to a state-of-the-art MILP solver
%two formulations scale in terms of solvability with  methods all the algorithms we propose on a testbed of randomly generated SCGs with a different structure,
and compare their performance---in terms of computing time and solution efficiency---to simple algorithms based on the repetition of best response dynamics.
%, showing that our mathematical programming approach scales well in practice, allowing for the solution of quite large instances.
%
%\SC{We also show that best-response dynamics based algorithms, even if without any theoretical guarantee, return good solutions, empirically providing a constant approximation of the optimal solution.}\tinytodo{Non abbiam ancora detto che diamo algoritmi di best-responde dynamics fino a questo punto! E non lo facciamo mai tolto, en passant, nelle conclusioni!!!}

%\tinytodo[inline]{Non diciamo neinte dei PSE qui---e non sta bene. Potremmo aprire il paragrafo dicendo che ``rimarkando che i PSE si possono trovare, quando esistono, con l'algoritmo dell'algorithmica, ci focalizziamo qua sugli OSE, presentando due MIPS per i casi in cui BLA BLA BLA.}

\subsection{Structure of the Work}

The remainder of the paper is organized as follows.
Section~\ref{sec:preliminaries} introduces basic concepts and the notation we use, including the formal definitions of the game models we consider.
Section~\ref{sec:complexity_different_actions} provides the main hardness results for the problem of computing an O/PSE in games where the players' action spaces are different.
Section~\ref{sec:complexity_same_actions} does the same in games where the players' action spaces are all equal but the cost functions are nonmonotonic.
Section~\ref{sec:algorithms_same_actions_polynomial} establishes which problems can be solved efficiently, providing the corresponding polynomial-time algorithms.
Section~\ref{sec:practical_algorithms} proposes
%some exact algorithms
mathematical programming formulations for computing an OSE in the intractable cases and assesses their scalability via computational experiments.
%the solution of the computationally intractable problems, which, while requiring exponential-time in the worst case, work well in practice,
%\SC{as the computational experiments contained in the section suggest.}
%
%Section~\ref{sec:experimental} is devoted to the experimental evaluations of the proposed algorithms.
%
Finally, Section~\ref{sec:conclusions} concludes the work summarizing the results and pointing out directions for future research.

\section{Preliminaries}\label{sec:preliminaries}

In this work, we analyze SCGs in which a \emph{leader} commits to a strategy beforehand, and, then, the \emph{followers} simultaneously decide how to play, reaching an NE in the game that results from observing the leader's commitment.
Adopting the notation introduced in~\cite{shoham2008multiagent}, we provide the following formal definition of the class of games we study:
\begin{definition}[Stackelberg SCG (SSCG)]\label{def:sscg}
	A \emph{Stackelberg SCG (SSCG)} is a tuple $(N,R,A,c_\ell,c_f)$, where:
	\begin{itemize}
		\item $N=F \cup \{ \ell \}$ is a finite set of players, $\ell$ being the leader and $F$ the set of followers;
		\item $R$ is a finite set of resources;
		\item $A= \{ A_p \}_{p \in N}$ is the set of all players' actions, with $A_p$, for each $p \in N$, being the set of actions of player $p$;
		\item $c_\ell = \{c_{i,\ell}\}_{i \in R}$ and $c_f = \{c_{i,f}\}_{i \in R}$ are, respectively, the leader's and followers' cost functions, with $c_{i,\ell}, c_{i,f} : \mathbb{N} \rightarrow \mathbb{Q}$ being the costs of resource $i$ as a function of its congestion for, respectively, the leader and the followers.
	\end{itemize} 
\end{definition}
%
%a \emph{Stackelberg SCG} (SSCG) as a tuple $(N,R,A,c_\ell,c_f)$ where: 
%
% $N=F \cup \{ \ell \}$ is a finite set of players, with player $\ell$ denoting the leader and $F$ the set of followers, 
%
% $R$ is a finite set of resources, 
%
% $A= \{ A_p \}_{p \in N}$, where $A_p \subseteq R$ represents the set of player $p$'s actions,
%
% and $c_\ell = \{c_{i,\ell}\}_{i \in R}$ and $c_f = \{c_{i,f}\}_{i \in R}$ are, respectively, the leader's and followers' cost functions, with $c_{i,\ell}, c_{i,f} : \mathbb{N} \rightarrow \mathbb{Q}$ denoting the costs of resource $i$ as a function of its congestion.
%
We denote by $n$ and $r$ the number of players and of resources (i.e., $n:= |N|$ and $r := |R|$.
As usual, we assume $c_{i,\ell}(0) = c_{i,f}(0) = 0$ for every $i \in R$.
%
%\todo[inline]{Nota sul fatto che tutti i followers hanno la stessa funzione di costo, in linea con ...?}
%

%In this work, we also analyze a special class of SSCGs where the players' costs are monotonic functions of the resource congestion.
%
%Formally,
We call the players' cost functions \emph{weakly monotonic} if, for every resource $i \in R$, $c_{i,\ell}(x) \leq c_{i, \ell}(x+1)$ and $c_{i,f}(x) \leq c_{i, f}(x+1)$ for all $x \in \mathbb{N}$, and {\em strictly monotonic} if all the inequalities are strict.
%
%PARTE PRIMA TOLTA---FA CONFUSIONE, SEMBRA CHE CAMBIAMO IL PROBLEMA. DIAM LE DEFINIZIONI SECCHE E VIA

We call \emph{strategy} of player $p \in N$ a probability distribution $\sigma_p$ over $A_p$, where $\sigma_p(a_p)$ denotes the probability that $a_p \in A_p$ is played. 
Let $\Delta_p$ be the set of player $p$'s strategies.
A strategy $\sigma_p \in \Delta_p$ is said \emph{pure} if it prescribes player $p$ to always play some action $a_p \in A_p$, i.e., if $\sigma_p(a_p) = 1$ and $\sigma_p(a_p') = 0$ for all $a_p' \in A_p \setminus\{a_p\}$. Otherwise, $\sigma_p$ is said \emph{mixed}.
A collection of strategies is called \emph{strategy profile} in general, and \emph{action profile} if all the strategies it contains are pure.
In this work, we collectively denote by $\sigma = (\sigma_\ell, a)$ a strategy profile in which the leader plays a (possibly) mixed strategy $\sigma_\ell \in \Delta_\ell$ and the followers play the pure strategies contained in the action profile $a = (a_p)_{p \in F} \in \bigtimes_{p \in F} A_p$.

Let $a = (a_p)_{p \in F} \in \bigtimes_{p \in F} A_p$ be a followers' action profile.
We let $\nu_i^a = | \{ p \in F \mid a_p = i \} |$ be the number of followers selecting resource $i \in R$ in $a$. This quantity is equal to the resource congestion caused by the followers' presence only.
We call \emph{followers' configuration} (induced by action profile $a$) the vector $\nu^a \in \mathbb{N}^r$ whose $i$-th component is $\nu_i^a$ for all $i \in R$.

For any $\sigma_\ell \in \Delta_\ell$, we define the followers' expected cost for resource $i \in R$ given $\sigma_\ell$ as the function $c_{i,f}^{\sigma_\ell} : \mathbb{N} \rightarrow \mathbb{Q}$.
$c_{i,f}^{\sigma_\ell}$ is a function of the number $x \in \mathbb{N}$ of followers who select resource $i$. Namely:
\begin{equation}
c_{i,f}^{\sigma_\ell}(x) = \sigma_\ell(i) c_{i,f}(x+1) + (1 - \sigma_\ell(i)) c_{i,f}(x).
\end{equation}
Note that, given a leader's strategy $\sigma_\ell$ and a followers' congestion $x$, all the followers who select resource $i \in R$ experience a congestion that may (with probability $\sigma_\ell(i)$) or may not (with probability $1-\sigma_\ell(i)$) be incremented by one w.r.t. $x$, depending on whether the leader chooses resource $i$ or not.
Given the strategy profile $\sigma = (\sigma_\ell, a)$, the leader's cost is:
\begin{equation}
c_{\ell}^{\sigma} = \sum_{i \in A_\ell} \sigma_\ell(i) c_{i,\ell}(\nu_i^a+1).
\end{equation}

After observing a leader's committment $\sigma_\ell$, the followers play an SCG where the resource costs are specified by the functions $c_{i,f}^{\sigma_\ell}$, for $i \in R$.
%
%% Being a CG, the new SCG always admits an NE in which the players adopt pure strategies~\cite{rosenthal1973class}.
%% %
%% Moreover, we assume that the followers play pure-strategy NEs, which can always be reached by playing some best-response dynamics~\cite{monderer1996potential}.
%
We assume that, after witnessing the leader's committment, the followers play a pure-strategy NE, which is always possible as, being a CG, the new SCG always admits one~\cite{rosenthal1973class}.

Given a strategy profile $\sigma=(\sigma_\ell,a)$, $a$ is an NE for $\sigma_\ell$ if, for every $p \in F$ and $a_p' \in A_p$, $c_{a_p,f}^{\sigma_\ell}(\nu_{a_p}^a) \leq c_{a_p',f}^{\sigma_\ell}(\nu_{a_p'}^a + 1)$,
i.e., if no follower has an incentive to unilaterally deviate from $a_p$ by selecting another resource~$a_p'$. 
For any given $\sigma_\ell \in \Delta_\ell$, let $E^{\sigma_\ell}$ be the set of NEs in the followers' game resulting from $\sigma_\ell$.

%In the second part of the work, we restrict our attention to
We also consider \emph{symmetric SSCGs} (SSSCGs), a subclass of SSCGs in which every player can select every resource, i.e., where $A_p = R$ for all $p \in N$:
\begin{definition}[Symmetric SSCG (SSSCG)]\label{def:ssscg}
%	An \emph{Simple SSCGs (SSSCGs)} is defined as a tuple $(N,R,c_\ell,c_f)$ whose elements are as in Definition~\ref{def:sscg}.
	%% A \emph{Simple SSCGs (SSSCGs)} is defined an SSCG $(N,R,A,c_\ell,c_f)$, see Definition~\ref{def:sscg}, with $A = R$.
	%We call an SSCG defined by a tuple $(N,R,A,c_\ell,c_f)$ \emph{Simple SSCGs (SSSCGs)} if $A = R$.
  We call an SSCG defined by a tuple $(N,R,A,c_\ell,c_f)$ \emph{symmetric} if $A = \bigtimes_{p \in F} R$.
\end{definition}
Note that, in an SSSCG, all the followers are identical due to being allowed to choose the same resources. Thus, only the number of followers selecting each resource is significant, and a followers' action profile $a$ can be equivalently represented by the followers' configuration $\nu^a$ it induces.
As a consequence, when studying SSSCGs we do not explicitly refer to the followers' action profiles but, rather, to their configurations $\nu \in \mathbb{N}^r$, with $\sum_{i \in R} \nu_i = n-1$.
When working, rather than with action profiles, with followers' configurations, we have that $\nu$ is an NE for a given leader's strategy $\sigma_\ell \in \Delta_\ell$ if, for every $i \in R : \nu_i > 0$ and $j \in R$, $c_{i,f}^{\sigma_\ell}(\nu_i) \leq c_{j,f}^{\sigma_\ell}(\nu_j + 1)$.

Given a leader's strategy, the followers' SCG may admit multiple NEs.
%in the followers' game and, hence, different definitions of SE can be considered depending on which equilibrium we assume is reached by the followers.
%
%
As customary in the literature, we consider two extreme cases, which lead to the definition of \emph{Optimistic SE} (OSE) and \emph{Pessimistic SE} (PSE)~\cite{von2010leadership}.
In the first one, we assume that the followers act in favor of the leader, playing an NE minimizing her cost. In the second one, we assume that the followers act against the leader, always playing an NE which results in the maximum leader's cost.
As a result, OSEs and PSEs define the range of possible leader's costs over the set of SEs in the game.
Formally:
\begin{definition}\label{def:opt}
	A strategy profile $\sigma = (\sigma_\ell,a)$ is an OSE if it solves the following bilevel programming problem:
	$$
	\min_{\sigma_\ell \in \Delta_\ell} \ \ \min_{a \in E^{\sigma_\ell}} \ \ c_\ell^{(\sigma_\ell,a)}.
	$$
\end{definition}
As it is clear, an OSE always exists in SSCGs and, since the same objective function is minimized in both levels, the problem can be equivalently rewritten as:
	$$
	\min_{\substack{\sigma_\ell \in \Delta_\ell\\a \in E^{\sigma_\ell}}} \ c_\ell^{(\sigma_\ell,a)}.
	$$
	
\begin{definition}\label{def:pes}
	A PSE, if it exists, is a strategy profile $\sigma = (\sigma_\ell,a)$ which solves the following bilevel problem:
	$$
	%\inf_{\sigma_\ell \in \Delta_\ell} \ \ \max_{a \in E^{\sigma_\ell}} \ \ c_\ell^{(\sigma_\ell,a)}
	\min_{\sigma_\ell \in \Delta_\ell} \ \ \max_{a \in E^{\sigma_\ell}} \ \ c_\ell^{(\sigma_\ell,a)}.
	$$
\end{definition}
Let us recall that, in general, the problem in Definition~\ref{def:pes} may not admit a minimum (but only an infimum) and, thus, a PSE may not exist~\cite{von2010leadership}.

\begin{proposition}
	There are SSSCGs in which a PSE does not exist.
\end{proposition}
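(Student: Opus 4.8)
The plan is to exhibit a small symmetric SSCG in which the pessimistic bilevel problem of Definition~\ref{def:pes} has an infimum that is not attained, which by the discussion immediately preceding the statement is exactly what it means for a PSE not to exist. The mechanism I would exploit is that the followers' NE set $E^{\sigma_\ell}$ changes discontinuously as $\sigma_\ell$ varies: there should be a critical mixed strategy $\sigma_\ell^*$ at which a ``bad'' NE (one with high leader cost) becomes available, while for every $\sigma_\ell$ on one side of $\sigma_\ell^*$ only ``good'' NEs exist. If moving the leader's probability mass toward $\sigma_\ell^*$ strictly decreases her worst-case cost, but at $\sigma_\ell^*$ itself the bad NE kicks in and spikes the worst-case cost back up, then the infimum over $\sigma_\ell$ is approached but never achieved.

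Concretely, I would take two resources $R=\{1,2\}$, two followers, and the leader (so $n=3$), and design the followers' cost functions $c_{i,f}$ and the leader's cost functions $c_{i,\ell}$ so that: (i) when the leader puts probability $q$ on resource $1$, the ``both followers on resource $1$'' configuration, say $\nu=(2,0)$, is an NE for the followers precisely when $q$ is at least some threshold $q^*$ (because the expected cost $c_{1,f}^{\sigma_\ell}(2) = q\, c_{1,f}(3) + (1-q) c_{1,f}(2)$ must not exceed the deviation cost $c_{2,f}^{\sigma_\ell}(1) = q\,c_{2,f}(2) + (1-q) c_{2,f}(1)$); (ii) this configuration $(2,0)$ is very expensive for the leader; and (iii) the ``split'' configuration $(1,1)$ is cheap for the leader and is an NE for a range of $q$ including values approaching $q^*$ from below. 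I would then choose the leader's costs so that her pessimistic cost, as a function of $q$ restricted to the region $q < q^*$ where only the good NE(s) survive, is strictly decreasing in $q$ (or is otherwise minimized only in the limit $q \to q^*$), while at $q = q^*$ the worst-case NE is the bad one. The infimum is then the limiting value of the good-region cost as $q \uparrow q^*$, but it is not attained: any $q < q^*$ gives a strictly larger pessimistic cost, and $q = q^*$ (as well as $q > q^*$) gives the large bad-NE cost.

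The key steps in order: first, fix the two-resource two-follower skeleton and write out, as functions of the leader's probability $q$ on resource~$1$, the NE conditions for each of the three followers' configurations $(2,0),(1,1),(0,2)$; second, pin down numerical cost values making $(2,0)$ an NE iff $q \ge q^*$ for some $q^* \in (0,1)$, while keeping $(1,1)$ an NE on an interval whose closure contains $q^*$; third, assign leader costs $c_{1,\ell},c_{2,\ell}$ so that $c_\ell^{(\sigma_\ell,(2,0))}$ is large and constant-ish while $\max_{a \in E^{\sigma_\ell}} c_\ell$ over the good region is continuous and strictly decreasing toward $q^*$; fourth, verify that the function $q \mapsto \max_{a\in E^{\sigma_\ell}} c_\ell^{(\sigma_\ell,a)}$ has infimum equal to $\lim_{q\uparrow q^*}$ of the good-region value but is not attained anywhere, and conclude via Definition~\ref{def:pes}. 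It is worth also checking the degenerate endpoints $q=0$ and $q=1$ explicitly so no NE structure there accidentally beats the infimum.

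The main obstacle I anticipate is the simultaneous bookkeeping in steps two and three: I need one configuration to appear \emph{exactly at} a threshold (a closed half-line in $q$), another to be an NE on a half-open region whose boundary is that same threshold, and the leader-cost landscape to be monotone on the good region and discontinuously higher at the threshold — all with a single consistent assignment of the rational cost values $c_{i,f}(1),c_{i,f}(2),c_{i,f}(3)$ and $c_{i,\ell}(1),c_{i,\ell}(2),c_{i,\ell}(3)$. Getting the NE threshold to be a \emph{non-strict} inequality that is satisfied at $q^*$ but the good NE to \emph{disappear} strictly before or at $q^*$ is the delicate point; it may be cleanest to make the bad configuration an NE for all $q \ge q^*$ and the good configuration an NE for all $q \le q^*$ as well, so that both coexist only at $q^*$, and then let the leader's good-region cost strictly decrease on $[0,q^*)$ while the worst-case at $q^*$ jumps up. If the arithmetic resists a clean example with two followers, a fallback is to use three followers or three resources to buy an extra degree of freedom, but I expect two of each to suffice.
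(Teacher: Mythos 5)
Your mechanism is exactly the one the paper uses: the pessimistic value function $q \mapsto \max_{a\in E^{\sigma_\ell}} c_\ell^{(\sigma_\ell,a)}$ is strictly decreasing on one side of a critical mixture $q^*$, jumps up at $q^*$ because an adversarial NE becomes available there, and so the infimum is approached but never attained. That is the right idea, and the discussion you give of why this kills the minimum is sound. However, as written this is a construction \emph{plan}, not a construction: the proposition is an existence claim whose proof must be a concrete instance, and you never pin down the cost values. You explicitly flag the delicate bookkeeping in steps two and three and even hedge with a fallback in case ``the arithmetic resists,'' which means the proof is not actually finished --- the burden of exhibiting numbers that simultaneously satisfy (i)--(iii) is the entire content of the proof, and it remains undischarged.

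The good news is that the arithmetic is easier than you anticipate, and your two-follower skeleton is more machinery than needed. The paper uses a \emph{single} follower and two resources: $c_{r_1,f}(1)=c_{r_2,f}(1)=1$, $c_{r_1,f}(2)=c_{r_2,f}(2)=2$, $c_{r_1,\ell}(1)=c_{r_2,\ell}(1)=c_{r_2,\ell}(2)=2$, and $c_{r_1,\ell}(2)=0$. With one follower there is no need to engineer a ``bad configuration'' whose NE region is a closed half-line meeting a ``good'' one at the boundary; the follower's indifference point $q^*=\tfrac12$ plays that role automatically, since pessimistic tie-breaking at $q=\tfrac12$ sends her to $r_2$, where the leader's cost is $2$, while for $q<\tfrac12$ she is forced onto $r_1$ and the leader's cost is $2-2q$, with infimum $1$ as $q\uparrow\tfrac12$ and value $2$ for all $q\ge\tfrac12$. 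If you want to salvage your own version, you must still produce and verify the six rational numbers $c_{i,f}(1),c_{i,f}(2),c_{i,f}(3)$ per resource plus the leader's costs and check all three configurations' NE conditions across the whole range of $q$, including the endpoints you mention --- that verification is what turns your outline into a proof.
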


\begin{proof}
	Consider the following instance of an SSSCGs, where $|F|=1$ and $R=\{r_1, r_2\}$.
	\begin{center}
		\renewcommand{\arraystretch}{1}\setlength{\tabcolsep}{3pt}
		\begin{tabular}{c|cc|cc}
			\hline
			$x$ & $c_{r_1,\ell}$ & $c_{r_1,f}$  & $c_{r_2,\ell}$ & $c_{r_2,f}$ \\
			\cline{1-5}
			$ 1 $ & $2$ & $1$ & $2$ & $1$ \\
			$ 2 $ & $0$ & $2$ & $2$ & $2$ \\
		\end{tabular}
	\end{center}
	Clearly, the single follower selects $r_1$ if $\sigma_\ell(r_1) < \frac{1}{2}$, she chooses $r_2$ if $\sigma_\ell(r_1) > \frac{1}{2}$, and she is indifferent between $r_1$ and $r_2$ if $\sigma_\ell(r_1) = \frac{1}{2}$.
	Thus, the leader's cost is $2 - 2 \sigma_\ell(r_1)$ if $\sigma_\ell(r_1) < \frac{1}{2}$, while it is $2$ if $\sigma_\ell(r_1) \geq \frac{1}{2}$, since, given the pessimistic assumption, the follower selects $r_2$ rather than $r_1$ when $\sigma_\ell(r_1) = \frac{1}{2}$.
	As a result, the problem in Definition~\ref{def:pes} achieves an infimum with value $1$ at $\sigma_\ell(r_1) = \frac{1}{2}$, but it does not admit a minimum.
	As a consequence, the game does not admit a PSE.
\end{proof}

\section{SSCGs NP-Hardness and Inapproximability}\label{sec:complexity_different_actions}

Let us start our analysis with a negative result, showing that the problem of computing an O/PSE in SSCGs with different action spaces is computationally intractable even if the leader can only select a single resource and all the costs are monotonic functions of the resource congestion. This also shows that computing an equilibrium in an SCG which either maximizes or minimizes the usage of a resource (or the cost incurred by a player) is hard, which may be of independent interest.

First, we prove that finding an OSE is not in \PolyAPX~unless \Poly $=$ \NP, using a reduction from 3SAT.
Then, we show that the same intractability result holds for computing a PSE in SSCGs by means of a different reduction still based on 3SAT.

\subsection{Computational Complexity of Finding an OSE in SSCGs}\label{sub_sec:reduction_different_actions_opt}

First, we analyze the problem of computing an OSE in SSCGs in the general case with different action spaces.
The hardness and inapproximability results that we present are based on a reduction from 3SAT (see~\cite{garey1979computers} for its \NP-completeness), which is defined as follows:
\begin{definition}[3SAT]
	Given a finite set $C$ of 3-literal clauses defined over a finite set $V$ of Boolean variables, is there a truth assignment to the variables which satisfies all the clauses?
\end{definition}

In the following, let $l \in \phi$ denote a literal (i.e., a variable or its negation) appearing in clause $\phi \in C$ and $v(l) \in V$ denote the variable corresponding to that literal.
Moreover, given a 3SAT instance $(C,V)$, let $m$ and $s$ be, respectively, the number of clauses and variables, i.e., $m := |C|$ and $s := |V|$.

We introduce our reduction in the proof of the following theorem.
\begin{theorem}\label{thm:np_hard_opt}
	Computing an OSE in SSCGs with different action spaces is \NPHard.
\end{theorem}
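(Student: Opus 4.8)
The plan is to reduce from 3SAT, building an SSCG whose leader has a single available action so that the leader's commitment is essentially vacuous, and where the followers' unique reachable equilibrium encodes a truth assignment only when the 3SAT instance is satisfiable. Given a 3SAT instance $(C,V)$ with $m$ clauses and $s$ variables, I would introduce, for each variable $v \in V$, a pair of resources (a ``true'' resource and a ``false'' resource) together with a single follower $p_v$ whose action space is exactly those two resources; this follower's monotonic cost functions are rigged so that in any followers' NE, $p_v$ occupies exactly one of the two, thereby selecting a literal. For each clause $\phi \in C$ I would add a dedicated ``clause follower'' whose action space consists of the three literal-resources corresponding to the literals of $\phi$ (plus possibly one private ``penalty'' resource); the costs are chosen so that this follower can be accommodated cheaply only on a literal-resource that has been left unused by the corresponding variable follower --- i.e., on a literal set to true. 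The leader, who has a single action (a private resource disjoint from everything), incurs a small cost if all clause followers avoid their penalty resources and a large (polynomially or super-polynomially large) cost otherwise, by having the leader's single resource share its congestion count with, or have its cost depend on, an indicator gadget that fires when some penalty resource is used.

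The key steps, in order, would be: (i) formally describe the resource set $R$, the player set $N = F \cup \{\ell\}$, the action spaces $A_p$, and the monotonic cost functions $c_\ell, c_f$, verifying explicitly that every cost function is weakly (indeed strictly) monotonic and that $c_{i,\cdot}(0)=0$; (ii) argue the ``completeness'' direction: from a satisfying assignment, construct a followers' action profile, check it is an NE, and show the leader's cost in the corresponding OSE is the small value (call it the ``yes'' value); (iii) argue the ``soundness'' direction: show that every followers' NE, and in particular the optimistic one, forces each variable follower onto a single literal-resource (so the induced assignment is well-defined) and forces some clause follower onto a penalty resource whenever the assignment fails to satisfy all clauses, yielding the large ``no'' value for the leader; (iv) scale the ``no'' value to be an arbitrary polynomial (say $r^{k}$ where $r=|R|$) times the ``yes'' value, so that any polynomial-factor approximation algorithm would distinguish the two cases and hence decide 3SAT, giving the $\notin \PolyAPX$ strengthening; and (v) note that, since the leader has only one action, her commitment is trivially pure, so optimistic and pessimistic tie-breaking coincide here and the construction also handles the pure-commitment variant.

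The main obstacle I anticipate is the soundness direction, specifically the interaction between the clause gadgets and the equilibrium condition. Because NEs in singleton congestion games are only \emph{locally} stable, I have to ensure that a clause follower genuinely cannot ``hide'' on an occupied literal-resource for free: the monotonic costs must make the second unit of congestion on a literal-resource strictly worse than the penalty resource, while the first unit must be cheap enough that the variable follower stays put. Balancing these inequalities simultaneously across all three literals of every clause, while keeping all cost functions monotonic, is the delicate part; a standard trick is to make the variable follower's attachment to its chosen resource very ``sticky'' (large cost jump elsewhere) and to give each clause follower a penalty cost strictly between $c_{i,f}(1)$ and $c_{i,f}(2)$ for its literal-resources. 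A secondary subtlety is guaranteeing that the ``yes''-case profile really is an NE (no variable follower wants to move onto a now-less-congested resource, no clause follower wants to crowd onto a satisfied literal), which again reduces to getting the numeric thresholds right. Once these inequalities are pinned down, the reduction is polynomial in $(m,s)$ and the argument closes.
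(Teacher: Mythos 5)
Your high-level architecture matches the paper's: a reduction from 3SAT in which the leader has a single resource $r_t$, that resource also lies in the action spaces of a few ``sentinel'' followers, and the leader's cost jumps from a tiny value $\epsilon$ to a large one exactly when some sentinel is forced onto $r_t$; variable gadgets select literals, and clause gadgets are cheaply accommodated only on true literals. The gap is in the clause gadget you sketch --- one follower per clause whose action space is the three \emph{shared} literal-resources plus a penalty resource --- and it cannot be closed by ``getting the numeric thresholds right.'' A literal-resource only sees its congestion count, not who occupies it. Soundness requires a clause follower to strictly prefer the penalty over joining a literal-resource occupied by its variable follower, i.e.\ $c(2) > \text{penalty}$; completeness requires several clause followers (a literal can occur in up to $m$ clauses) to all sit happily on the same true literal, i.e.\ $c(k) \le \text{penalty}$ for $k$ up to $m$. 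Since the variable follower contributes only one unit of congestion, the congestion ranges of the two cases overlap, and no single (monotonic or not) cost function separates them. This is exactly why the paper's construction is structurally heavier: it gives each clause its own private copy $r_{\phi,l}$ of every literal, inserts intermediary followers $p_{\phi,l}$ that relay whether $r_l$ has spare capacity, and pads $r_l$ with $m$ auxiliary followers and a cost tier at congestion $m{+}1$, so that the truth value is encoded as ``which population fills the resource to capacity'' rather than as a raw count. You would need an equivalent per-clause duplication for your soundness argument to go through.

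A second, smaller error is your step (v): a single leader action does not make optimistic and pessimistic tie-breaking coincide. The followers' game still has multiple NEs --- some placing a sentinel on $r_t$, some not --- and the selection rule determines which one is charged to the leader. The paper explicitly observes that its OSE reduction does not transfer to the pessimistic case and proves PSE-hardness via a separate reduction, so you should not claim the pessimistic variant for free.
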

\begin{proof}
	We provide a reduction from 3SAT showing that the existence of a polynomial-time algorithm for computing an OSE in SSCGs would allow us to solve any 3SAT instance in polynomial time.
	Specifically, given a 3SAT instance $(C,V)$ and a real number $0 < \epsilon < 4$, we build an instance $\Gamma_\epsilon(C,V)$ of an SSCG admitting an OSE in which the leader's cost is $\epsilon$ if and only if $(C,V)$ is satisfiable; if not, the leader's cost is $4$ in any OSE.
	%
	% In the following, $l \in \phi$ denotes a literal (i.e., a variable or its negation) appearing in $\phi \in C$, while $v(l)$ denotes the variable corresponding to that literal.
	%
	% Moreover, we let $|C|=m$ and $|V|=s$ be, respectively, the number of clauses and variables.
	
	\textbf{Mapping.}
	$\Gamma_\epsilon(C,V)$ is defined as follows:
	\begin{itemize}
		\item $N=F\cup \{\ell\}$, with $F=\{ p_\phi,p_{\phi,t} \mid \phi \in C \} \cup \{ p_{v} \mid v \in V\} \cup \{p_{v,k} ,p_{\bar v ,k} \mid v \in V, k \in \{ 1, \ldots,m \} \} \cup \{ p_{\phi,v},p_{\phi,\bar v} \mid \phi \in C, v \in V \}$;
		\item $R = \{ r_t \} \cup \{ r_\phi \mid \phi \in C \} \cup \{ r_{v}, r_{v,t}, r_{\bar v}, r_{\bar v,t} \mid v \in V \} \cup \{ r_{\phi,v},r_{\phi,\bar v} \mid \phi \in C, v\in V \}$;
		\item $A_{p_\phi} = \{ r_\phi \} \cup \{ r_{\phi,l} \mid l \in \phi \}, A_{p_{\phi,t}} = \{ r_\phi, r_t \} \ \ \forall\ \phi \in C$;
		\item $A_{p_{v, k}} = \{ r_{v,t},r_v \}, A_{p_{\bar v, k}} = \{ r_{\bar v,t}, r_{\bar v} \} \ \ \forall\ v \in V, k \in \{1,\ldots,m\}$;
		\item $A_{p_{v}} = \{r_t, r_{v,t}, r_{\bar v,t} \} \ \ \forall\ v \in V$;
		\item $A_{p_{\phi,v}} = \{ r_v,r_{\phi,v} \}, A_{p_{\phi,\bar v}} = \{ r_{\bar v}, r_{\phi,\bar v} \} \ \ \forall\ \phi \in C, v \in V$;
		\item $A_\ell = \{ r_t \}$.
	\end{itemize}
	The cost functions take values according to the following table, and satisfy $c_{r_{\bar v},f} = c_{r_{v},f}$, $c_{r_{\phi,\bar v},f} = c_{r_{\phi,v},f}$, $c_{r_{\bar v,t},f} = c_{r_{v,t},f}$, and $c_{r_{t},f} = c_{r_{t},\ell}$ (let us remark that they are all monotonic functions of the resource congestion):
	
        \bigskip
	% Moreover, cost functions are specified by the following table, where $c_{r_{\bar v},f} = c_{r_{v},f}$, $c_{r_{\phi,\bar v},f} = c_{r_{\phi,v},f}$, $c_{r_{\bar v,t},f} = c_{r_{v,t},f}$ and $c_{r_{t},f} = c_{r_{t},\ell}$:
	%
	%
        %Table~\ref{tab:costs_np_hard_opt}.
	%% \begin{table}[h]
	%% 	\centering
	%% 	{\renewcommand{\arraystretch}{1}\begin{tabular}{c|ccccc}
	%% 			\hline
	%% 			$x$ & $c_{r_\phi,f}$ & $c_{r_{v},f}$  & $c_{r_{v,t},f}$ & $c_{r_{\phi,v},f}$ &  $c_{r_{t},f}$ \\
	%% 			\cline{1-6}
	%% 			$1$ & $2$ & $0$ & $0$ & $1$ & $\epsilon$ \\
	%% 			$ [2,m] $ & $5$ & $0$ & $6$ & $6$ & $4$ \\
	%% 			$ m+1 $ & $5$ & $7$ & $6$ & $6$  & $4$ \\
	%% 		\end{tabular}}
	%% 		\caption{Cost functions of games $\Gamma_\epsilon(C,V)$ used in the reduction in the proof of Theorem~\ref{thm:np_hard_opt}. The other cost functions are $c_{r_{\bar v},f} = c_{r_{v},f}$, $c_{r_{\phi,\bar v},f} = c_{r_{\phi,v},f}$, $c_{r_{\bar v,t},f} = c_{r_{v,t},f}$ and $c_{r_{t},f} = c_{r_{t},\ell}$.}
	%% 		\label{tab:costs_np_hard_opt}
	%% \end{table}

 	\begin{center}
	    {\renewcommand{\arraystretch}{1}\begin{tabular}{c|ccccc}
		\hline
		$x$ & $c_{r_\phi,f}$ & $c_{r_{v},f}$  & $c_{r_{v,t},f}$ & $c_{r_{\phi,v},f}$ &  $c_{r_{t},f}$ \\
		\cline{1-6}
		$1$ & $2$ & $0$ & $0$ & $1$ & $\epsilon$ \\
		$ [2,m] $ & $5$ & $0$ & $6$ & $6$ & $4$ \\
		$ [m+1,\infty] $ & $5$ & $7$ & $6$ & $6$  & $4$ \\
	    \end{tabular}}
        \end{center}
	
        \bigskip

	\noindent Figure~\ref{fig:reduction_opt} shows an example of the game instance $\Gamma_\epsilon(C,V)$.
	
	Given a 3SAT instance $(C,V)$, $\Gamma_\epsilon(C,V)$ can be constructed in polynomial time, as it features $n = 2m + s + 4 m s+ 1$ players and $r = m + 4 s + 2 m s + 1$ resources.
	Since, in $\Gamma_\epsilon(C,V)$, the leader can only select a single resource, $r_t$, the only leader's commitment is $\sigma_\ell(r_t) = 1$.
	As a result, the leader's cost is $\epsilon$ if and only if no follower selects resource $r_t$; otherwise, it is $4$.
	
	\begin{figure*}[!htp]
		\centering
		\includegraphics[width=\textwidth]{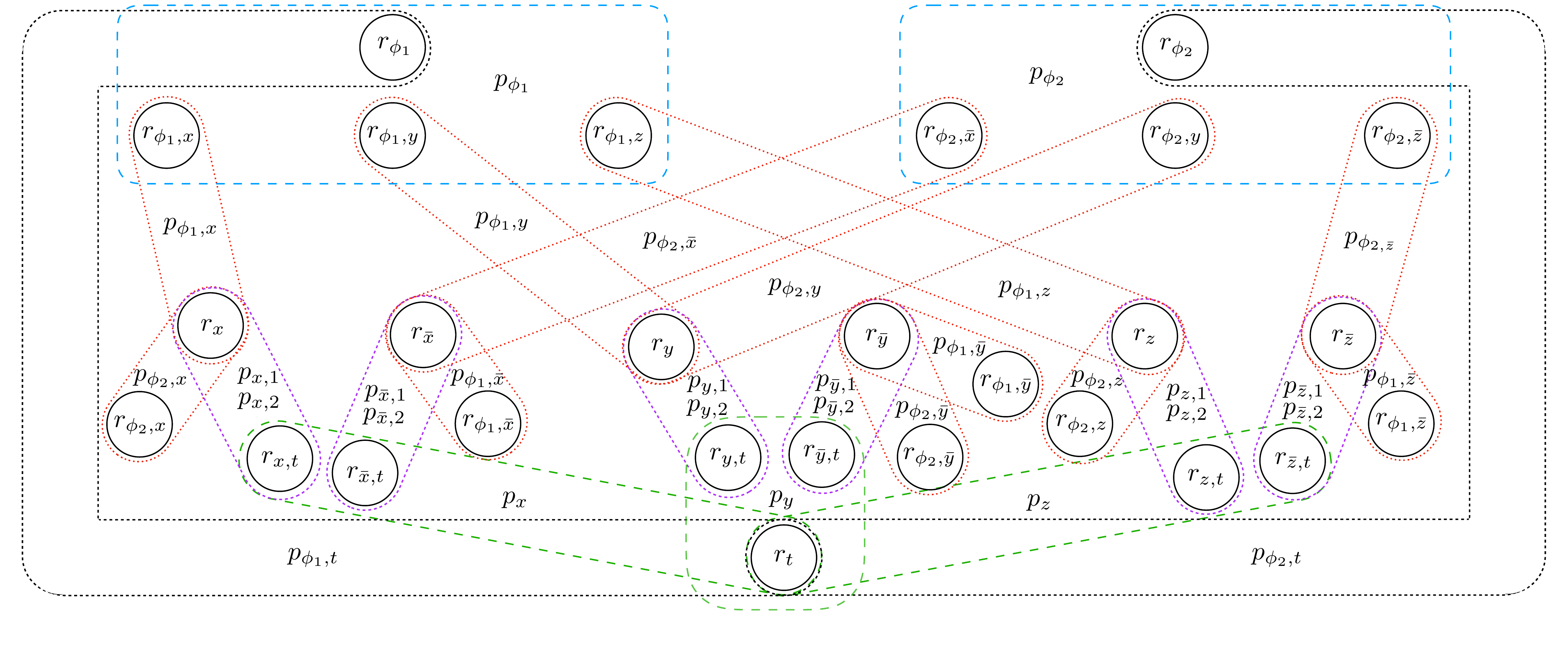}
		\caption{Example of a game instance $\Gamma_\epsilon(C,V)$ used in the reduction in the proof of Theorem~\ref{thm:np_hard_opt}, with $V=\{x,y,z\}$, $C = \{\phi_1,\phi_2\}$, $\phi_1 = x \vee y \vee z$, and $\phi_2 = \bar x \vee y \vee \bar z$.}
		\label{fig:reduction_opt}
	\end{figure*}

	\textbf{If.} 
	Assume that $(C,V)$ is satisfiable, and let $\tau : V \rightarrow \{\mathsf T,\mathsf F\}$ be a truth assignment satisfying all the clauses in $C$. 
	Using $\tau$, we show how to recover a followers' action profile $a = (a_p)_{p \in F} \in \bigtimes_{p \in F} A_p$ such that $a \in E^{\sigma_\ell}$, with $\sigma = (\sigma_\ell,a)$ providing the leader with a cost of $\epsilon$.
	Note that, since $\epsilon$ is the minimum cost the leader can achieve and the followers behave optimistically, $\sigma$ is an OSE.
	In particular, let $a_{p_{\phi,t}} = r_\phi$, for all $\phi \in C$.
	Moreover, if $\tau(v)=\mathsf T$, let $a_{p_v} = r_{\bar v,t}$ and $a_{p_{\phi,v}} = r_{v}$, $a_{p_{\phi,\bar v}} = r_{\phi, \bar v}$ for all $\phi \in C$, while, for all $k \in \{1,\ldots,m\}$, let $a_{p_{v,k}} = r_{v,t}$ and $a_{p_{\bar v,k}} = r_{\bar v}$.
	Instead, if $\tau(v)=\mathsf F$, let $a_{p_{\bar v}} = r_{v,t}$ and $a_{p_{\phi,\bar v}} = r_{\bar v}$, $a_{p_{\phi,v}} = r_{\phi,v}$ for all $\phi \in C$, while, for all $k \in \{1,\ldots,m\}$, let $a_{p_{\bar v,k}} = r_{\bar v,t}$ and $a_{p_{v,k}} = r_{v}$.
	Notice that, since either $\tau(v) = \mathsf T$ or $\tau(v) = \mathsf F$, two cases are possible. 
	If $\tau(v) = \mathsf T$, we have $\nu_{r_v}^a = m$ (followers $p_{\phi,v}$), $\nu_{r_{\bar v}}^a = m$ (followers $p_{\bar v,k}$), $\nu_{r_{v,t}}^a = m$ (followers $p_{v,k}$), and $\nu_{r_{\bar v,t}}^a = 1$ (follower $p_{v}$).
	If $\tau(v) = \mathsf F$, we have $\nu_{r_{\bar v}}^a = m$ (followers $p_{\phi,\bar v}$), $\nu_{r_{v}}^a = m$ (followers $p_{v,k}$), $\nu_{r_{\bar v,t}}^a = m$ (followers $p_{\bar v,k}$), and $\nu_{r_{v,t}}^a = 1$ (follower $p_{v}$).
	%
	% or $r_{v}$ is selected by all the $m$ followers $p_{\phi,v}$  and $r_{\bar v}$ by none of the followers $p_{\phi,\bar v}$, or conversely.
	%some followers, and, in that case, there are exactly $m+1$ followers who choose it.
	%some followers, and, in that case, there are exactly $m+1$ followers who choose it.
	%
	Assume, w.l.o.g., $\tau(v) = \mathsf T$, as the other case is analogous.
	%
	% Say, w.l.o.g., $a_{p_{\phi,v}}$=$r_{v}$ forall $\phi \in C$, as the other case is analogous.
	%
	First, no follower $p_{ \phi,v }$ would deviate from $r_v$ to $r_{\phi,v}$, as, otherwise, she would incur a cost of at least $1$, rather than $0$. The same holds for followers $p_{\phi,\bar v}$, as their cost is at most $6$ while, if any of them switched to $r_{\bar v}$, she would incur a cost of $7$.
	Similarly, followers $p_{v,k}$ would not deviate from $r_{v,t}$ (as $6  < 7$) and followers $p_{\bar v,k}$ would not deviate from $r_{\bar v}$  (as $0 < 6$).
	Since $\nu_{r_{\bar v,t}}^a = 1$, follower $p_{v}$ would not deviate from $r_{\bar v,t}, $ (as $0 < 6$ and $0 < 4$ ).
	Furthermore, since $\tau$ is a truth assignment satisfying $(C,V)$, at least one literal $l \in \phi$ evaluates to true under $\tau$ for every $\phi \in C$.
	Let $a_{p_\phi} = r_{\phi,l}$ for every $\phi \in C$.
	Since $l$ evaluates to true, it must be $a_{p_{\phi,l}} = r_{l}$, thus $p_\phi$ is the only follower who selects $r_{\phi,l}$.
	As a result, $p_\phi$ incurs a cost of $1$, and she has no incentive to deviate.
	Finally, $p_{\phi,t}$ does not deviate from $r_{\phi}$ to $r_t$ as $2 < 4$.
	Thus, we can conclude that $a$ is an NE and that, since no follower chose $r_t$, the leader's cost is $\epsilon$.

	\textbf{Only if.}
	Suppose there exists an OSE $\sigma = (\sigma_\ell,a)$ in which the leader's cost is $\epsilon$.
	We show that, in polynomial time, one can recover a truth assignment $\tau$ that satisfies all the clauses in $C$ from $a = (a_p)_{p \in F} \in \bigtimes_{p \in F} A_p$.
	First, let us note that no follower selects $r_t$ in $a$ as, otherwise, the leader's cost would be $4 > \epsilon$.
	As a consequence, all followers $p_{\phi,t}$ and $p_{v}$ must select one of the other resources available to them, i.e, $a_{p_{\phi,t}} = r_\phi$ and $a_{p_{v}} \in \{r_{v,t}$,$r_{\bar v,t}\}$.
	Moreover, there cannot be two followers using resource $r_\phi$ for every $\phi \in C$ as, otherwise, $p_{\phi,t}$ would have an incentive to deviate from $r_\phi$ to $r_t$ (as $5 > 4$).
	Thus, $a_{p_\phi} \neq r_\phi$, and, for all $\phi \in C$, there must be a literal $l \in \phi$ such that $a_{p_\phi} = r_{\phi,l}$.
	In addition, there cannot be two followers selecting $r_{\phi,l}$ as, otherwise, $p_\phi$ would have an incentive to deviate to $r_\phi$ (as $5 < 6$).
	Thus, it must be the case that $a_{p_{\phi, l}} = r_l$.
	This implies that $\nu_{r_l}^a \le m$ as, otherwise, the cost of $p_{\phi, l}$ would be $7 > 6$, and that follower would change resource, switching to $r_{\phi, l}$.
	Thus, at least one of the followers $p_{l,k}$ must select  $r_{l,t}$ as, otherwise, $\nu_{r_l}^a > m$.
	As a consequence, if $l$ is positive and $v(l) = v$, $p_v$ selects $r_{\bar v,t}$ as, if she selected $r_{v,t}$, she would have an incentive to deviate (as $6 > 4$).
	Moreover, no other follower would select $r_{\bar v,t}$ as, otherwise, $p_{v}$ would deviate to $r_t$ (as $6 > 4$).
	This implies that $\nu_{r_{\bar v,t}}^a =1$ (follower $p_{v}$) and all the followers $p_{\bar v,k} $ select resource $r_{\bar v}$, while the followers $p_{\phi,\bar v}$ choose resources $r_{\phi,\bar v}$.
	On the other hand, if $l$ is negative and $v(l) = v$, similar arguments allow us to conclude that $\nu_{r_{v,t}}^a =1$ (follower $p_{v}$) and all the followers $p_{v,k} $ select resource $r_{v}$, while the followers $p_{\phi,v}$ choose resources $r_{\phi,v}$.
	As a result, either $\nu_{r_{v,t}}^a =1$ or $\nu_{r_{\bar v,t}}^a =1$.
	%
	% This is a consequence of the following: $r_v$ (respectively, $r_{\bar v}$) is used by some followers if and only if $\nu_{r_v} = m+1$ (respectively, $\nu_{r_{\bar v}}$).
	%
	In conclusion, we can define a truth assignment $\tau$ such that $\tau(v)=\mathsf T$ if $a_{p_v}=r_{\bar v,t}$ and $\tau(v)= \mathsf F$ if $a_{p_v}=r_{v,t}$.
	Clearly, $\tau$ is well-defined. 
	Moreover, as previously shown, for every $\phi \in C$ there exists a literal $l \in \phi$ such that $a_{p_{\phi, l}} = r_{l}$, which, letting $v=v(l)$, implies that $\nu_{r_{\bar v,t}}^a =1$. Thus, $\tau(v(l))=\mathsf T$ if $l$ is positive, while $\nu_{r_{ v,t}}^a =1$ and $\tau(v(l))=\mathsf F$ if $l$ is negative.
	Hence, $\tau$ satisfies all the clauses.
\end{proof}

The proof of Theorem~\ref{thm:np_hard_opt} also shows the following: 
\begin{corollary}\label{cor:ne_opt_np_hard}
	In general SCGs without leadership and different action spaces, computing an NE minimizing the cost of a given player is $\mathsf{NP}$-hard.
\end{corollary}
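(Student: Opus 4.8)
The plan is to notice that the reduction establishing Theorem~\ref{thm:np_hard_opt} already proves the corollary with essentially no extra work, because in the instance $\Gamma_\epsilon(C,V)$ the leader exercises no commitment power at all: since $A_\ell=\{r_t\}$, she is forced to play $\sigma_\ell(r_t)=1$, so she behaves exactly like an ordinary player with a single available action. First I would reinterpret $\Gamma_\epsilon(C,V)$ as an SCG \emph{without} leadership, call it $\Gamma'_\epsilon(C,V)$: it has the same player set $N$, the same resources $R$, and the same action spaces, but now every player---including $\ell$---is an ordinary player, with $\ell$ constrained to the singleton action set $\{r_t\}$. Because the construction imposes $c_{r_t,f}=c_{r_t,\ell}$ and $\ell$ is the only player that can ever be assigned a leader-specific cost, a single resource-specific cost function (rather than player-specific costs) suffices to describe $\Gamma'_\epsilon(C,V)$, which is therefore a genuine SCG with different action spaces, built in polynomial time just as $\Gamma_\epsilon(C,V)$ is.

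Next I would verify that the Nash equilibria of $\Gamma'_\epsilon(C,V)$ coincide exactly with the followers' equilibria of $\Gamma_\epsilon(C,V)$ under the forced commitment. In $\Gamma'_\epsilon(C,V)$, player $\ell$ has no deviation available; for every other player $p$, since $\ell$ permanently occupies $r_t$, the congestion $p$ faces on $r_t$ is incremented by one and the congestion on every other resource is unchanged, so $p$'s no-deviation condition is precisely $c_{a_p,f}^{\sigma_\ell}(\nu_{a_p}^a)\le c_{a_p',f}^{\sigma_\ell}(\nu_{a_p'}^a+1)$ with $\sigma_\ell(r_t)=1$, i.e.\ the defining condition for $a\in E^{\sigma_\ell}$. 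Hence the NEs of $\Gamma'_\epsilon(C,V)$ are exactly the profiles $(\sigma_\ell,a)$ with $a\in E^{\sigma_\ell}$ and $\sigma_\ell$ the forced commitment, and the cost of player $\ell$ in such an NE equals $c_\ell^{(\sigma_\ell,a)}=c_{r_t,\ell}(\nu_{r_t}^a+1)$, which is $\epsilon$ when $\nu_{r_t}^a=0$ and $4$ otherwise.

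Finally I would invoke the two directions of the proof of Theorem~\ref{thm:np_hard_opt}: they show that $\Gamma_\epsilon(C,V)$ admits some $a\in E^{\sigma_\ell}$ with no follower on $r_t$---equivalently with player $\ell$'s cost equal to $\epsilon$---if and only if $(C,V)$ is satisfiable, and that otherwise every $a\in E^{\sigma_\ell}$ gives player $\ell$ a cost of $4$. Therefore the minimum cost of player $\ell$ over the NEs of $\Gamma'_\epsilon(C,V)$ equals $\epsilon$ if $(C,V)$ is satisfiable and $4$ if not, so a polynomial-time algorithm computing an NE minimizing the cost of a designated player would decide 3SAT in polynomial time, establishing \NPHard{}ness; since $\epsilon\in(0,4)$ is arbitrary, the same argument even precludes any polynomial-factor approximation unless $\mathsf{P}=\mathsf{NP}$. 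The only delicate point---and the one I would be most careful about, although it is minor---is the cost-function bookkeeping: checking that demoting the leader to an ordinary player does not smuggle in player-specific costs (which is exactly what $c_{r_t,f}=c_{r_t,\ell}$ rules out) and that the NE condition of the plain game matches the $E^{\sigma_\ell}$ condition under the forced pure commitment.
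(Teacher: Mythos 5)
Your proposal is correct and follows essentially the same route as the paper: observe that in $\Gamma_\epsilon(C,V)$ the leader has a single action and hence no commitment power, so an OSE is just an NE of the underlying SCG, and the yes/no gap from Theorem~\ref{thm:np_hard_opt} transfers directly. Your write-up is simply a more careful, fully spelled-out version of the paper's two-line argument (including the check that $c_{r_t,f}=c_{r_t,\ell}$ lets $\ell$ be treated as an ordinary player), with no substantive difference in approach.
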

\begin{proof}
	The result is easily proved by noticing that, in the $\Gamma_\epsilon(C,V)$ games defined in the proof of Theorem~\ref{thm:np_hard_opt}, since the leader can only use a single resource any OSE is also an NE.
	Thus, given that the followers behave optimistically, such games admit an optimal NE with leader's cost $\epsilon$ if and only if the corresponding 3SAT instance is satisfiable.
%otherwise her cost in any NE is $4 > \epsilon$.
\end{proof}

%\tinytodo[inline]{Forse toglierei la proof e la farei in una riga nel testo. E' ovvia rispetto a quel che la segue/precede.}

Theorem~\ref{thm:np_hard_opt} also implies that the leader's cost in an OSE cannot be efficiently approximated up to within any factor which depends polynomially on the size of the input:

\begin{corollary}\label{cor:non_poly_apx_opt}
	The problem of computing an  OSE in SSCGs with different action spaces is not in Poly-$\mathsf{APX}$ unless $\mathsf{P} = \mathsf{NP}$.
\end{corollary}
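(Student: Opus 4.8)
The plan is to leverage the reduction $\Gamma_\epsilon(C,V)$ constructed in the proof of Theorem~\ref{thm:np_hard_opt} essentially unchanged. Recall that, for any fixed $0 < \epsilon < 4$, the instance $\Gamma_\epsilon(C,V)$ admits an OSE with leader's cost exactly $\epsilon$ if $(C,V)$ is satisfiable, and every OSE has leader's cost exactly $4$ otherwise. The key observation is that the construction is parametric in $\epsilon$ and the size of $\Gamma_\epsilon(C,V)$ does not depend on $\epsilon$ at all: it always has $n = 2m + s + 4ms + 1$ players and $r = m + 4s + 2ms + 1$ resources (the only thing $\epsilon$ affects is a single numeric entry in the cost table). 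So the idea is to choose $\epsilon$ small enough that the ratio $4/\epsilon$ exceeds any prescribed polynomial bound in the input size, and then argue that a Poly-$\mathsf{APX}$ algorithm would have to distinguish the two cases.

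Concretely, suppose for contradiction that there is a polynomial-time algorithm that approximates the leader's cost in an OSE within a factor $q(|I|)$, where $q$ is some fixed polynomial and $|I|$ is the size of the instance $I$. Given a 3SAT instance $(C,V)$, first I would compute the size $|I|$ of the SSCG $\Gamma_\epsilon(C,V)$ — which, as noted, is independent of $\epsilon$ and polynomially bounded in $m,s$ — and then set $\epsilon := 4 / (q(|I|) + 1)$ (or any value strictly smaller than $4/q(|I|)$ and still positive, so that the hypothesis $0 < \epsilon < 4$ is met). Note $\epsilon$ has polynomial bit-size since $q(|I|)$ is polynomially bounded. Now run the approximation algorithm on $\Gamma_\epsilon(C,V)$, obtaining a value $\widetilde{c}$. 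If $(C,V)$ is satisfiable, the optimum is $\epsilon$, so $\widetilde{c} \le q(|I|)\,\epsilon < 4$; if $(C,V)$ is unsatisfiable, the optimum is $4$, so $\widetilde{c} \ge 4$. Hence testing whether $\widetilde{c} < 4$ decides 3SAT in polynomial time, forcing $\mathsf{P} = \mathsf{NP}$.

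The main thing to get right — though it is routine — is the bookkeeping that $\epsilon$ can be chosen with a polynomial number of bits and that the whole pipeline (build $\Gamma_\epsilon(C,V)$, run the approximation algorithm, compare to $4$) runs in polynomial time; this hinges on the fact, already established in Theorem~\ref{thm:np_hard_opt}, that $\Gamma_\epsilon(C,V)$ is constructible in polynomial time with a representation size that does not grow with $1/\epsilon$. I would also remark that the same argument, applied to the reduction used for PSEs (mentioned in the text as a separate 3SAT-based construction), yields the analogous Poly-$\mathsf{APX}$ inapproximability for computing a PSE, and, via Corollary~\ref{cor:ne_opt_np_hard}, for computing an NE minimizing or maximizing a given player's cost in leaderless SCGs with different action spaces. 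I do not anticipate a genuine obstacle here; the content is entirely in Theorem~\ref{thm:np_hard_opt}, and this corollary is the standard ``gap amplification for free'' consequence of a reduction whose yes/no instances are separated by a ratio one can blow up at no cost in instance size.
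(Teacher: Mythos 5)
Your proposal is correct and follows essentially the same argument as the paper: both exploit that the size of $\Gamma_\epsilon(C,V)$ is independent of $\epsilon$, choose $\epsilon$ small enough that a polynomial approximation factor cannot bridge the gap to $4$, and then use the approximation algorithm to decide 3SAT. The only cosmetic difference is that the paper fixes $\epsilon = 4/2^{n+r}$ once and for all (beating every polynomial for large enough instances), whereas you tailor $\epsilon$ to the given polynomial $q$; both choices are valid.
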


\begin{proof}
	Given a 3SAT instance $(C,V)$, let us build an SSCG $\Gamma_\epsilon(C,V)$ as in the proof of Theorem~\ref{thm:np_hard_opt}.
	We have already proven that $\Gamma_\epsilon(C,V)$ admits an OSE in which the leader's cost is $\epsilon$ if and only if $(C,V)$ is satisfiable and that, otherwise, the leader's cost is $4$.
	Let $\epsilon = \frac{4}{2^{n+r}}$.
	Assume that there exists a polynomial-time approximation algorithm $\mathcal{A}$ with approximation factor $\text{poly}(n,r)$, i.e., a polynomial function of $n$ and $r$.
	%\todo{Potremmo fare $\text{poly}(n+r)$, dato che $n+r$ e' la dimensione dell'istanza, ma anche con $\text{poly}(n+r)$ il conto e' giusto (mancava un 4 e un at most al posto at least)}
	%
	%Then, $\mathcal{A}$ applied to $\Gamma_\epsilon(C,V)$ would return a solution in which the leader's cost is at
	%least $\frac{1}{2^{n+r}} \ \text{poly}(n,r) < 4$ (for $n$ and $r$ large enough) if and only if $(C,V)$ is satisfiable. 
	%most $\frac{4}{2^{n+r}} \ \text{poly}(n,r) < 4$ (for $n$ and $r$ large enough) if and only if $(C,V)$ is satisfiable. 
	%
	%Otherwise, $\mathcal{A}$ would return a solution where the leader's cost is $4$.
	%
	%This would allows us to solve the 3SAT instance $(C,V)$ in polynomial time, a contradiction, unless $\mathsf{P} = \mathsf{NP}$.
	%
	Assume $(C,V)$ is satisfiable. $\mathcal{A}$ applied to $\Gamma_\epsilon(C,V)$ would return a solution with leader's cost at most $\frac{4}{2^{n+r}} \ \text{poly}(n,r)$. Since, for $n$ and $r$ large enough, $\frac{4}{2^{n+r}} \ \text{poly}(n,r) < 4$, $\mathcal{A}$ would allows us to decide in polynomial time whether $(C,V)$ is satisfiable, a contradiction unless $\mathsf{P} = \mathsf{NP}$.
\end{proof}

Since the followers break ties in favor of the leader in the reduction, the results in Theorem~\ref{thm:np_hard_opt} and Corollary~\ref{cor:non_poly_apx_opt} do not apply to the problem of finding a PSE.
We consider this case in the next subsection.

\subsection{Computational Complexity of Finding a PSE in SSCGs}\label{sub_sec:reduction_different_actions_pes}

The hardness and inapproximability results that we are about to present for the problem of computing a PSE in SSSCGs are still based on 3SAT, but rely on a different reduction.

%Let us focus on the problem of computing a PSE.
%
%The results that we are about to show are based on a reduction from 3SAT.
%% the $\mathsf{CoNP}$-complete problem Co3SAT~\cite{}\todo{cite}, which is the complement problem of 3SAT:
%% %
%% \begin{definition}[Co3SAT]
%% 	Given a finite set $C$ of 3-literal clauses defined over a finite set $V$ of variables, is it true that there is no truth assignment to the variables which satisfies all clauses?
%% \end{definition}

%% Our reduction shows that the existence of a polynomial-time algorithm for the problem of computing a PSE would allow us to solve any Co3SAT instance in polynomial time, and, thus, the problem is \NPHard.

%% \tinytodo[inline]{Perche' dobbiam passare da co3SAT? La riduzione dice: se le clausole NON sono soddisfacibili, il PSE ha valore $\epsilon$; se SONO soddisfacibili, ha valore 4. Basta quindi calcolare un PSE per risolvere SAT. E quindi trovare un PSE e' NP-hard e abbiamo finito.}

%The following theorem introduces the reduction. 
%
\begin{theorem}\label{thm:np_hard_pes}
  Computing a PSE in general SSCGs with different action spaces is \NPHard.
	\end{theorem}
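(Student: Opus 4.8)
The plan is to reduce from 3SAT, as in Theorem~\ref{thm:np_hard_opt}, but with a construction tailored to the pessimistic setting. Given a 3SAT instance $(C,V)$ and a real number $0 < \epsilon < 4$, I would build an SSCG $\Gamma'_\epsilon(C,V)$ featuring a distinguished ``leader resource'' $r_t$ whose leader cost equals $\epsilon$ at congestion $1$ and jumps to $4$ at congestion $\ge 2$ (with $c_{r_t,f} = c_{r_t,\ell}$), together with variable-consistency gadgets and clause gadgets played among the followers (who, as in the OSE reduction, have different action spaces), keeping all the cost functions weakly monotonic. The target equivalence is: $(C,V)$ is satisfiable $\iff$ there is a leader commitment under which \emph{every} follower NE keeps the congestion on $r_t$ equal to $1$ --- i.e., the value of the pessimistic problem (Definition~\ref{def:pes}) for $\Gamma'_\epsilon(C,V)$ is $\epsilon$; otherwise this value is $4$.

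The essential new difficulty relative to the OSE case is that we can no longer rely on the existence of \emph{one} favorable follower NE: under pessimism the favorable configuration has to be \emph{forced}. Indeed, in the $\Gamma_\epsilon(C,V)$ games of Theorem~\ref{thm:np_hard_opt} the leader has a single action, and one can check that, even when $(C,V)$ is satisfiable, a degenerate NE exists in which some follower $p_v$ is pushed onto $r_t$ (with the copies $p_{v,k},p_{\bar v,k}$ all piling onto $r_{v,t},r_{\bar v,t}$), so that reduction would yield pessimistic value $4$ regardless of satisfiability. To eliminate such NEs I would give the leader a genuine strategic choice, unlike in Theorem~\ref{thm:np_hard_opt}: her (possibly mixed) commitment encodes a candidate truth assignment by placing probability on the truth-indicator resources of the variables, thereby inflating their expected follower cost through $c^{\sigma_\ell}_{i,f}$ and destroying the degenerate equilibria, while a single clean NE survives that reads off the encoded assignment. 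The gadget values have to be chosen so that, when the commitment corresponds to a satisfying assignment, every follower's best response is \emph{strict}, so that no adversarial reshuffling of the followers can route an extra player onto $r_t$.

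With $\Gamma'_\epsilon(C,V)$ in place, the proof splits as usual. For the ``if'' direction I would take a satisfying assignment $\tau$, let the leader commit to the strategy encoding $\tau$, exhibit the intended follower action profile yielding leader cost $\epsilon$, and verify that it is the \emph{unique} NE of the resulting follower game (using the strict best-response inequalities above), so that the pessimistic value is $\epsilon$. For the ``only if'' direction I would assume the pessimistic value is $< 4$, hence that some commitment $\sigma_\ell$ has all of its follower NEs of cost $< 4$; I would first show that $\sigma_\ell$ must encode a single consistent assignment --- any ``spreading'' of probability across the indicator resources of a variable, or any inconsistency, admits a follower NE overflowing onto $r_t$ at congestion $\ge 2$ --- and then that the encoded assignment must satisfy every clause, since the gadget of a falsified clause admits an NE that overflows onto $r_t$; this recovers a satisfying assignment for $(C,V)$. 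Finally, setting $\epsilon = \tfrac{4}{2^{n+r}}$ and repeating the argument of Corollary~\ref{cor:non_poly_apx_opt} would give that the leader's PSE cost is not in \PolyAPX~unless \Poly$=$\NP; one also checks that the relevant NEs are robust to small perturbations of the leader's strategy, so that the pessimistic problem attains its minimum in both cases and ``computing a PSE'' is well defined (alternatively, the whole argument can be run with the infimum of the pessimistic objective in place of the minimum).

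The step I expect to be the main obstacle is the ``only if'' direction, and within it the need to rule out \emph{all} leader commitments that are neither clean encodings of a satisfying assignment nor immediate overflows: this forces a delicate ``threshold'' design of the consistency gadgets, so that any deviation from a clean encoding creates a profitable follower deviation which cascades onto $r_t$ in some equilibrium, all while keeping the cost functions weakly monotonic --- a requirement that, exactly as in Theorem~\ref{thm:np_hard_opt}, severely constrains the admissible gadget values.
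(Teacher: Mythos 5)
Your proposal correctly diagnoses why the optimistic construction of Theorem~\ref{thm:np_hard_opt} cannot be reused verbatim (it admits degenerate equilibria that push a follower onto $r_t$ even when the formula is satisfiable), but the fix you choose --- keeping the orientation ``satisfiable $\iff$ leader's cost is $\epsilon$'' and encoding a truth assignment in a genuinely mixed leader commitment --- leaves the entire substance of the proof undone. Everything hinges on the gadgets: you must exhibit concrete, weakly monotonic cost tables under which (i) the commitment encoding a satisfying assignment admits \emph{no} follower NE that overflows onto $r_t$, and (ii) \emph{every} commitment all of whose NEs avoid overflow decodes to a satisfying assignment. You explicitly defer exactly this (``the main obstacle \dots a delicate threshold design''), so what remains is a statement of intent rather than a proof. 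The approach also imports avoidable complications: once the leader's commitment is mixed and load-bearing, you must quantify over all mixed strategies in the ``only if'' direction (not just over clean encodings), and you must worry about whether the inner maximum in Definition~\ref{def:pes} is attained at all --- the paper's Proposition~1 shows a PSE can fail to exist precisely in such situations, and your parenthetical ``run the argument with the infimum'' is itself a nontrivial modification of the statement being proved.

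The paper avoids all of this with a single structural trick that your plan misses: it \emph{inverts} the reduction. The leader again has exactly one action ($A_\ell=\{r_t\}$), so a PSE trivially exists and equals the worst NE, and the game $\Gamma_\epsilon(C,V)$ is built so that the leader's cost is $4$ in some NE \emph{if and only if} $(C,V)$ is satisfiable (the cost on $r_t$ jumps to $4$ only when all $m+s$ followers $p_{\phi,t},p_{v,t}$ crowd onto it, and sustaining that crowding forces the remaining followers into a configuration that reads off a satisfying assignment). Under pessimism the followers select that bad NE whenever it exists, so the PSE value is $4$ exactly when the formula is satisfiable and $\epsilon$ otherwise. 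Both directions then require reasoning about a \emph{single} action profile --- exhibit one bad NE from a satisfying assignment, or extract an assignment from one bad NE --- which is the same one-equilibrium bookkeeping as in the optimistic proof; no universal quantification over commitments or over equilibria is ever needed. If you want to salvage your architecture you would have to actually produce the threshold gadgets and verify points (i) and (ii) above; as it stands, the key construction is missing.
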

\begin{proof}
	%	We provide a Turing-reduction from 3SAT showing that the existence of a polynomial-time algorithm for computing an OSE in SSCGs would allow us to determine the answer to a given 3SAT instance in polynomial time.
	%
	We provide a reduction from
%Co3SAT
3SAT
showing that the existence of a polynomial-time algorithm for computing a PSE in SSCGs would allow us to solve any
%Co3SAT instance in polynomial time.
3SAT instance in polynomial time.
	Specifically, given a
        %Co3SAT
        3SAT
        instance $(C,V)$ and a real number $0 < \epsilon < 4$, we build an SSCG instance $\Gamma_\epsilon(C,V)$ such that it admits a PSE where the leader's cost is $\epsilon$ if and only if the
        %Co3SAT instance admits a \emph{yes}
        3SAT instance admits a \emph{no}
        answer, i.e., if and only if $(C,V)$ is not satisfiable.
	Instead, if the
        %Co3SAT instance has answer \emph{no},
        3SAT instance has answer \emph{yes},
        i.e., if $(C,V)$ is satisfiable, then the leader's cost is $4$ in any PSE.
	%
	% In the following, $l \in \phi$ denotes a literal (i.e., a variable or its negation) appearing in $\phi \in C$, while $v(l)$ denotes the variable corresponding to that literal.
	%
	% Moreover, we let $|C|=m$ and $|V|=s$ be, respectively, the number of clauses and variables.
	
	\textbf{Mapping.}
	$\Gamma_\epsilon(C,V)$ is defined as follows:
	\begin{itemize}
		\item $N=F\cup \{\ell\}$, with $F=\{ p_{l,\phi} \mid \phi \in C, l \in \phi \} \cup \{p_{\phi,t} \mid \phi \in C\}  \cup \{ p_{v,t}, p_{v}, p_{\bar v} \mid v \in V \} \cup \{ p_{\phi,v},p_{\phi,\bar v} \mid \phi \in C, v \in V \}$;
		\item $R = \{ r_t \} \cup \{ r_\phi \mid \phi \in C \} \cup \{ r_{v,t},r_{v},r_{\bar v} \mid v \in V \} \cup \{ r_{\phi,v},r_{\phi,\bar v} \mid \phi \in C, v\in V \}$;
		\item $A_{p_{l, \phi}} = \{ r_\phi \} \cup \{ r_{\phi,l}\} \ \ \forall\ \phi \in C, \forall\ l \in \phi$;
		\item $A_{p_{\phi,t}} = \{ r_\phi, r_t \} \ \ \forall\ \phi \in C$;
		\item $A_{p_v} = \{ r_{v,t},r_v \}, A_{p_{\bar v}} = \{ r_{v,t}, r_{\bar v} \}, A_{p_{v,t}} = \{ r_{v,t},r_t \} \ \ \forall\ v \in V$;
		\item $A_{p_{\phi,v}} = \{ r_v,r_{\phi,v} \}, A_{p_{\phi,\bar v}} = \{ r_{\bar v}, r_{\phi,\bar v} \} \ \ \forall\ \phi \in C, v \in V$;
		\item $A_\ell = \{ r_t \}$.
	\end{itemize}
	The cost functions take values according to the following table, and satisfy $c_{r_{\bar v},f} = c_{r_{v},f}$, $c_{r_{\phi,\bar v},f} = c_{r_{\phi,v},f}$, and $c_{r_{t},f} = c_{r_{t},\ell}$ (let us remark that they are all monotonic functions of the resource congestion):

        \bigskip
        %Table~\ref{tab:costs_np_hard_pes}.
	%
	%
	% Moreover, cost functions are specified by the following table, where $c_{r_{\bar v},f} = c_{r_{v},f}$, $c_{r_{\phi,\bar v},f} = c_{r_{\phi,v},f}$, and $c_{r_{t},f} = c_{r_{t},\ell}$:
	%
	%
	%% \begin{table}[h]
	%% 	\centering
	%% 	{\renewcommand{\arraystretch}{1}\begin{tabular}{c|ccccc}
	%% 			\hline
	%% 			$x$ & $c_{r_\phi,f}$ & $c_{r_{v},f}$  & $c_{r_{v,t},f}$ & $c_{r_{\phi,v},f}$ &  $c_{r_{t},f}$ \\
	%% 			\cline{1-6}
	%% 			$1$ & $2$ & $1$ & $2$ & $0$ & $\epsilon$ \\
	%% 			$ [2,m] $ & $5$ & $1$ & $5$ & $7$ & $\epsilon$ \\
	%% 			$ m+1 $ & $5$ & $6$ & $5$ & $7$  & $\epsilon$ \\
	%% 			$ m+s+1 $ & $5$ & $6$ & $5$ & $7$  & $4$ \\
	%% 		\end{tabular}}
	%% 		\caption{Cost functions of games $\Gamma_\epsilon(C,V)$ used in the reduction in the proof of Theorem~\ref{thm:np_hard_pes}. The other cost functions are $c_{r_{\bar v},f} = c_{r_{v},f}$, $c_{r_{\phi,\bar v},f} = c_{r_{\phi,v},f}$, and $c_{r_{t},f} = c_{r_{t},\ell}$.}
	%% 		\label{tab:costs_np_hard_pes}
	%% \end{table}

	\begin{center}
	  {\renewcommand{\arraystretch}{1}\begin{tabular}{c|ccccc}
	      \hline
	      $x$ & $c_{r_\phi,f}$ & $c_{r_{v},f}$  & $c_{r_{v,t},f}$ & $c_{r_{\phi,v},f}$ &  $c_{r_{t},f}$ \\
	      \cline{1-6}
	      $1$ & $2$ & $1$ & $2$ & $0$ & $\epsilon$ \\
	      $ [2,m] $ & $5$ & $1$ & $5$ & $7$ & $\epsilon$ \\
	      $ m+1 $ & $5$ & $6$ & $5$ & $7$  & $\epsilon$ \\
	      $ [m+s+1, \infty] $ & $5$ & $6$ & $5$ & $7$  & $4$ \\
	  \end{tabular}}
	\end{center}
       
        \bigskip
	
	\noindent Figure~\ref{fig:reduction_pes} shows an example of game $\Gamma_\epsilon(C,V)$.
	
	%the mapping can be computed out in polynomial time, as $\Gamma_\epsilon(C,V)$ has $n = 2m + 3s + 2 m s + 1$ players and $r = m + 3 s + 2 m s$ resources.
	Given $(C,V)$,  $\Gamma_\epsilon(C,V)$ can be constructed in polynomial time, as it features $n = 3m + m + 3s + 2 m s + 1$ players and $r = m + 3 s + 2 m s + 1$ resources.
	
	Observe that, in $\Gamma_\epsilon(C,V)$, the leader can only select a single resource $r_t$ and, hence, the only leader's commitment is $\sigma_\ell(r_t) = 1$.
	As a result, the leader's cost is $4$ if and only if all followers $p_{\pi, t}$ and $p_{v,t}$ select resource $r_t$; otherwise, it is $\epsilon$. 

	\begin{figure*}[!htp]
		\centering
		\includegraphics[width=\textwidth]{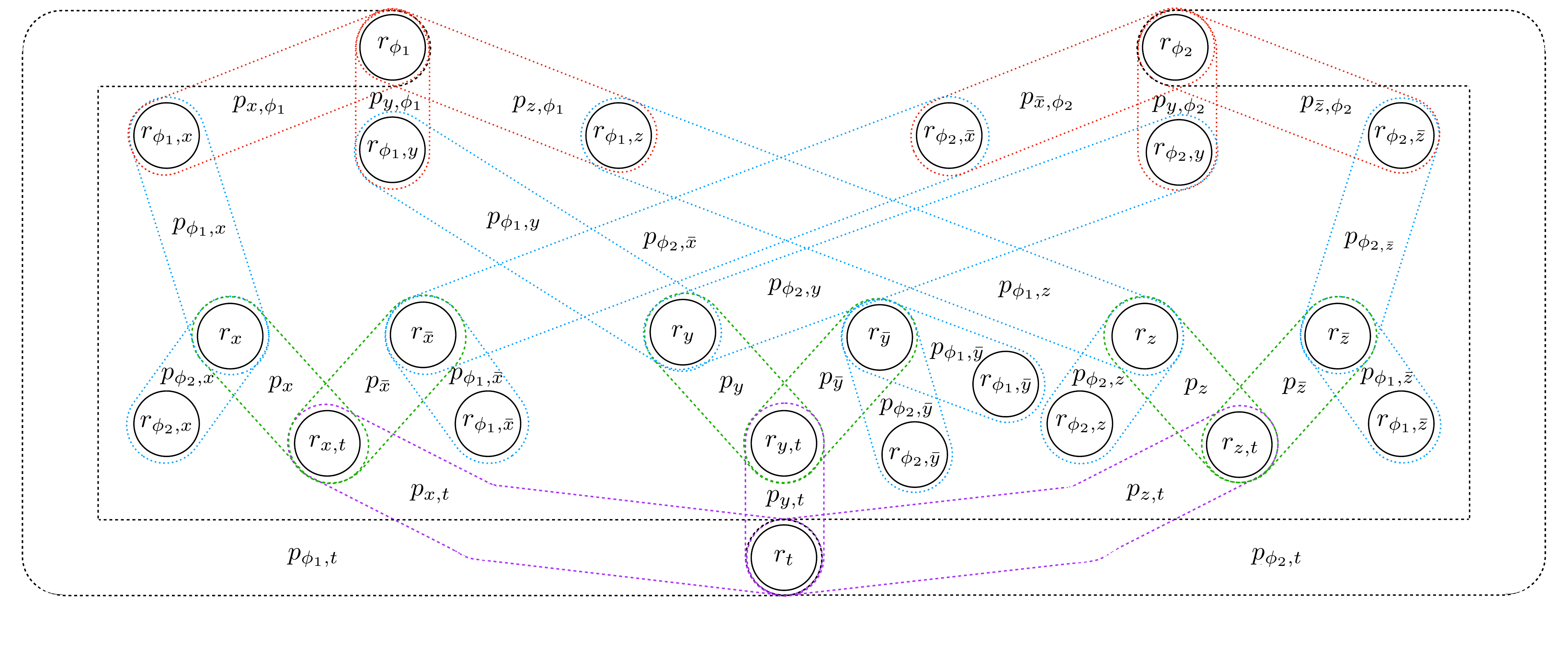}
		\caption{Example of a game instance $\Gamma_\epsilon(C,V)$ used in the reduction in the proof of Theorem~\ref{thm:np_hard_pes} with $V=\{x,y,z\}$, $C = \{\phi_1,\phi_2\}$, $\phi_1 = x \vee y \vee z$, and $\phi_2 = \bar x \vee y \vee \bar z$.}
		\label{fig:reduction_pes}
	\end{figure*}

	\textbf{If.}
	Suppose that the
        %Co3SAT instance admits a \emph{yes} answer,
        3SAT instance has answer \emph{no},
        i.e., there is no truth assignment to the variables in $V$ that satisfies all the clauses in $C$.
	We prove that, in that case, $\Gamma_\epsilon(C,V)$ admits a PSE with leader's cost equal to $\epsilon$.
	By contradiction, let us assume there exists a PSE $\sigma = (\sigma_\ell,a)$ in which the leader's cost is $4 > \epsilon$.
	We show that $a = (a_p)_{p \in F} \in \bigtimes_{p \in F} A_p$ can be employed to recover, in polynomial time, a truth assignment $\tau$ that satisfies all the clauses in $C$, which is a contradiction.
	First, let us note that all the followers $p_{\pi, t}$ and $p_{v,t}$ select $r_t$ in $a$ as, otherwise, the leader's cost would be $\epsilon < 4$.
	As a result, $a_{p_\phi} = r_t$ for every $\phi \in C$, and $a_{p_v} = r_t$ for all $v \in V$.
	Thus, there is at least one follower using resource $r_\phi$ for every $\phi \in C$ as, otherwise, $p_{\phi,t}$ would have an incentive to deviate from $r_t$ to $r_\phi$ (as $4 > 2$).
	Thus, for each $\phi \in C$ there must be a literal $l \in \phi$ such that $a_{p_{l,\phi}} = r_{\phi}$.
	This implies that $a_{p_\phi,l} = r_{\phi,l}$ as, otherwise, follower $p_{l,\phi}$ would deviate from  $r_{\phi}$ to $r_{\phi,l}$ (as $2 > 0$).
	Thus, we can conclude that $\nu_{r_l}^a < m$.
	As a consequence, $a_{p_l} = r_{l}$ as, if she selected $r_{v(l),t}$, her cost would be $2$ or more, and she would deviate to resource $r_{l}$ (to incur a smaller cost equal to 1).
	Furthermore, at least one between $p_v$ and $p_{\bar v}$ must select $r_{v,t}$ as, otherwise, player $p_{v,t}$ would deviate from $r_t$ (as $2 < 4$).
	Assume, w.l.o.g., $a_{p_{\bar v}} = r_{v,t}$, as the other case is analogous. 
	All the followers $p_{\phi, \bar v}$ select $r_{\bar v}$ as, otherwise, $p_{\bar v}$ would have an incentive to deviate from $r_{v,t}$ (as $1 < 2$).
	Thus, all the followers $p_{\bar v,\phi}$ select $r_{\phi,\bar v}$.
	Let us define a truth assignment $\tau$ such that $\tau(v)=\mathsf T$ if $a_{p_v}=r_v$, $\tau(v)= \mathsf F$ if $a_{p_{\bar v}}=r_{\bar v}$ and $\tau(v)$ is either $\mathsf T$ or $\mathsf F$ whenever $a_{p_v}=a_{p_{\bar v}}= r_{v,t}$.
	Clearly, $\tau$ is well-defined. 
	Moreover, as previously shown, for every $\phi \in C$ there exists a literal $l \in \phi$ such that $a_{p_{l,\phi}} = r_{\phi}$. This implies $a_{p_{\phi,l}} = r_{\phi,l}$ and $a_{p_l}=r_l$, and, thus, $\tau(v(l))=\mathsf T$ if $l$ is positive or $\tau(v(l))=\mathsf F$ if it is negative.
	Therefore, $\tau$ satisfies all the clauses, which is a contradiction.
	
	\textbf{Only if.} 
	Suppose that the
        %Co3SAT instance has answer \emph{no}, 
        3SAT instance admits answer \emph{yes},
        i.e., there exists a truth assignment to the variables which satisfies all the clauses in $C$.
	We prove that in any PSE of $\Gamma_\epsilon(C,V)$ the leader's cost is $4 > \epsilon$.
	Let $\tau : V \rightarrow \{\mathsf T,\mathsf F\}$ be one such truth assignment. 
	We show how to recover from $\tau$ a followers' action profile $a = (a_p)_{p \in F} \in \bigtimes_{p \in F} A_p$ such that $a \in E^{\sigma_\ell}$, with $\sigma = (\sigma_\ell,a)$ providing the leader with a cost of $4$.
	Since $4$ is the maximum cost the leader can achieve and the followers behave pessimistically, $\sigma$ is clearly a PSE.
	In particular, let $a_{p_{\phi,t}} = r_t$, for all $\phi \in C$, and $a_{p_{v,t}} =r_t$, for all $v \in V$.
	Moreover, if $\tau(v)=\mathsf T$, let $a_{p_v} = r_v$, $a_{p_{\bar v}} = r_{v,t}$, and, for all $\phi \in C$, $a_{p_{\phi,v}} = r_{\phi,v} $ and $a_{p_{\phi,\bar v}} = r_{\bar v}$.
	Additionally, for every clause $\phi \in C$ and $l \in \phi$ such that $v(l) = v$, let $a_{p_{l, \phi}} = r_{\phi}$ if $l$ is positive, while $a_{p_{l, \phi}} = r_{\phi,l}$ if it is negative.
	Conversely, if $\tau(v)=\mathsf F$, let $a_{p_{\bar v}} = r_{\bar v}$, $a_{p_v} = r_{v,t}$, and, for all $\phi \in C$, $a_{p_{\phi,\bar v}} = r_{\phi,\bar v}$ and $a_{p_{\phi,v}} = r_{v}$. 
	Furthermore, for every clause $\phi \in C$ and $l \in \phi$ such that $v(l) = v$, let $a_{p_{l, \phi}} = r_{\phi}$ if $l$ is negative, and $a_{p_{l, \phi}} = r_{\phi,l}$ if it is positive.
	Notice that, since either $\tau(v) = \mathsf T$ or $\tau(v) = \mathsf F$, one between $p_v$ and $p_{\bar v}$ selects $r_{v,t}$.
	Assume, w.l.o.g., $a_{p_{\bar v}} = r_{v,t}$ and $a_{p_v} = r_v$ (as the other case is analogous).
	First, no follower $p_{ \phi,v }$ would deviate from $r_{\phi,v} $ to $r_v$, as, otherwise, she would incur a cost of at least $1$, rather than $0$. 
	The same holds for followers $p_{\phi,\bar v}$, as their cost is $1$ while, if any of them switched to $r_{\phi,\bar v}$, she would incur a cost of $7$, because $a_{p_{\bar v,\phi}} = r_{\phi,\bar v}$.
	Similarly, since there is one follower selecting $r_{v,t}$, follower $p_{v,t}$ would not deviate from $r_t$ (as $4 < 5$), while follower $p_{v}$ would not deviate from $r_{v}$ because her cost is $1 < 5$ and $p_{\bar v}$ would not switch from $r_{v,t}$ (as she would get $6$ rather than $1$).
	Furthermore, since $\tau$ is a truth assignment satisfying $(C,V)$, for each clause $\phi \in C$ there exists at least one literal $l \in \phi$ that evaluates to true under $\tau$. Thus, $p_{l, \phi}$ would not deviate from $r_{\phi}$ (as she pays either $2$ or $5$ instead of $7$).
	Thus all the followers $p_{\phi,t}$ would not deviate from $r_{t}$ (as $ 4 < 5$) and we can conclude that $a$ is an NE. Since $m + s$ follower use $r_t$, the leader's cost is $4$.
\end{proof}
		
%\todo[inline]{Ci vedrei bene un paragrafo (anticipato con un puntatore PRIMA del teorma) che racconti a parole cosa cambia tra la riduzione per l'OSE della sez prima e questa per il PSE. Sono troppo simili per costruzione per passare a un altro topic senza dire nemmeno una parola.}

Theorem~\ref{thm:np_hard_pes} also implies the following: 
\begin{corollary}\label{cor:ne_pes_np_hard}
	In SCGs without leadership and different action spaces, computing an NE maximizing the cost of a given player is $\mathsf{NP}$-hard.
%\tinytodo{Stesso commento dell'altro corollary su parlare di ``given player'' o ``given resource''.}
\end{corollary}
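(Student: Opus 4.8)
The plan is to mirror the derivation of Corollary~\ref{cor:ne_opt_np_hard} from Theorem~\ref{thm:np_hard_opt}, now applied to the reduction $\Gamma_\epsilon(C,V)$ built in the proof of Theorem~\ref{thm:np_hard_pes}. The structural feature I would exploit is that in that instance the leader has a single available action, $A_\ell = \{r_t\}$, so her only strategy is $\sigma_\ell(r_t)=1$ and the Stackelberg layer collapses: a strategy profile of $\Gamma_\epsilon(C,V)$ is just a pure NE of the underlying SCG in which one distinguished player (the former leader) is pinned to resource $r_t$. Equivalently, one may regard $\ell$ as an ordinary player of a leaderless SCG with different action spaces, and take ``the cost of a given player'' in the statement of the corollary to be the cost of this player, namely $c_{r_t,\ell}(\nu_{r_t}^a+1)$.

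First I would observe that, since $\sigma_\ell$ is fixed to $\sigma_\ell(r_t)=1$, the followers' NE condition $c_{a_p,f}^{\sigma_\ell}(\nu_{a_p}^a) \leq c_{a_p',f}^{\sigma_\ell}(\nu_{a_p'}^a + 1)$, instantiated at this $\sigma_\ell$, coincides verbatim with the NE condition of the leaderless SCG on all players other than $\ell$, with the congestion of $r_t$ offset by one; this is immediate from the definition of $c_{i,f}^{\sigma_\ell}$, but stating it explicitly is what makes the transfer rigorous. Hence $E^{\sigma_\ell}$ is precisely the set of (pure) NEs of the leaderless game, and in each such NE the distinguished player's cost equals the leader's cost in $\Gamma_\epsilon(C,V)$, i.e.\ it is $4$ when all $m+s$ players $p_{\phi,t}$ and $p_{v,t}$ select $r_t$ and $\epsilon$ otherwise. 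Next I would invoke the two directions already established in the proof of Theorem~\ref{thm:np_hard_pes}: the ``only if'' direction constructs, from any satisfying assignment, an NE with leader's cost $4$, while the contrapositive of the ``if'' direction shows that when $(C,V)$ is unsatisfiable every NE has leader's cost $\epsilon$. Combining these, the maximum over all NEs of the distinguished player's cost is $4$ exactly when $(C,V)$ is satisfiable, and $\epsilon < 4$ otherwise.

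Consequently, a polynomial-time algorithm that computes an NE maximizing the cost of a prescribed player (or even merely the value of such an NE) would decide 3SAT in polynomial time, which is impossible unless $\mathsf{P}=\mathsf{NP}$; this gives the claimed \NPHard ness, and, choosing $\epsilon$ exponentially small exactly as in Corollary~\ref{cor:non_poly_apx_opt}, one even obtains that the maximum cost of a given player over the NEs of an SCG with different action spaces is not in \PolyAPX~unless \Poly${}={}$\NP. The only point requiring care — and the one I would check explicitly rather than gloss over — is the congestion-offset remark above, ensuring that both directions of Theorem~\ref{thm:np_hard_pes} apply unchanged to the leaderless game; beyond that, no new construction is needed.
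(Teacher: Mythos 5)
Your proposal is correct and follows essentially the same route as the paper: exploit that $A_\ell=\{r_t\}$ in the reduction of Theorem~\ref{thm:np_hard_pes}, so the Stackelberg layer collapses, the set of followers' NEs coincides with the set of NEs of the leaderless game with $\ell$ as an ordinary (single-action) player, and the two directions of that theorem give that the maximum NE cost of the distinguished player is $4$ iff the 3SAT instance is satisfiable and $\epsilon$ otherwise. Your write-up is just a more explicit version of the paper's two-line argument (and your added inapproximability remark is a correct bonus).
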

\begin{proof}
	In games $\Gamma_\epsilon(C,V)$ such as those used in the proof of Theorem~\ref{thm:np_hard_pes}, any PSE is also an NE (since the leader can choose a single action).
	Moreover, $\Gamma_\epsilon(C,V)$ admits a PSE where the leader's cost is 4 if and only if the given
        %Co3SAT instance has answer \emph{yes},
        3SAT instance has answer \emph{no},
        which proves the result.
\end{proof}

%\tinytodo[inline]{Anche qua togliere la proof (vedi nota a corollary precedente).}

Furthermore, from Theorem~\ref{thm:np_hard_pes} it directly follows that the leader's cost in a PSE cannot be efficiently approximated up to within any approximation factor which depends polynomially on the size of the input:
\begin{corollary}
	The problem of computing a PSE in SSCGs with different action spaces is not in Poly-$\mathsf{APX}$ unless $\mathsf{P} = \mathsf{NP}$.
\end{corollary}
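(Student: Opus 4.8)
The plan is to mirror the proof of Corollary~\ref{cor:non_poly_apx_opt}, replacing the optimistic reduction with the pessimistic one from Theorem~\ref{thm:np_hard_pes}. Given a 3SAT instance $(C,V)$, I would build the SSCG $\Gamma_\epsilon(C,V)$ exactly as in the proof of Theorem~\ref{thm:np_hard_pes}, but fix $\epsilon := \frac{4}{2^{n+r}}$, where $n$ and $r$ are the (polynomially bounded) numbers of players and resources of $\Gamma_\epsilon(C,V)$. Since $\frac{1}{2^{n+r}}$ has an $O(n+r)$-bit binary encoding, the whole instance still has size polynomial in the size of $(C,V)$, and by Theorem~\ref{thm:np_hard_pes} it always admits a PSE, whose leader's cost equals $\epsilon$ when $(C,V)$ is unsatisfiable and equals $4$ when $(C,V)$ is satisfiable.

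Then I would argue by contradiction. Suppose there is a polynomial-time algorithm $\mathcal{A}$ that, on every SSCG, returns a PSE whose leader's cost is within a factor $\text{poly}(n,r)$ of the optimal (i.e., minimum over the PSEs) leader's cost. Running $\mathcal{A}$ on $\Gamma_\epsilon(C,V)$, if $(C,V)$ is unsatisfiable the optimal leader's cost is $\epsilon$, so $\mathcal{A}$ returns a value at most $\text{poly}(n,r)\cdot \frac{4}{2^{n+r}}$, which is strictly below $4$ for all sufficiently large $n+r$; if $(C,V)$ is satisfiable, the optimal, hence any returned, leader's cost is at least $4$. Therefore the test ``is $\mathcal{A}$'s output $< 4$?'' decides 3SAT in polynomial time for all but finitely many instance sizes (the remaining bounded-size instances being decided directly by brute force), contradicting $\mathsf{P}\neq\mathsf{NP}$.

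Since the heavy lifting — the correctness of the reduction and the existence of a PSE in $\Gamma_\epsilon(C,V)$ — is already provided by Theorem~\ref{thm:np_hard_pes}, I do not expect a substantial obstacle. The only points requiring a little care are: first, that the roles of $\epsilon$ and $4$ are reversed relative to the optimistic case, so that now a \emph{small} leader's cost certifies unsatisfiability and the threshold comparison must be oriented accordingly; and second, that shrinking $\epsilon$ to $\frac{4}{2^{n+r}}$ does not break anything, which is immediate because Theorem~\ref{thm:np_hard_pes} only uses the hypothesis $0<\epsilon<4$ and $\epsilon$ enters the cost table solely as the cost of $r_t$ at low congestion.
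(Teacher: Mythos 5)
Your proposal is correct and follows essentially the same route as the paper's own proof: build $\Gamma_\epsilon(C,V)$ from the reduction of Theorem~\ref{thm:np_hard_pes}, set $\epsilon = \frac{4}{2^{n+r}}$, and observe that any polynomial-factor approximation would separate the cost-$\epsilon$ (unsatisfiable) case from the cost-$4$ (satisfiable) case, deciding 3SAT. Your added remarks on encoding size and on the reversal of the roles of $\epsilon$ and $4$ relative to the optimistic case are accurate but do not change the argument.
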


\begin{proof}
	Given a
        %Co3SAT
        3SAT
        instance $(C,V)$, let us build an instance $\Gamma_\epsilon(C,V)$ of an SSCG as in the proof of Theorem~\ref{thm:np_hard_pes}.
	We have already proven that $\Gamma_\epsilon(C,V)$ admits a PSE in which the leader's cost is $\epsilon$ if and only if the
        %Co3SAT instance has answer \emph{yes};
        3SAT instance has answer \emph{no}; 
        otherwise, the leader's cost is $4$ in any PSE.
	Let $\epsilon = \frac{4}{2^{n+r}}$.
	Assume that there exists a polynomial-time approximation algorithm $\mathcal{A}$ with approximation factor $\text{poly}(n,r)$, i.e., a polynomial function of $n$ and $r$.
	Assume the answer to the
        %Co3SAT instance is \emph{yes}.
        3SAT instance is \emph{no}.
        $\mathcal{A}$ applied to $\Gamma_\epsilon(C,V)$ would return a solution with leader's cost at most $\frac{4}{2^{n+r}} \ \text{poly}(n,r)$. Since, for $n$ and $r$ large enough, $\frac{4}{2^{n+r}} \ \text{poly}(n,r) < 4$, $\mathcal{A}$ would allows us to decide in polynomial time whether the answer to the
        %Co3SAT instance is \emph{yes} or \emph{no}, a contradiction unless $\mathsf{P} = \mathsf{NP}$.
        3SAT instance is \emph{yes} or \emph{no}, a contradiction unless $\mathsf{P} = \mathsf{NP}$.
\end{proof}

\section{SSSCGs NP-Hardness and Inapproximability}\label{sec:complexity_same_actions}

%% We focus, in this section, on SSSCGs, which, as stated in Definition~\eqref{def:ssscg}, are a class SSCGs where every player has access to all the resources.
%% %
%% We show that the problem of finding an O/PSE in SSSCGs is \NPHard~and not in \PolyAPX, unless \Poly $=$ \NP.
We focus, in this section, on SSSCGs (the subset of SSCGs in which the players have the same action spaces), showing that the problem of finding an O/PSE in such games is \NPHard~and not in \PolyAPX~unless \Poly $=$ \NP.
For the problem of computing an OSE, we rely on $K$--PARTITION, a variant of PARTITION with an additional size constraint, whereas we adopt the classical version of PARTITION for the problem of computing a PSE.
The two problems are defined as follows:
\begin{definition}[PARTITION]\label{def:partition}
	Given a finite set $S = \{ x_1, \ldots, x_{|S|} \}$ of positive integers $x_i \in \mathbb{Z}^+$ with both $|S|$ and $\sum_{i \in S} x_i$ even, is there a partition $(S',S \setminus S')$ of $S$, with $S' \subseteq S$, such that $ \sum_{x_i \in S'} x_i = \sum_{x_i \in S \setminus S'} x_i$?
\end{definition}
\begin{definition}[$K$--PARTITION]
	Given a finite set $S = \{ x_1, \ldots, x_{|S|} \}$ of positive integers $x_i \in \mathbb{Z}^+$ with both $|S|$ and $\sum_{i \in S} x_i$ even and a positive integer
%$K < |S|$,
$K \leq \frac{|S|}{2}$,
is there a partition $(S',S \setminus S')$ of $S$, with $S' \subseteq S$ and $|S'|=K$, such that $ \sum_{x_i \in S'} x_i = \sum_{x_i \in S \setminus S'} x_i$?
	%
	% the sum of the elements in $S'$ is equal to the sum of those in $S \setminus S'$?
\end{definition}

Letting $s=\frac{1}{2}\sum_{x_i \in S} x_i $, we assume for both problems that $x_i \leq s$ for all $i \in S$. Indeed, if some $x_i > s$, then $\sum_{x_i \in S'} x_i > s$ for every $S' \subseteq S$ and, thus, the answer to both PARTITION and $K$--PARTITION is trivially {\em no}.

%As shown in, the $K$--PARTITION problem is \NP-complete.

PARTITION is well-known to be \NP-complete~\cite{garey1979computers}.
To see that $K$--PARTITION is also \NP-complete (its membership to \NP~is clear), it suffices to observe that PARTITION has answer {\em yes} if and only if $K$--PARTITION has answer {\em yes} for some $K \in \left\{1, \dots, \frac{|S|}{2}\right\}$.
This gives us a simple Cook reduction from PARTITION to $K$--PARTITION: after solving $K$--PARTITION $\frac{|S|}{2}$ times, once per value of $K \in \left\{1, \dots, \frac{|S|}{2}\right\}$, if answer {\em yes} is found for some $K$, PARTITION has answer {\em yes}; if, conversely, answer {\em yes} is never found, PARTITION has answer {\em no}.

\subsection{Computational Complexity of Finding an OSE in SSSCGs}\label{sub_sec:reduction_same_actions_opt}

We start our analysis with the problem of computing an OSE in SSSCGs.
%
%% First, let us introduce the following variant of PARTITION, which features an additional constraint on the size of the sets forming a partition (see Definition~\eqref{def:partition} for the definition of the PARTITION problem).
%% %
%% %, a well-known $\mathsf{NP}$-complete problem which reads as follows:
%
%% \begin{definition}[$K$--PARTITION]
%% 	Given a finite set $S = \{ x_1, \ldots, x_{|S|} \}$ of integers $x_i \in \mathbb{N}$ and a positive integer number $K < |S|$, is there a partition $(S',S \setminus S')$ of $S$, with $S' \subseteq S$ and $|S'|=K$, such that $ \sum_{x_i \in S'} x_i = \sum_{x_i \in S \setminus S'} x_i$?
%% 	%
%% 	% the sum of the elements in $S'$ is equal to the sum of those in $S \setminus S'$?
%% \end{definition}
%% %
%% \todo{Check that the problem is \NPHard}
%% %
%% As shown in, the $K$--PARTITION problem is \NP-complete.
%
We introduce our main reduction in the proof of the following theorem.

\begin{theorem}\label{thm:np_hard_opt_same}
	Computing an OSE in SSSCGs is \NPHard.
\end{theorem}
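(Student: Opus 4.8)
The plan is to reduce from $K$--PARTITION. Given an instance $(S,K)$ with $S=\{x_1,\dots,x_{|S|}\}$ and $s=\frac12\sum_{x_i\in S}x_i$, I would build an SSSCG $\Gamma(S,K)$, together with a small value $\epsilon$ and a constant $t>\epsilon$, such that $\Gamma(S,K)$ admits an OSE in which the leader's cost equals $\epsilon$ if and only if $(S,K)$ has answer \emph{yes}, while the leader's cost equals $t$ in every OSE otherwise. Since an OSE is a pair $(\sigma_\ell,a)$ attaining $\min_{\sigma_\ell\in\Delta_\ell}\min_{a\in E^{\sigma_\ell}}c_\ell^{(\sigma_\ell,a)}$, a polynomial-time algorithm for computing an OSE would let us decide $K$--PARTITION, which is \NP-complete, yielding the claimed hardness. (Choosing $\epsilon$ exponentially small in the size of $\Gamma(S,K)$ would additionally give, in a follow-up corollary, that the problem is not in \PolyAPX~unless \Poly $=$ \NP.)

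For the gadget I would introduce one ``item resource'' $r_i$ per element $x_i\in S$, endowed with a \emph{non-monotonic} followers' cost function shaped so that in any NE $r_i$ carries exactly $0$ or exactly $x_i$ followers: the cost is prohibitively large at every congestion level strictly between $1$ and $x_i$ and again strictly above $x_i$, and small at level $x_i$. (Recall that this section treats precisely the non-monotonic, identical-action-spaces case, and that the pure-commitment version of the problem is in \Poly, so the construction must genuinely exploit mixed commitments.) Calibrating the number of followers so that all of them must ultimately sit on item resources forces the set $S'=\{i : \nu_{r_i}=x_i\}$ of ``opened'' items to satisfy $\sum_{x_i\in S'}x_i=s$; a separate ``counting'' component, built around an auxiliary resource fed by dummy followers (one per item), pins $|S'|$ to $K$. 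The leader's mixed strategy enters the followers' game only through the convex combination $c_{i,f}^{\sigma_\ell}(x)=\sigma_\ell(i)c_{i,f}(x+1)+(1-\sigma_\ell(i))c_{i,f}(x)$: by placing a carefully chosen fractional probability on a ``balancing'' resource linking the two sides of the would-be partition, the leader can make both sides simultaneously stable in equilibrium precisely when the weights split as $s$/$s$ with one part of cardinality $K$ — something no pure commitment can achieve — and the leader's own cost is engineered to be $\epsilon$ exactly when the induced NE has this ``valid partition'' form, and $t$ in every other configuration.

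For the \emph{if} direction, given a partition $(S',S\setminus S')$ with $|S'|=K$ and equal sums, I would exhibit the leader commitment determined by $S'$ (the balancing probability) together with the follower action profile that opens exactly the resources in $S'$, and verify by inspection of the threshold cost functions that it is an NE yielding leader cost $\epsilon$; optimistic tie-breaking then makes it an OSE. For the \emph{only if} direction I would start from an OSE with leader cost $\epsilon$ and argue, via the prohibitive costs, that the followers' configuration must be of valid-partition form — every item resource at occupancy $0$ or $x_i$, all followers placed, the counting gadget forcing cardinality $K$, and the balancing constraint forcing equal sums — so that reading off $S'$ produces a feasible solution of $K$--PARTITION.

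The main obstacle I expect is the \emph{only if} direction, and in particular the need to rule out spurious behaviour for \emph{every} leader mixed strategy, not merely a canonical one: since an OSE minimizes over $\sigma_\ell$ as well, one must show that for a no-instance no commitment whatsoever induces an optimistic NE giving the leader cost below $t$. This forces the cost functions of the item, counting and balancing gadgets to be robust against arbitrary fractional perturbations of the above form, which is the delicate part of the calibration; making this coexist with the requirement that a suitable mixed commitment still works for yes-instances (so that mixed strategies are both necessary and sufficient, consistent with the pure case being in \Poly) is where the bulk of the effort lies.
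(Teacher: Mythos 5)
There is a genuine gap, and it is fatal to the reduction as sketched: your encoding of the item values as \emph{congestion levels} makes the constructed game exponentially large. You require that an ``opened'' item resource $r_i$ carry exactly $x_i$ followers and that the opened resources jointly absorb $s=\frac12\sum_{x_i\in S}x_i$ followers, so the game must contain at least $s$ followers (plus the counting and balancing gadgets). But in $K$--PARTITION the integers $x_i$ are given in binary, so $s$ can be exponential in the size of the instance; listing $\Theta(s)$ players (and cost tables indexed by congestion up to $\max_i x_i$) is therefore only a \emph{pseudo-polynomial} reduction. Since PARTITION and $K$--PARTITION are only weakly \NP-complete (they are solvable in pseudo-polynomial time), such a reduction proves nothing. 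You cannot rescue it by assuming the $x_i$ polynomially bounded --- that restriction puts $K$--PARTITION in \Poly. (A reduction from a \emph{strongly} \NP-complete problem such as 3-PARTITION could tolerate unary-sized gadgets, but your counting gadget pinning $|S'|=K$ is tailored to $K$--PARTITION, and you would have to redo the whole calibration.)

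The paper's construction is designed precisely to dodge this: every resource carries $O(|S|)$ followers ($|F|=4|S|+2$ in total), and the integers are encoded not in congestion levels but in the \emph{leader's mixed-strategy probabilities}. The equilibrium conditions force $\sigma_\ell(r_i)=w_i=x_i/s$ on every item resource with $\nu_{r_i}=2$ and $\sigma_\ell=0$ elsewhere, so the normalization $\sum_i\sigma_\ell(r_i)=1$ \emph{is} the constraint $\sum_{x_i\in S'}x_i=s$, while a threshold in the cost function of an auxiliary resource at congestion $4|S|-2K+1$ enforces $|S'|=K$. These rationals $w_i$ have polynomial binary encodings, so the whole game is polynomial-size. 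Your instinct that the construction ``must genuinely exploit mixed commitments'' is right, but in the correct proof the mixed strategy is not a mere balancing device --- it is the carrier of the numerical data of the instance. Your remaining structural ideas (the $0$-or-full occupancy via non-monotonic thresholds, the if/only-if outline, the robustness requirement over all commitments) are sound in spirit, but without relocating the integers from the followers' congestion to the leader's probabilities the reduction does not establish \NP-hardness.
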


\begin{proof}
	We prove the theorem using a reduction from $K$--PARTITION, showing that the existence of a polynomial-time algorithm for computing an OSE in SSSCGs would allow us to solve $K$--PARTITION in polynomial time.
	Let us recall that $s$ is defined as $s= \frac{1}{2} \sum_{x_i \in S} x_i$.
	Clearly, any solution $(S,S')$ to K--PARTITION is uniquely defined by a subset $S' \subseteq S$ such that $\sum_{x_i \in S'} x_i = s$ and $|S'| = K$.
	Let $w_i=\frac{x_i}{s}$ for all $x_i \in S$. Due to having $x_i \leq s$ for all $x_i \in S$, we also have $w_i \leq 1$.
	Given a $K$--PARTITION instance defined by a pair $(S,K)$, we build an instance $\Gamma_\epsilon(S,K)$ of an SSSCG with $0 < \epsilon<1$ such that the leader's cost in an OSE is $\epsilon$ if and only if $(S,K)$ admits answer \emph{yes}.
	
	\textbf{Mapping.} 	
	$\Gamma_\epsilon(S,K)$ is defined as follows:
	\begin{itemize}
		\item $N = F \cup \{\ell \}$, with $|F| = 4|S|+2$;
		\item $R = \{ r_{t_1}\} \cup \{r_{t_2}\} \cup \{ r_i \mid x_i \in S \}$;
	\end{itemize}
	The players' cost functions are specified in the following table:
        %Table~\ref{tab:costs_np_hard_opt_same}.
	
	%% \begin{table}[h]
	%% 	\centering
	%% 		{\renewcommand{\arraystretch}{1}\begin{tabular}{c|cc|cc|cc}
	%% 			\hline
	%% 			$x$ & $c_{r_i,f}$ & $c_{r_i,\ell}$  & $c_{r_{t_1},f}$ & $c_{r_t,\ell}$ & $c_{r_{t_2},f}$ & $c_{r_{t_2},\ell}$  \\
	%% 			\cline{1-7}
	%% 			$1$ & $2s$ & $s$ & $3s^2$ & $s^4$ & $1$ & $s^4$  \\
	%% 			$ 2 $ & $0$ & $s$ & $3s^2$ & $s^4$  & $4s^2$ & $s^4$  \\
	%% 			$ 3 $ & $\frac{1}{w_i}$ & $\epsilon$ & $3s^2$ & $s^4$ &$4s^2$ & $s^4$ \\
	%% 			$ 4 $ & $\frac{2s-\frac{1}{w_i}+1}{w_i}$ & $s$ & $3s^2$ & $s^4$ & $4s^2$ & $s^4$  \\
	%% 			$ [5,4 |S|-2K] $ & $4s^2$ & $s$ & $3s^2$ & $s^4$ & $4s^2$ & $s^4$\\
	%% 			$  4 |S|-2K+1$ & $4s^2$ & $s$ & $2s$ & $s^4$ & $4s^2$ & $s^4$  \\
	%% 			$ 4 |S|-2K+2 $ & $4s^2$ & $s$ & $1$ & $s^4$ & $4s^2$ & $s^4$\\
	%% 			$ [4 |S|-2K+3,\infty] $ & $4s^2$ & $s$ & $0$ & $s^4$ & $4s^2$ & $s^4$\\
	%% 		\end{tabular}}
	%% 		\caption{Cost functions of games $\Gamma_\epsilon(S,K)$ for the reduction in the proof of Theorem~\ref{thm:np_hard_opt_same}.}
	%% 		\label{tab:costs_np_hard_opt_same}
	%% \end{table}
	\begin{center}
	  {\renewcommand{\arraystretch}{1}\begin{tabular}{c|cc|cc|cc}
	      \hline
	      $x$ & $c_{r_i,f}$ & $c_{r_i,\ell}$  & $c_{r_{t_1},f}$ & $c_{r_t,\ell}$ & $c_{r_{t_2},f}$ & $c_{r_{t_2},\ell}$  \\
	      \cline{1-7}
	      $1$ & $2s$ & $s$ & $3s^2$ & $s^4$ & $1$ & $s^4$  \\
	      $ 2 $ & $0$ & $s$ & $3s^2$ & $s^4$  & $4s^2$ & $s^4$  \\
	      $ 3 $ & $\frac{1}{w_i}$ & $\epsilon$ & $3s^2$ & $s^4$ &$4s^2$ & $s^4$ \\
	      $ 4 $ & $\frac{2s-\frac{1}{w_i}+1}{w_i}$ & $s$ & $3s^2$ & $s^4$ & $4s^2$ & $s^4$  \\
	      $ [5,4 |S|-2K] $ & $4s^2$ & $s$ & $3s^2$ & $s^4$ & $4s^2$ & $s^4$\\
	      $  4 |S|-2K+1$ & $4s^2$ & $s$ & $2s$ & $s^4$ & $4s^2$ & $s^4$  \\
	      $ 4 |S|-2K+2 $ & $4s^2$ & $s$ & $1$ & $s^4$ & $4s^2$ & $s^4$\\
	      $ [4 |S|-2K+3,\infty] $ & $4s^2$ & $s$ & $0$ & $s^4$ & $4s^2$ & $s^4$\\
	  \end{tabular}}
	\end{center}
	
	\bigskip
		
	Clearly, $\Gamma_\epsilon(S,K)$ can be built in polynomial time, as it features $n=4|S|+3$ players and $r=|S|+2$ resources.
	
	\textbf{If.} 
	Suppose that the $K$--PARTITION instance $(S,K)$ admits a \emph{yes} answer.
	Let $S' \subseteq S$ be a set of integers with $|S'|=K$ and $\sum_{x_i \in S'} x_i = s$. 
	We prove that $\Gamma_\epsilon(S,K)$ admits an OSE in which the leader's cost is $\epsilon$.
	Given $S'$, let us define a followers' configuration $\nu \in \mathbb{R}^r$ and a leader's strategy $\sigma_\ell \in \Delta_\ell$ such that $\sigma = (\sigma_\ell, \nu)$ is an OSE with leader's cost $\epsilon$.
	Let $\nu_{r_i}=2$ and $\sigma_\ell(r_i)=w_i$ for all $x_i \in S'$, while, for every $x_i \notin S'$, let $\nu_{r_i}=0$ and $\sigma_\ell(r_i)=0$.
	Moreover, let $\nu_{r_{t_1}}=4 |S|-2K+1$, $\sigma_\ell(r_{t_1})=0$, $\nu_{r_{t_2}}=1$, and $\sigma_\ell(r_{t_2})=0$.
	First, let us observe that the leader's strategy $\sigma_\ell$ is well-defined, as 
	$$
	\sum_{x_i \in S} \sigma_\ell(r_i)+ \sigma_\ell(r_{t_1})+\sigma_\ell(r_{t_2})=\sum_{x_i \in S'} \sigma_\ell(r_i) = \sum_{x_i \in S'} w_i= \sum_{x_i \in S'} \frac{x_i}{s}=1,
	$$
	where the last equality follows from the fact that $S'$, with its complement $S \setminus S'$, defines a partition of $S$.
	Next, we show that $\nu$ is an NE for $\sigma_\ell$ with the following argument.
	\begin{itemize}
		\item All the followers who selected resource $r_i$, with $x_i \in S'$, do not have any incentive to change resource, as their cost is $w_i \cdot \frac{1}{w_i} = 1$ and they cannot improve it by switching to another resource.
		Indeed, if they selected a resource $r_{j}$ with $x_j \in S'$, they would incur a cost of $ \frac{1}{w_{j}} \cdot (1-w_{j})+\frac{2s-\frac{1}{w_{j}}+1}{w_{j}} \cdot w_{j}=2s > 1$.
		Similarly, their cost would be $2s$ if they choose $r_j$ with $x_j \notin S'$. 
		They would not benefit from choosing resource $r_{t_1}$, as they would incur a cost of $1$, which is the same as their current cost, and they would not switch to resource $r_{t_2}$, as their cost would become $4s^2 > 1$.
		\item All the followers who selected resource $r_{t_1}$ incur a cost of $2s$.
		Thus, they do not have an incentive to deviate to a resource $r_i$ with $x_i \in S'$, as they would still incur a cost of $2s$.
		The same holds for resource $r_{t_1}$.
		Similarly, if they chose to play $r_{t_2}$, they would incur a cost of $4s^2 > 2s$. 
		\item The follower who chose resource $r_{t_2}$ does not deviate, as her cost is $1$ and she would incur a cost of $2s$ and $1$ if she switched to resource $r_i$ or $r_{t_1}$, respectively.
	\end{itemize}
	Overall, the leader's cost is:
	\begin{align*}
			c_\ell^\sigma & = \sum_{x_i \in S} \sigma_\ell(r_i) c_{r_i,\ell}(\nu_{r_i}+1)+ \sigma_\ell(r_{t_1}) c_{r_{t_1},\ell}(\nu_{r_{t_1}}+1) + \sigma_\ell(r_{t_2}) c_{r_{t_2},\ell}(\nu_{r_{t_2}}+1) = \\
			& = \sum_{x_i \in S'} \sigma_\ell(r_i) c_{r_i,\ell}(\nu_{r_i}+1) = \sum_{x_i \in S'} \epsilon  w_i=\epsilon.
	\end{align*}
%	$$ 
%	c_\ell^\sigma = \sum_{x_i \in S} \sigma_\ell(r_i) c_{r_i,\ell}(\nu_{r_i}+1)+ \sigma_\ell(r_{t_1}) c_{r_{t_1},\ell}(\nu_{r_{t_1}}+1)+ \sigma_\ell(r_{t_2},\ell) c_{r_{t_2},\ell}(\nu_{r_{t_2}}+1)= \sum_{x_i \in S'} \sigma_\ell(r_i) c_{r_i,\ell}(\nu_{r_i}+1) = \sum_{x_i \in S'} \epsilon w_i=\epsilon.
%	$$

	\textbf{Only if.}
	Suppose that $\Gamma_\epsilon(S,K)$ has an OSE $\sigma = (\sigma_\ell,\nu)$ in which the leader's cost is $\epsilon$. 
	Then, $\sigma_\ell(r_{t_1})=\sigma_\ell(r_{t_2})=0$ must hold.
	Moreover, the leader must place positive probability only on resources $r_i$ with $\nu_{r_i}=2$.
	Clearly, there is always a resource $r_i$ with $\nu_{r_i}=2$ and $\sigma_\ell(r_i) > 0$.
	Next, we prove that $\nu_{r_{t_1}} = 4 |S|-2K+1$.
	By contradiction, assume that $\nu_{r_{t_1}} \neq 4 |S|-2K+1$. Three cases are possible.
	\begin{itemize}
		\item $\nu_{r_{t_1}} = 0$ implies that either there exists at least one resource $r_i$ with $\nu_{r_i} \ge 5$ or $\nu_{r_{t_2}}=2$, but, then, the followers who chose $r_i$ or, respectively, $r_{t_2}$, would deviate by choosing $r_{t_1}$, decreasing their cost from $4s^2$ to $3s^2$.
		\item $1 \leq \nu_{r_{t_1}} \le 4 |S|-2K$ implies that the followers who selected $r_{t_1}$ incur a cost of $3 s^2$. Thus, they would deviate to some resource $r_i$ with $\nu_{r_i}=2$, since their cost would be at most $\frac{2s-\frac{1}{w_i}+1}{w_i} < 3s^2$.
		\item $\nu_{r_{t_1}} \ge 4 |S|-2K+2$ implies that the followers' cost when they deviate by playing resource $r_t$ is $0$. Thus, the followers who selected a resource $r_i$ with $\nu_{r_i}=2$ and $\sigma_\ell(r_i)>0$ would change resource, since their current cost is strictly greater than $0$.
	\end{itemize}
	%
%	Suppose that $\nu_{r_t} = 0$. This implies that there is at least one resource $r_i$ with $\nu(r_i) \ge 5$ or $\nu_{r_{t1}}=2$, and the followers on $r_i$ or $r_{t1}$ would deviate on $r_t$(as $4s^2>3s^2$).\\
%	Suppose that $\nu_{r_t} \ge 1$ and $\nu_{r_t} \le 4n-2k$. Since the leader cost is $\epsilon$ there is a resource $r_i$ with $\nu(r_i)=2$ and thus the cost of deviate on $r_i$ is smaller or equal to$\frac{2s-1/w_i+1}{w_i} < 3s^2$ and a follower would deviate from $r_t$ to $r_i$.\\
%	Suppose $\nu_{r_t} \ge 4n-2k+2$. The cost of deviating on $r_t$ would be $0$. Since the leader cost is $\epsilon$, there is a resource $r_i$ with $\nu(r_i)=2$, $\sigma(r_i)>0$ and cost for the followers on $r_i$ greater than $0$, and a follower would deviate from $r_i$ to $r_t$.\\
	%
	%
	% We have proven that, if the cost of the leader is $\epsilon$, $\nu_{r_t} = 4n-2k+1$.
	%
	%% In all the previous cases, we contradict the fact that $\nu$ is an NE for $\sigma_\ell$, thus $\nu_{r_{t_1}} = 4 |S|-2K+1$.
	The only remaining option for $\nu$ to be an NE for $\sigma_\ell$ is $\nu_{r_{t_1}} = 4 |S|-2K+1$.
	Then, $\nu_{r_{t_2}} = 1$ must hold as, if $\nu_{r_{t_2}} = 0$, a follower would switch form resource $r_{t_1}$ to resource $r_{t_2}$ (incurring a cost of $1$ instead of one of $2s > 1$), while, if $\nu_{r_{t_2}} \geq 2$, the followers who selected resource $r_{t_2}$ would deviate to resource $r_{t_1}$ (incurring a cost of $1$ instead of one of $4s^2 > 1$).
	%
	% Since $\nu_{r_t} = 4n-2k+1$, $\nu_{r_{t1}} = 1$, because, if $\nu_{r_t1} = 0$, a follower would deviate from $r_t$(paying $1$ instead of $2s$), and if $\nu_{r_{t1}} \ge 2$ the followers on $r_{t1}$ would deviate on $r_t$(paying $1$ instead of $4s^2$).
	%
	%
	Let us now consider a resource $r_i$ with $\nu_{r_i}=2$. 
	We prove that $\sigma_\ell(r_i)=w_i$ by contradiction.
	Two cases are possible.
	\begin{itemize}
		\item If $\sigma_\ell(r_i) < w_i$, the followers' cost by switching to resource $r_i$ satisfies
			$$
		 \frac{1}{w_i} (1-\sigma_\ell(r_i))+\frac{2s-\frac{1}{w_i}+1}{w_i}\sigma_\ell(r_i)<\frac{1}{w_{i}}(1-{w_i})+\frac{2s-\frac{1}{w_i}+1}{w_i} w_i=2s,
			$$
			where the inequality holds since the left-most quantity is a convex combination of $\frac{1}{w_i}$ and $\frac{2s - \frac{1}{w_i} + 1}{w_i}$ with weights $(1-\sigma_\ell(r_i))$ and $\sigma_\ell(r_i)$, and, since $\frac{1}{w_i} < \frac{2s - \frac{1}{w_i} + 1}{w_i}$, its maximum for $\sigma_\ell(r_i) \leq w_i$ is attained at $\sigma_\ell(r_i) = w_i$.
			Thus, we deduce that a follower would deviate from resource $r_{t_1}$ to resource $r_i$ (as her cost is $2s > 1$), contradicting the fact that $\nu$ is an NE for $\sigma_\ell$. 
		\item If $\sigma_\ell(r_i) > w_i$, we reach a contradiction since the cost incurred by the followers who are using resource $r_i$ would be $\frac{1}{w_i} \sigma_\ell(r_i) > 1$ and they would deviate playing resource $r_{t_1}$, decreasing their cost to $1$.
	\end{itemize}
	We have shown that $\sigma_\ell(r_i)=w_i$ for every resource $r_i$ with $\nu_{r_i}=2$.
	Finally, let $r_i$ be a resource with $\nu_{r_i} \neq 2$. 
	Clearly, it must be the case that $\sigma_\ell(r_i)=0$ since the leader's cost is $\epsilon$.
	Moreover, it cannot be the case that $\nu_{r_i}=1$, as, if it were the case, the follower would deviate to resource $r_{t_1}$ with a cost of 1, instead of $2s$.
	Similarly, $\nu_{r_i} \geq 3$ cannot hold, as one of the followers who are selecting resource $r_i$ would deviate playing $r_{t_1}$, since her current cost is greater than 1.
	Thus, either $\nu_{r_i}=2$ or $\nu_{r_i}=0$.
	As a consequence, there are $K$ resources $r_i$ with $\nu_{r_i}=2$ and $\sigma_\ell(r_i)=w_i$, and $|S|-K$ resources $r_i$ with $\nu_{r_i}=0$ and $\sigma_\ell(r_i)=0$.
	Let us define $S'$ as the set of integers $x_i \in S$ such that the corresponding resources $r_i$ satisfy $\nu_{r_i}=2$.
	Since $\sum_{x_i \in S'} \sigma_\ell(r_i)=1$ and $\sigma_\ell(r_i)=w_i$ for all such resources $r_i$, we can conclude that $\sum_{x_i \in S'} w_i=\sum_{x_i \in S'} \frac{x_i}{s}=1$, and, thus, $\sum_{x_i \in S'} x_i=s$.
	As a result, $(S',S \setminus S')$ is solution to $K$--PARTITION.
\end{proof}

Next, we show that even approximating the leader's cost in an OSE to within any polynomial factor of the input size is computationally hard, obtaining the same inapproximability result that we established for the problem of computing an OSE in the more general class of SSCGs.
%o
For SSSCGs, the inapproximability result relies on the nonmonotonicity of the players' cost functions.
This must necessarily be the case since, as we will show in Section~\ref{sec:algorithms_same_actions_polynomial}, the problem is easy when all the action spaces are equal and the costs functions are monotonic.

\begin{theorem}
	The problem of computing an OSE in SSCGs is not in~\PolyAPX~unless \Poly = \NP.
\end{theorem}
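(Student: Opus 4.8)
The plan is to reuse the reduction $\Gamma_\epsilon(S,K)$ built in the proof of Theorem~\ref{thm:np_hard_opt_same} and then run the scaling argument of Corollary~\ref{cor:non_poly_apx_opt} essentially verbatim. We already know that a \emph{yes}-answer to $K$--PARTITION implies the existence of an OSE of $\Gamma_\epsilon(S,K)$ with leader's cost exactly $\epsilon$. So the only missing piece is a \emph{gap}: an absolute constant $L\in(0,1)$ (independent of $\epsilon$) such that, on every \emph{no}-instance, \emph{every} OSE of $\Gamma_\epsilon(S,K)$ has leader's cost at least $L$. Given such an $L$, I would instantiate the reduction with $\epsilon=\frac{L}{2^{n+r+1}}$; if a polynomial-time algorithm approximated the OSE cost within a factor $\text{poly}(n,r)$, it would output a value below $L$ on \emph{yes}-instances (for $n+r$ large enough) and at least $L$ on \emph{no}-instances, hence would decide $K$--PARTITION, and therefore PARTITION, in polynomial time --- impossible unless $\mathsf{P}=\mathsf{NP}$. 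Note that, since in $\Gamma_\epsilon(S,K)$ the follower cost $c_{r_i,f}$ strictly decreases from $2s$ to $0$, this reduction is inherently non-monotone, consistently with the polynomial-time algorithm for the monotone symmetric case given in Section~\ref{sec:algorithms_same_actions_polynomial}.

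To establish the gap I would prove the following robust version of the ``only if'' direction of Theorem~\ref{thm:np_hard_opt_same}: \emph{if $\sigma=(\sigma_\ell,\nu)$ is any profile with $\nu\in E^{\sigma_\ell}$ and $c_\ell^\sigma<\frac12$, then $(S,K)$ is a yes-instance} (so that one can take $L=\frac12$; instances with $s=\frac12\sum_{x_i\in S}x_i=1$ are vacuously yes-instances, so assume $s\ge 2$). The proof follows the existing one step by step, but replaces each equality used there by an inequality carrying an explicit, $s$- and $|S|$-independent margin. Since $c_{r_{t_1},\ell}(\cdot)=c_{r_{t_2},\ell}(\cdot)=s^4$ and $c_{r_i,\ell}(\cdot)\ge s$ except at congestion $3$, bounding $c_\ell^\sigma$ below by the contributions of $r_{t_1}$, $r_{t_2}$ and of the resources $r_i$ with $\nu_{r_i}\ne 2$ gives $\sigma_\ell(r_{t_1}),\sigma_\ell(r_{t_2})\le c_\ell^\sigma/s^4$ and that the leader puts all but a mass $<2c_\ell^\sigma/s$ on resources of congestion $2$ (in particular one such resource is used). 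Re-running the three-way case analysis on $\nu_{r_{t_1}}$ from the original proof, the smallness of $\sigma_\ell(r_{t_1}),\sigma_\ell(r_{t_2})$ keeps all exhibited deviations strictly profitable --- e.g.\ a follower on $r_{t_1}$ with $1\le\nu_{r_{t_1}}\le 4|S|-2K$ pays more than $s^2+s$, which upper-bounds the cost of moving to any used congestion-$2$ resource --- so $\nu_{r_{t_1}}=4|S|-2K+1$ and, likewise, $\nu_{r_{t_2}}=1$; moreover no resource has congestion $1$ or congestion $\ge 3$, as its occupants would move to $r_{t_1}$ (cost $1-\sigma_\ell(r_{t_1})<1$) from a cost of at least $1$ up to a vanishing term. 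Counting the $4|S|+2$ followers then forces exactly $K$ resources of congestion $2$. Finally, on each used resource $r_i$ the convexity of the congestion-$3$ follower cost, together with the two NE conditions ``a follower on $r_{t_1}$ does not move to $r_i$'' and ``a follower on $r_i$ does not move to $r_{t_1}$'', pins down $w_i-\sigma_\ell(r_{t_1})\le\sigma_\ell(r_i)\le w_i$; summing over the $K$ used resources and using $\sum_{i\text{ used}}\sigma_\ell(r_i)\in[1-2c_\ell^\sigma/s,\,1]$ together with $K/s^3\le 4/|S|^2\le 1$ yields $\sum_{i\text{ used}}x_i\in(s-1,s+1)$, hence $\sum_{i\text{ used}}x_i=s$ by integrality. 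The $K$ used resources therefore identify a size-$K$ subset of $S$ summing to $s$, i.e.\ a solution to $K$--PARTITION.

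The main obstacle is precisely this second step: in Theorem~\ref{thm:np_hard_opt_same} the hypothesis ``$c_\ell^\sigma=\epsilon$'' forces $\sigma_\ell(r_{t_1})=\sigma_\ell(r_{t_2})=0$ exactly and makes every deviation inequality tight or strict for free, whereas here one only has ``$c_\ell^\sigma<\frac12$'', so each of the numerous deviation comparisons must be re-checked with a uniform slack. The delicate points are the sub-case $\nu_{r_{t_1}}=4|S|-2K$, where the cost incurred on $r_{t_1}$ degrades linearly as $\sigma_\ell(r_{t_1})$ grows, and the final integrality rounding, which works only because the leftover probability mass $1-\sum_{i\text{ used}}\sigma_\ell(r_i)$ is $O(c_\ell^\sigma/s)$ and $K$ is at most $|S|/2$. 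Once these margins are secured, the inapproximability claim follows from the scaling argument above exactly as in Corollary~\ref{cor:non_poly_apx_opt}.
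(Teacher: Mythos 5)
Your proposal is correct and follows essentially the same route as the paper: it reuses the $K$--PARTITION reduction $\Gamma_\epsilon(S,K)$, establishes a constant gap on \emph{no}-instances by re-running the ``only if'' case analysis with explicit slacks on $\sigma_\ell(r_{t_1})$, $\sigma_\ell(r_{t_2})$ and the congestion-$2$ resources, and concludes with the exponential scaling of $\epsilon$. The paper does exactly this with gap threshold $1$ rather than $\tfrac12$ (deriving $\sigma_\ell(r_{t_1})=0$ and $\sigma_\ell(r_i)=w_i$ exactly before the final integrality step, where you instead carry the $O(\sigma_\ell(r_{t_1}))$ slack into the rounding), a difference that does not change the argument.
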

\begin{proof}
	In order to prove the result, we rely on the reduction introduced in the proof of Theorem~\ref{thm:np_hard_opt_same}.
	We have already shown that in an OSE of $\Gamma_\epsilon(S,K)$ the leader's cost is $\epsilon$ if and only if the corresponding instance of $K$--PARTITION $(S,K)$ admits a \emph{yes} answer.
	Now, we prove that, when the $K$--PARTITION instance admits a \emph{no} answer, the leader's cost in any OSE is greater than or equal to 1.
	By contradiction, assume that there exists an OSE $ \sigma = (\sigma_\ell, \nu)$ in which the leader's cost is smaller than $1$. 
	Let $S' \subseteq S$ be the set of integers corresponding to a group of resources $r_i$ with $\nu_{r_i}=2$ (at least one must exist since the leader's cost is smaller than 1). 
	Then, $\sum_{x_i \in S'}\sigma_\ell(r_i) > \frac{s-1}{s}$ since $\sum_{x_i \in S \setminus S'} \sigma_\ell(r_i) + \sigma_\ell(r_{t_1})+\sigma_\ell(r_{t_2})$ must be smaller than~$\frac{1}{s}$ in order to have a leader's cost smaller than $1$.
	Moreover, $\sigma_\ell(r_{t_1}) \leq \frac{1}{s^4}$ and $\sigma_\ell(r_{t_1}) \leq \frac{1}{s^4}$ must both hold as, if not, the leader's cost would be larger than 1.
	We prove, now, that $\nu(r_{t_1})=4 |S|-2K+2$ by contradiction.
	We identify three cases:
	\begin{itemize}
		\item $\nu_{r_{t_1}} = 0$ implies that either there exists at least one resource $r_i$ with $\nu_{r_i} \ge 5$ or $\nu_{r_{t_2}}=2$, and, thus, either a follower who selected resource $r_i$ or one who selected resource $r_{t_2}$ would have an incentive to deviate to resource $r_{t_1}$ (as $4 s^2 > 3 s^2$).
		\item $1 \leq \nu_{r_{t_1}} \le 4 |S|-2 K-1$ implies that one of the followers who selected $r_{t_1}$ would have an incentive to deviate to resource $r_i$ with $\nu_{r_i}=2$, as she would incur a cost smaller than or equal to $\frac{2s-\frac{1}{w_i}+1}{w_i} < 3s^2$.
		\item $\nu_{r_{t_1}} = 4 |S|-2 K$ implies that the cost incurred by the followers who selected resource $r_{t_1}$ is greater or equal than $3s^2 (1-1/s^4)+2/s^3$, as $\sigma_\ell(r_{t_1}) \leq \frac{1}{s^4}$. Thus, since $\frac{2s-\frac{1}{w_i}+1}{w_i} < 2s^2 < 3s^2-3/s^2+2/s^3$, these followers would have an incentive to deviate from $r_{t_1}$ to a resource $r_i$ with $\nu_{r_i}=2$.
		%
		% Since there is at least a resource $r_i$ with $\nu_{r_i}=2$, the cost of deviating on $r_i$ is smaller or equal to$\frac{2s-\frac{1}{w_i}+1}{w_i}$ and the cost of players on $r_t$ is greater or equal to $3s^2 (1-1/s^4)+2/s^3$ (leader plays no more than $1/s^4$ on resource $r_t$). Since  $\frac{2s-\frac{1}{w_i}+1}{w_i} < 2s^2 < 3s^2-3/s^2+2/s^3$ a  follower would deviate from $r_t$ to $r_i$.
		%
		\item $\nu_{r_{t_1}} \ge 4 |S|-2 K+2$ implies that the followers' cost after deviating to resource $r_{t_1}$ would be $0$ and, since there exists at least one resource $r_i$ with $\nu(r_i)=2$ and $\sigma(r_i)>0$, one of the followers who selected such resource would switch from it in favor of $r_{t_1}$.
	\end{itemize}
	%
	% Suppose that $\nu_{r_t} = 0$. This implies that there is at least one resource $r_i$ with $\nu_{r_i} \ge 5$ or $\nu_{r_{t1}}=2$, and the followers on $r_i$ and $r_{t1}$ would have incentive to deviate on $r_t$.
	%
	% Suppose that $\nu_{r_t} \ge 1$ and $\nu_{r_t} \le 4n-2k-1$. Since there is at least a resource with $\nu_{r_i}=2$, the cost of deviate on $r_i$ is smaller or equal to$\frac{2s-1/w_i+1}{w_i} < 3s^2$ and a follower would deviate from $r_t$ to $r_i$.
	%
	% Suppose that $\nu_{r_t} = 4n-2k$. Since there is there is at least a resource $r_i$ with $\nu_{r_i}=2$, the cost of deviating on $r_i$ is smaller or equal to$\frac{2s-1/w_i+1}{w_i}$ and the cost of players on $r_t$ is greater or equal to $3s^2 (1-1/s^4)+2/s^3$( leader plays no more than $1/s^4$ on resource $r_t$). Since  $\frac{2s-1/w_i+1}{w_i} < 2s^2 < 3s^2-3/s^2+2/s^3$ a  follower would deviate from $r_t$ to $r_i$.
	%
	% Suppose $\nu_{r_t} \ge 4n-2k+2$. The cost of deviating on $r_t$ would be $0$. Since there is at least a  resource $r_i$ with $\nu(r_i)=2$ and $\sigma(r_i)>0$ (and thus cost greater that $0$ for the followers), a follower would deviate from $r_i$ to $r_t$.
	%
	%
	Thus, $\nu(r_{t_1})=4 |S|-2 K+1$.
	Let us consider resource $r_{t_{t_2}}$.
	If $\nu_{r_{t_2}}=0$, the followers' cost incurred when deviating to resource $r_{t_2}$ would be smaller or equal than $(1-1/s^4) + 4/s^2$ (as $\sigma_\ell(r_{t_2}) \le 1/s^4$), while the cost incurred by choosing resource $r_{t_1}$ is at least $2s(1-1/s^4)+1/s^4 > (1-1/s^4) + 4/s^2$.
	Instead, if $\nu_{r_{t_2}}\ge 2$, the followers' cost for resource $r_{t_2}$ is $4s^2$ and they would have an incentive to deviate to $r_{t_1}$ to decrease their cost to $1$ or less.
	Thus, $\nu_{r_{t_2}}=1$.
	We deduce $\sigma_\ell(r_{t_1})=0$ as, otherwise (i.e., with $\sigma_\ell(r_{t_1})>0$), a follower would deviate from resource $r_{t_2}$ to $r_{t_1}$, decreasing her cost to $1$ or less.
	Let us focus on resources $r_i$ with $\nu_{r_i}=2$. 
	If $\sigma_\ell(r_i) < w_i$, the followers' cost of deviating to $r_i$ is 
	$$
	\frac{1}{w_i} (1-\sigma_\ell(r_i))+\frac{2s-\frac{1}{w_i}+1}{w_i}\sigma_\ell(r_i)<\frac{1}{w_{i}}(1-{w_i})+\frac{2s-\frac{1}{w_i}+1}{w_i} w_i=2s,
	 $$
	and they would deviate from $r_{t_1}$ to $r_i$, as their current cost is $2s$. 
	Instead, if $\sigma_\ell(r_i) > w_i$ the cost of any follower who selected $r_i$ is greater than $1$ and she would deviate to resource $r_{t_1}$ to decrease her cost to 1.
	Thus, $\sigma_\ell(r_i)=w_i$ for all resources $r_i$ with $\nu_{r_i}=2$.
	Now, let us consider a resource $r_i$ with $\nu(r_i) \neq 2$. 
	$\sigma_\ell(r_i) \leq \frac{1}{s}$ must hold, since the leader's cost is smaller than or equal to $1$.
	If $\nu(r_i)=1$, the followers' cost for resource $r_i$ is at least $2s \frac{1}{s}>1$ while, if $\nu(r_i)\ge 3$, the followers' cost for resource $r_i$ is at least $\frac{1}{w_i} > 1$.
	In both cases, the followers who selected resource $r_i$ would have an incentive to deviate to $r_{t_1}$ (as they would pay $1$).
	Thus, either $\nu_{r_i}=2$ or $\nu_{r_i}=0$.
	As a consequence, there are $K$ resources $r_i$ with $\nu_{r_i}=2$ and $\sigma_\ell(r_i)=w_i$ and $|S|-K$ resources $r_i$ with $\nu_{r_i}=0$.
	If the leader's cost for $\sigma$ is smaller than~$1$, there must be a subset $S'$ with $\sum_{x_i \in S'} \sigma_\ell(r_i) = \sum_{x_i \in S' }w_i > \frac{s-1}{s}$, which implies that $\sum_{x_i \in S'}\frac{x_i}{s} > \frac{s-1}{s}$ and $\sum_{x_i \in S'}x_i > s-1$. 
	Note that $x_i \in \mathbb{N}$ and $\sum_{x_i \in S'}x_i =s  \sum_{x_i \in S'}w_i = s \sum_{x_i \in S'}\sigma_\ell(r_i) \le s$. 
	Thus, $\sum_{x_i \in S'}x_i=s$ and $(S', S \setminus S')$ is solution to $K$--PARTITION.
%a partition of $S$ with $|S'| = K$.	
	%
	So far, we have proven that $\Gamma_\epsilon(S,K)$ admits an OSE in which the leader's cost is $\epsilon$ if and only if $(S,K)$ has answer \emph{yes} and that, otherwise, the leader's cost is greater than or equal  $1$.
	Let $\epsilon = \frac{1}{2^{n+r}}$.
	Assume that there exists a polynomial-time approximation algorithm $\mathcal{A}$ with approximation factor $\text{poly}(n,r)$, i.e., a polynomial function of $n$ and $r$.
	Assume $(S,K)$ has answer \emph{yes}. $\mathcal{A}$ applied to $\Gamma_\epsilon(S,K)$ would return a solution with leader's cost at most $\frac{1}{2^{n+r}} \ \text{poly}(n,r)$. Since, for $n$ and $r$ large enough, $\frac{1}{2^{n+r}} \ \text{poly}(n,r) < 1$, $\mathcal{A}$ would allows us to decide in polynomial time whether $(S,K)$ has a \emph{yes} or \emph{no} answer, a contradiction unless \Poly $=$ \NP.
\end{proof}

\subsection{Computational Complexity of Finding a PSE in SSSCGs}\label{sub_sec:reduction_same_actions_pes}
	
We focus now on the problem of computing a PSE in SSSCGs.
%, the proof of the following theorem introduces our main reduction:
%
%% %
%% Our results are based on a reduction from PARTITION, which reads as follows. 

%% \begin{definition}[PARTITION]\label{def:partition}
%% 	Given a finite set $S = \{ x_1, \ldots, x_{|S|} \}$ of integers $x_i \in \mathbb{N}$, is there a partition $(S',S \setminus S')$ of $S$, with $S' \subseteq S$, such that $ \sum_{x_i \in S'} x_i = \sum_{x_i \in S \setminus S'} x_i$?
%% \end{definition}
%
%
\begin{theorem}\label{thm:np_hard_pes_same}
	Computing a PSE in SSSCGs is \NPHard.
\end{theorem}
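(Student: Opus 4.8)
The plan is to reduce from PARTITION (Definition~\ref{def:partition}), reusing and adapting the gadget machinery of the proof of Theorem~\ref{thm:np_hard_opt_same} to the pessimistic setting. Given a PARTITION instance $S=\{x_1,\dots,x_{|S|}\}$, set $s=\tfrac12\sum_{x_i\in S}x_i$ and $w_i=x_i/s\le 1$, and build a symmetric SSCG $\Gamma_\epsilon(S)$ with a small parameter $\epsilon\in(0,1)$, one resource $r_i$ per element $x_i\in S$, a constant number of auxiliary ``gadget'' resources (playing a role analogous to $r_{t_1},r_{t_2}$ in Theorem~\ref{thm:np_hard_opt_same}), and polynomially many followers. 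The cost functions are non-monotonic---which is unavoidable, since by Section~\ref{sec:algorithms_same_actions_polynomial} the symmetric monotonic case is in \Poly---and are tuned so that the leader's cost takes only the two values $\epsilon$ and $4$, the value $4$ occurring exactly on ``packed'' follower configurations, and a packed configuration that is also a Nash equilibrium exists precisely when $S$ admits a balanced subset. The target equivalence, mirroring Theorem~\ref{thm:np_hard_pes}, is: $\Gamma_\epsilon(S)$ admits a PSE of leader cost $\epsilon$ if and only if $S$ has answer \emph{no}; otherwise every PSE has leader cost $4$.

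The first technical step is to characterize the leader's commitment and the follower game in any PSE of small value. Unlike the reductions of Theorems~\ref{thm:np_hard_opt} and~\ref{thm:np_hard_pes}, where $A_\ell$ was a singleton, here $A_\ell=R$, so the leader may randomize; I would neutralize this either by adding a resource that is almost free for the leader but prohibitive for the followers (forcing essentially all leader mass onto it) or, as in Theorem~\ref{thm:np_hard_opt_same}, by showing that any low-cost commitment must put weight $w_i$ on exactly the active resources. Either way the analysis reduces to a fixed (or essentially fixed) commitment $\sigma_\ell$, whose follower game is a symmetric SCG with finitely many pure-strategy Nash configurations; hence $\max_{\nu\in E^{\sigma_\ell}}c_\ell^{(\sigma_\ell,\nu)}$ is attained and a PSE exists, its value being the largest leader cost over $E^{\sigma_\ell}$.

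For the \emph{if} direction (answer \emph{no}) I would argue by contradiction: if some $\nu\in E^{\sigma_\ell}$ gives the leader cost $4$, then by the equilibrium inequalities $c_{i,f}^{\sigma_\ell}(\nu_i)\le c_{j,f}^{\sigma_\ell}(\nu_j+1)$ the resources in the active regime must carry congestions whose weights $w_i$ sum to $1$, i.e.\ encode a subset $S'\subseteq S$ with $\sum_{x_i\in S'}x_i=s$, contradicting the assumption; hence every equilibrium has leader cost $\epsilon$ and so does the PSE. For the \emph{only if} direction (answer \emph{yes} with witness $S'$) I would exhibit the packed configuration that places the active congestion on $\{r_i:x_i\in S'\}$ and parks the remaining followers on the gadget resources, and verify from the cost table that no follower can profitably deviate, so it lies in $E^{\sigma_\ell}$; since it gives leader cost $4$ and the followers are pessimistic, every PSE has value $4>\epsilon$. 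This shows the problem is \NPHard, and choosing $\epsilon=2^{-(n+r)}$ and repeating the gap argument used after Theorem~\ref{thm:np_hard_opt_same} yields the stronger conclusion that the problem is not in \PolyAPX~unless \Poly $=$ \NP.

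The main obstacle is the one common to all pessimistic reductions: one must control \emph{every} follower equilibrium, not just one convenient configuration. Concretely, the cost table must be calibrated so that, when $S$ has no balanced subset, no follower configuration is simultaneously a Nash equilibrium and costly for the leader---ruling out packed equilibria on unbalanced subsets, equilibria on fractional-looking congestions, and equilibria that mix the ``active'' and ``parked'' roles---while, when a balanced subset exists, the intended packed configuration really is an equilibrium; doing this with non-monotonic costs and without the leader-randomization gadget spawning extra equilibria is where essentially all the work lies.
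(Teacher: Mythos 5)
Your high-level plan (reduce from PARTITION, encode the weights $w_i=x_i/s$ in the leader's mixed strategy, add gadget resources and non-monotonic costs) is the right family of ideas, but the equivalence you target is the reverse of the one this gadget can deliver, and the reversal creates a quantifier problem you do not resolve. You claim ``PSE of cost $\epsilon$ iff the answer is \emph{no}.'' In your \emph{only if} direction (answer \emph{yes}) you exhibit a single packed configuration that is an NE under one commitment and conclude that every PSE has cost $4$; but the PSE value is $\min_{\sigma_\ell}\max_{\nu\in E^{\sigma_\ell}}c_\ell^{(\sigma_\ell,\nu)}$, so to push it up you must show that \emph{every} leader commitment admits a bad equilibrium --- exhibiting one bad NE under one $\sigma_\ell$ says nothing about the commitment the leader would actually choose. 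Your attempt to sidestep this by ``fixing'' the commitment (parking the leader on a resource that is almost free for her but prohibitive for the followers) is self-defeating: if that resource is cheap for the leader and avoided by the followers, her cost is low regardless of the followers' play, and the reduction cannot encode anything. It is also structurally implausible that the \emph{existence} of a balanced subset poisons \emph{every} commitment: in this kind of construction the bad equilibria arise precisely when the commitment fails to match the weights $w_i$, and a balanced subset is exactly what allows the leader to match them.

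The paper proves the opposite (and workable) equivalence: the PSE cost is $\epsilon$ iff PARTITION has answer \emph{yes}. If a balanced $S'$ exists, the leader commits to $\sigma_\ell(r_i)=w_i$ for $x_i\in S'$, and the non-monotonic follower costs are tuned (via the values $\tfrac{1}{w_i-1/s^4}$ at congestion $2$ and $\tfrac{1}{1-w_i-1/s^4}$ at congestion $3$) so that under this commitment the congestions $\nu_{r_i}\in\{1,3\}$ --- the only ones that are expensive for the leader --- give the followers an expected cost exceeding $1$, the cost of the parking resource $r_t$; hence \emph{no} NE can hurt the leader. Conversely, for an \emph{arbitrary} commitment, partitioning the resources into $R',R'',R'''$ according to whether $\sigma_\ell(r_i)$ is below, near, or above $w_i$ yields an explicit punishing NE ($\nu_{r_i}=1$ on $R'$, $\nu_{r_i}=3$ on $R'''$, the rest on $r_t$), so a cost-$\epsilon$ PSE forces $\sigma_\ell(r_i)\approx w_i$ on its support and hence a balanced subset. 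This universal construction over all commitments is exactly the piece missing from your proposal; without it (and without concrete cost functions realizing your calibration claims) the argument does not go through.
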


\begin{proof}
	We provide a reduction from PARTITION showing that the existence of a polynomial-time algorithm for computing a PSE in SSSCGs would allow us to solve PARTITION in polynomial time.
	
	Let, as in the previous proof, $s= \frac{1}{2}\sum_{x_i \in S} x_i$
	%
	%% Clearly, any partition of $S$ is defined by a subset $S' \subseteq S$ such that $\sum_{x_i \in S'} x_i = s$.
	%
	and $w_i=\frac{x_i}{s}$ for all $x_i \in S$.
	W.l.o.g., let us assume that $x_i \leq s$ for all $x_i \in S$, and, thus, $w_i \leq 1$.
	Given a PARTITION instance with a set $S$, we build an instance $\Gamma_\epsilon(S)$ of an SSSCG with $0<\epsilon<1$ such that the leader's cost in a PSE is $\epsilon$ if and only if the PARTITION instance admits answer \emph{yes}.
	
	\textbf{Mapping.} 	
	$\Gamma_\epsilon(S)$ is defined as follows:
	\begin{itemize}
		\item $N = F \cup \{ \ell \}$, with $|F| = 3 |S|$;
		\item $R = \{ r_t\} \cup \{ r_i \mid i \in S \}$;
	\end{itemize}
        with the following cost functions:

        \bigskip

	\begin{center}
	    {\renewcommand{\arraystretch}{1}\begin{tabular}{c|cc|cc}
		\hline
		$x$ & $c_{r_i,f}$ & $c_{r_i,\ell}$  & $c_{r_t,f}$ & $c_{r_t,\ell}$   \\
		\cline{1-5}
		$1$ & $0$ & $\epsilon$ & $1$ & $s^4$   \\
		$ 2 $ & $\frac{1}{w_i-\frac{1}{s^4}}$ & $s^4$ & $1$ & $s^4$   \\
		$ 3 $ & $\frac{1}{1-w_i-\frac{1}{s^4}}$ & $\epsilon$ & $1$ & $s^4$  \\
		$ 4 $ & $0$ & $s^4$ & $1$ & $s^4$  \\
		$ [5,\infty] $ & $s$ & $\epsilon$ & $1$ & $s^4$  \\
	    \end{tabular}}            
        \end{center}

	%% The cost functions are specified in Table~\ref{tab:costs_np_hard_pes_same}.
	%% %
	%% \begin{table}[h]
	%% 	\centering
	%% 		{\renewcommand{\arraystretch}{1}\begin{tabular}{c|cc|cc}
	%% 				\hline
	%% 				$x$ & $c_{r_i,f}$ & $c_{r_i,\ell}$  & $c_{r_t,f}$ & $c_{r_t,\ell}$   \\
	%% 				\cline{1-5}
	%% 				$1$ & $0$ & $\epsilon$ & $1$ & $s^4$   \\
	%% 				$ 2 $ & $\frac{1}{w_i-\frac{1}{s^4}}$ & $s^4$ & $1$ & $s^4$   \\
	%% 				$ 3 $ & $\frac{1}{1-w_i-\frac{1}{s^4}}$ & $\epsilon$ & $1$ & $s^4$  \\
	%% 				$ 4 $ & $0$ & $s^4$ & $1$ & $s^4$  \\
	%% 				$ [5,\infty] $ & $s$ & $\epsilon$ & $1$ & $s^4$  \\
	%% 			\end{tabular}}
	%% 		\caption{Cost functions of games $\Gamma_\epsilon(S)$ for the reduction in the proof of Theorem~\ref{thm:np_hard_pes_same}.}
	%% 		\label{tab:costs_np_hard_pes_same}
	%% 	\end{table}
	
        \bigskip

	Clearly, $\Gamma_\epsilon(S)$ can be built in polynomial time, as it features $n=3|S|$ players, and $r=|S|+1$ resources.
	
	\textbf{If.} 
	Suppose that the PARTITION instance admits a \emph{yes} answer, and let $S' \subseteq S$ be such that $\sum_{x_i \in S'} x_i = s$.
	We show that there exists a PSE $\sigma = (\sigma_\ell,\nu)$ in which the leader's cost is $\epsilon$.
	Let $\sigma_\ell(r_i)=w_i $ for all $ x_i \in S'$, $\sigma_\ell(r_i)=0 $ for all $ x_i \notin S'$, and $\sigma_\ell(r_t)=0$. 
	We prove that the leader's cost is $\epsilon$ in any $\nu \in \mathbb{R}^r$ which is an NE in the followers' game induced by $\sigma_\ell$.
	Assume, by contradiction, that there exists an NE $ \nu$ in which the leader's cost is greater than $\epsilon$. 
	This implies that there exists a resource $r_i$ with $x_i \in S'$ and either $\nu_{r_i} =1$ or $\nu_{r_i} =3$.
	If $\nu_{r_i}=1$, the cost incurred by the followers who select $r_i$ is $\frac{1}{w_i-\frac{1}{s^4}}w_i>1$ and any of them would deviate to resource $r_t$ to decrease her cost to $1$. 
	If $\nu_{r_i}=3$, the followers' cost is $\frac{1}{1-w_i-\frac{1}{s^4}}(1-w_i)>1$ and any of them would deviate to resource $r_t$.
	In both cases, this contradicts the fact that $\nu $ is an NE, and, thus, it must be that $\nu(r_i)\neq 1$ and $\nu(r_i)\neq 3 $ for all $ x_i \in S'$.
	As a result, the leader's cost must be $\epsilon$ in any NE $\nu$.
	
	\textbf{Only if.}
	Suppose that $\Gamma_\epsilon(S)$ admits a PSE $\sigma = (\sigma_\ell,\nu)$ in which the leader's cost is $\epsilon$. 
	Then, $\sigma_\ell(r_{t})=0$ and $\sigma_\ell(r_i) > 0$ only if resource $r_i$ is such that $\nu_{r_i} \neq 1$ and $\nu_{r_i}\neq 3$.
	Let us define $R' \subseteq R$ as the set of resources $r_i$ with $\sigma_\ell(r_i)\le w_i-\frac{1}{s^4}$, $R''$ as the set of resources $r_i$ with $w_i-\frac{1}{s^4} < \sigma_\ell(r_i)<w_i+\frac{1}{s^4}$, and  $R'''$ as the set of resources $r_i$ with $\sigma_\ell(r_i) \ge w_i+\frac{1}{s^4}$. 
	Let $\nu \in \mathbb{R}^r$ be a followers' configuration such that $\nu_{r_i}=1 $ for all $ r_i \in R'$, $\nu_{r_i}=0 $ for all $ r_i \in R''$, $\nu_{r_i}=3 $ for all $ r_i \in R'''$, and $\nu_{r_t}=3|S|-\sum_{r_i \in R \setminus \{r_t\}} \nu_{r_i}$. 
	First, we show that $\nu$ is an NE for $\sigma_\ell$.
	Indeed, all the followers who selected resource $r_t$ incurs a cost of $1$, all those who selected a resource $r_i \in R'$ incur a cost of $\frac{1}{w_i-\frac{1}{s^4}} \sigma_\ell(r_i) < 1$, and all those who selected resource $r_i \in R'''$ incur a cost of $\frac{1}{1-w_i-\frac{1}{s^4}} (1 - \sigma_\ell(r_i)) < 1$. 
	% 
	% Followers on $r_t$ have cost $1$, followers on a resource $r_i$ in $R'$ have cost smaller than 1($\frac{1}{w_i-1/s^4} \sigma(r_i)$), and also followers on a resource $r_i$ in $R'''$ have cost smaller than 1 ($\frac{1}{1-w_i-1/s^4}\sigma(r_i)$). 
	%
	 If any follower deviated, she would incur a cost greater or equal than $1$. 
	 In particular, no follower would deviate to a resource $r_{i} \in R'$, as she would incur a cost that is a convex combination of values greater than 1.
	 Similarly, no follower would deviate to a resource $r_{i} \in R''$ or $r_{i} \in R'''$, as she would incur a cost of, respectively, $\frac{1}{w_{i}-\frac{1}{s^4}} \sigma(r_{i})>1 $ or $s \sigma_\ell(r_{i})>1$.
	 Finally, no follower has an incentive to switch to resource $r_t$, as her cost would not increase.
	 This shows that, in the followers' game resulting from $\sigma_\ell$, there exists an NE such that, whenever the leader selects a resource $r_i$ in $R' \cup R''$, she incurs a cost of $s^4$.
	 Thus, given that the leader's cost in $\sigma$ is $\epsilon$, $R'=R'''=\emptyset$ must hold.
	 Let us define $S' \subseteq S$ as the set of integers $x_i \in S$ whose corresponding resource $r_i$ is such that $w_i-\frac{1}{s^4}<\sigma_\ell(r_i)<w_i+\frac{1}{s^4}$. 
	 For all the other resources $r_i$, it must be $\sigma_\ell(r_i)=0$.
	 Since $\sum_{x_i \in S'}\sigma_\ell(r_i)=1$, we have $\sum_{x_i \in S'}\left(w_i-\frac{1}{s^4}\right)<1<\sum_{x_i \in S'}\left(w_i+\frac{1}{s^4}\right)$, and, therefore, 
	 $$
	 \frac{s-1}{s}<1-\sum_{x_i \in S'} \frac{1}{s^4}<\sum_{x_i \in S'}w_i<1+\sum_{x_i \in S'} \frac{1}{s^4}<\frac{s+1}{s},
	 $$ 
	 which implies $s-1<\sum_{x_i \in S'}x_i<s+1$. 
	 Since $\sum_{x_i \in S'}x_i$ is an integer quantity, we deduce $\sum_{x_i \in S'}x_i=s$, which implies that $S'$ is a solution to PARTITION.
\end{proof}

Finally, we show that the same inapproximability result that we have established for OSEs also holds for PSEs.

\begin{theorem}
	The problem of computing an PSE in SSCGs is not in \PolyAPX~unless \Poly $=$ \NP.
\end{theorem}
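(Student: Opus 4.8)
The plan is to reuse the reduction $\Gamma_\epsilon(S)$ built in the proof of Theorem~\ref{thm:np_hard_pes_same}, exactly as was done for the analogous OSE result. In that proof we have already shown that, when the underlying PARTITION instance is a \emph{yes}-instance, $\Gamma_\epsilon(S)$ admits a PSE in which the leader's cost is $\epsilon$; since $\epsilon$ is the smallest cost the leader can ever incur in $\Gamma_\epsilon(S)$, in this case the optimal value $\min_{\sigma_\ell}\sup_{a\in E^{\sigma_\ell}}c_\ell^{(\sigma_\ell,a)}$ of the pessimistic bilevel problem equals $\epsilon$ and is attained. The core of the argument is therefore a matching lower bound on \emph{no}-instances: I would show that, when $S$ cannot be split into two halves of equal sum, every leader's commitment $\sigma_\ell$ admits a follower NE yielding leader's cost at least $1$, so that the optimal value is at least $1$. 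Granting this, the standard argument closes the proof: pick $\epsilon=\frac{1}{2^{n+r}}$ and observe that a hypothetical polynomial-time $\mathrm{poly}(n,r)$-approximation algorithm $\mathcal{A}$ run on $\Gamma_\epsilon(S)$ would, on \emph{yes}-instances, return a leader's commitment of pessimistic value at most $\frac{1}{2^{n+r}}\,\mathrm{poly}(n,r)<1$ for $n,r$ large enough, whereas on \emph{no}-instances any returned commitment has pessimistic value at least $1$; thresholding the returned value at $1$ thus decides PARTITION in polynomial time, a contradiction unless $\mathsf{P}=\mathsf{NP}$.

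For the lower bound on \emph{no}-instances, I would mimic the ``only if'' part of the proof of Theorem~\ref{thm:np_hard_pes_same}, but track the leader's cost quantitatively instead of only the value $\epsilon$. Fix any $\sigma_\ell$; if $\sigma_\ell(r_t)>\frac{1}{s^4}$ the leader already pays more than $1$ against every NE, so assume $\sigma_\ell(r_t)\le\frac{1}{s^4}$. Split the resources $r_i$ into $R'$, $R''$, $R'''$ according to whether $\sigma_\ell(r_i)\le w_i-\frac{1}{s^4}$, $w_i-\frac{1}{s^4}<\sigma_\ell(r_i)<w_i+\frac{1}{s^4}$, or $\sigma_\ell(r_i)\ge w_i+\frac{1}{s^4}$, and consider the configuration $\nu$ with $\nu_{r_i}=1$ on $R'$, $\nu_{r_i}=0$ on $R''$, $\nu_{r_i}=3$ on $R'''$, and the remaining followers on $r_t$, whose NE property has already been verified in that proof. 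Against $\nu$ the leader incurs cost $s^4$ on every resource of $R'\cup R'''$ (congestion $2$ or $4$) and on $r_t$, so a leader's cost below $1$ forces $\sum_{r_i\in R'\cup R'''}\sigma_\ell(r_i)+\sigma_\ell(r_t)<\frac{1}{s^4}$, hence $\sum_{r_i\in R''}\sigma_\ell(r_i)>1-\frac{1}{s^4}$. Combining this with the defining bounds of $R''$ and with $|S|\le 2s$, one obtains $\left|\sum_{x_i\in S''}x_i-s\right|<1$ for $S''=\{x_i:r_i\in R''\}$, and since the left-hand side is an integer it must equal $s$; thus $(S'',S\setminus S'')$ is a valid PARTITION, contradicting the \emph{no}-instance assumption. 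Hence no $\sigma_\ell$ achieves worst-case cost below $1$, and the optimal value is at least $1$.

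The main obstacle I expect is the bookkeeping in that last inequality chain: one must check that all the additive slacks of order $|S|/s^{3}$ introduced by the $\pm\frac{1}{s^4}$ tolerances around the $w_i$'s collapse, so that an integer strictly between $s-1$ and $s+1$ is pinned to $s$; this is precisely where the constants $s^4$ in the cost table and the normalization $x_i\le s$ from the construction of Theorem~\ref{thm:np_hard_pes_same} are used. I would also note that the possible non-existence of a PSE on \emph{no}-instances causes no trouble: the argument bounds the value $\min_{\sigma_\ell}\sup_{a\in E^{\sigma_\ell}}c_\ell^{(\sigma_\ell,a)}$, which is well defined and strictly positive regardless, while on \emph{yes}-instances this value is attained and equals $\epsilon$, so it is enough for the approximation algorithm to return a leader's commitment together with its worst-case follower cost.
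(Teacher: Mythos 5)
Your proposal is correct and follows essentially the same route as the paper's proof: it reuses $\Gamma_\epsilon(S)$ from Theorem~\ref{thm:np_hard_pes_same}, establishes the gap by showing that on \emph{no}-instances every commitment admits an NE of leader's cost at least $1$ via the same $R'$, $R''$, $R'''$ decomposition and the same $\pm\frac{1}{s^4}$ bookkeeping, and then sets $\epsilon=\frac{1}{2^{n+r}}$ to rule out any $\mathrm{poly}(n,r)$-approximation. Your closing remark that the argument only needs the well-defined value $\min_{\sigma_\ell}\sup_{a\in E^{\sigma_\ell}}c_\ell^{(\sigma_\ell,a)}$ rather than attainment of a PSE on \emph{no}-instances is a point the paper leaves implicit, but it does not change the substance of the proof.
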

\begin{proof}
	In order to prove the result, we rely on the reduction introduced in the proof of Theorem~\ref{thm:np_hard_pes_same}.
	We have already shown that in a PSE of $\Gamma_\epsilon(S)$ the leader's cost is $\epsilon$ if and only if the corresponding instance of PARTITION admits a \emph{yes} answer.
	Now, we show that, if the partition problem has \emph{no} answer, then the leader's cost in any PSE is greater than or equal to 1.
	Suppose, by contradiction, that there is a leader's strategy $\sigma_\ell$ such that all NEs of the resulting followers' game provide the leader with a cost smaller than $1$.
	Then, $\sigma_\ell(r_i) < \frac{1}{s^4}$ for all resources $r_i$ such that $\nu_{r_i} = 3$, $\sigma_\ell(r_i) < \frac{1}{s^4}$ for all resources $r_i$ such that $\nu_{r_i} = 1$, and $\sigma_\ell(r_t) < \frac{1}{s^4}$.
	If there is a resource $r_i$ with $\sigma_\ell(r_i)>w_i+\frac{1}{s^4}$, we have already proven that there is an NE with $\nu_{r_i}=3$ providing the leader with a cost greater than $s^4 \sigma_\ell(r_i) >1$.
	Consider the set $S'' \subseteq S$ of integers $x_i$ corresponding to resources $r_i$ with $\sigma_\ell(r_i) \le w_i-\frac{1}{s^4}$. We have already shown that there is an NE with $\nu_{r_i}=1 $ for all $ x_i \in S''$. 
	%
%	\todo{Where?}
	%
	Since the leader can select these resources with, at most, probability $\frac{1}{s^4}$ (as $\sum_{x_i \in S''} \sigma_\ell(r_i) \le \frac{1}{s^4}$), there is a set $S'$ of resources $r_i$ with $w_i-\frac{1}{s^4}<\sigma_\ell(r_i)<w_i+\frac{1}{s^4}$ and $\sum_{x_i \in S'} \sigma_\ell(r_i)\ge1-\frac{1}{s^4}$.
	From $\sum_{x_i \in S'}(w_i+\frac{1}{s^4})>\sum_{x_i \in S'}\sigma_\ell(r_i)\ge1-\frac{1}{s^4}$, we obtain $\sum_{x_i \in S'}w_i>1-\frac{1}{s^4}-\frac{|S'|}{s^4}>\frac{s-1}{s}$.
        From $\sum_{x_i \in S'}(w_i-\frac{1}{s^4})<\sum_{x_i \in S'}\sigma_\ell(r_i)\le 1$, we deduce $\sum_{x_i \in S'}w_i <1+\frac{|S'|}{s^4}<\frac{s+1}{s}$.
	Thus, $ s-1<\sum_{x_i \in S'}x_i<s+1$ and, since $\sum_{x_i \in S'}x_i$ is an integer quantity, we have that $\sum_{x_i \in S'}x_i=s$, showing that $S'$ is a solution to PARTITION.
	We have proven that $\Gamma_\epsilon(S)$ admits a PSE in which the leader's cost is $\epsilon$ if and only if the PARTITION instance has a \emph{yes} answer, while, otherwise, the leader's cost is greater than or equal to $1$.
	Let $\epsilon = \frac{1}{2^{n+r}}$.
	Assume that there exists a polynomial-time approximation algorithm $\mathcal{A}$ with approximation factor $\text{poly}(n,r)$, i.e., a polynomial function of $n$ and $r$.
	Assume the PARTITION instance has a answer \emph{yes}.
        $\mathcal{A}$ applied to $\Gamma_\epsilon(S)$ would return a solution with leader's cost at most $\frac{1}{2^{n+r}} \ \text{poly}(n,r)$. Since, for $n$ and $r$ large enough, $\frac{1}{2^{n+r}} \ \text{poly}(n,r) < 1$, $\mathcal{A}$ would allow us to decide in polynomial time whether the PARTITION instance has a \emph{yes} or \emph{no} answer, a contradiction unless $\mathsf{P} = \mathsf{NP}$.
\end{proof}

\section{Polynomial-Time Algorithms for SSSCGs}\label{sec:algorithms_same_actions_polynomial}

In the previous sections, we have shown that the problem of computing an O/PSE in SSCGs is, both in the general case and when restricting ourselves to SSSCGs, computationally intractable.
We provide, here, two positive results for SSSCGs, showing that, under certain conditions, the computation of an O/PSE in these games can be carried out in polynomial time.

First, we design a polynomial-time algorithm for finding an O/PSE in SSSCGs where the players' costs are monotonic functions of the resource congestion.
The algorithm relies on the fact that, as we will show, in such games the leader cannot decrease her cost by playing mixed strategies and, thus, pure-strategy commitments are sufficient.
We also exhibit a few examples showing that our algorithm cannot be easily extended to more general settings as, if the players have either nonidentical action spaces or nonmonotonic cost functions, the leader could be better off playing mixed strategies, thus violating the fundamental assumption of our algorithm.

Finally, we show that, if we restrict our attention to pure-strategy commitments in SSSCGs, an O/PSE can be found in polynomial time by means of a \emph{dynamic programming} (DP) algorithm, even when the players' cost functions are nonmonotonic.

%\subsection{SSSCGs with Monotonic Costs are Easy}\label{sub_sec:greedy}
\subsection{Polynomial-time algorithms for computing an O/PSE in SSSCGs with Monotonic Cost Functions}\label{sub_sec:greedy}

Let us recall that, in SSSCGs, an NE minimizing the social cost can be computed in polynomial time~\cite{ieong2005fast}.
It is also easy to show that an NE minimizing/maximizing the cost incurred by one player can be found efficiently, using an algorithm similar to that of~\cite{ieong2005fast} (see Section~\ref{sub_sec:dynamic_programming} for additional details).
As a consequence, computing an O/PSE would also be easy if, in the followers' game, an NE could only be induced by a leader's commitment in pure strategies.
%in SSSCGs with weakly monotonic cost functions 
This is, unfortunately, not the case, as the following examples shows:
%
%% \SC{
%% The result the algorithm we are about to introduce relies upon is that SSSCGs with weakly monotonic cost functions always admit, w.l.o.g., a pure-straetgy leader committment.
%% %
%% First, we illustrate that, in the general case, these games admit an O/PSE with a mixed-strategy leader's committment.
%% %
%% Then, we show that, w.l.o.g., games with an O/PSE with a mixed-stategy committment also admit at least another O/PSE with the a pure-strategy committment.
%% }
%
%First, notice that, in these games, searching for an O/PSE is not as easy as it might appear, for the following reason:
%
\begin{proposition}
	There are SSSCGs with weakly monotonic cost functions where some followers configurations are NEs only for a mixed-strategy commitment of the leader.
\end{proposition}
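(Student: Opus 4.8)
Since the statement only asserts existence, the plan is to exhibit one explicit SSSCG together with a followers' configuration realising it. First I would record why two resources cannot suffice, as this tells me how large an instance to look for: with $R=\{r_1,r_2\}$ every non-trivial NE inequality $c_{i,f}^{\sigma_\ell}(\nu_i)\le c_{j,f}^{\sigma_\ell}(\nu_j+1)$ is affine in the single scalar $p:=\sigma_\ell(r_1)$, so the set of commitments making a fixed configuration an NE is a sub-interval of $[0,1]$; asking both endpoints $p\in\{0,1\}$ to lie outside it forces a chain such as $c_{2,f}(\nu_2)>c_{1,f}(\nu_1+1)>c_{2,f}(\nu_2+1)$, which contradicts weak monotonicity (and if instead both inequalities cut out lower intervals, their intersection still contains $p=0$). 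I would therefore work with three resources, where the commitment set is a $2$-simplex whose feasible sub-region can avoid all three vertices.

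The instance I would use has $n=4$ (hence $|F|=3$) and $R=\{r_1,r_2,r_3\}$. Resources $r_1$ and $r_2$ share the followers' cost function $c$ with $c(0)=c(1)=0$, $c(2)=1$, and $c(x)=3$ for $x\ge 3$; resource $r_3$ has followers' cost $d$ with $d(0)=0$ and $d(x)=2$ for $x\ge 1$. (Which configurations are NEs depends only on the followers' costs, so the leader's costs may be taken identically $0$; all these functions are weakly monotonic.) The witnessing configuration is the balanced one, $\nu=(1,1,1)$.

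The verification then has two halves. For the first, I would check that $\nu$ is an NE for the \emph{mixed} commitment $\sigma_\ell=(\tfrac12,\tfrac12,0)$: each follower on $r_1$ or $r_2$ pays $\tfrac12$ and every available deviation costs $2$, while the follower on $r_3$ pays $d(2)=2$ and is exactly indifferent between staying and moving to $r_1$ or to $r_2$ (such a move costs $\tfrac12 c(3)+\tfrac12 c(2)=2$). For the second, I would check that \emph{no} pure commitment works: if the leader plays $r_1$ (resp.\ $r_2$) with probability $1$, the follower on $r_3$ strictly prefers the uncongested resource $r_2$ (resp.\ $r_1$), paying $c(2)=1<2$; and if the leader plays $r_3$ with probability $1$, the follower on $r_3$ still pays $d(2)=2$ but strictly prefers $r_1$, paying $c(2)=1<2$. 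As these exhaust the three pure commitments, $\nu$ is an NE for a mixed commitment and for no pure one, which is the claim.

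The one calibration I expect to require care is choosing the cost values so that all constraints hold simultaneously: $r_1$ and $r_2$ must be attractive enough that every pure commitment over-congests one of them and leaves the $r_3$-follower a strictly cheaper option, yet under the balanced mixed commitment their congestion cost must rise by exactly the amount ($c(2)$ and $c(3)$ averaging to $d(2)$) that removes this incentive. The only inequalities that actually distinguish the cases are the $r_3\to r_j$ deviations; the remaining ones — the $r_1$ and $r_2$ followers staying put, and the degenerate conditions of deviating onto one's own resource, which hold automatically under weak monotonicity — are immediate once these are arranged.
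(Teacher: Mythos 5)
Your proposal is correct and takes essentially the same approach as the paper: both exhibit a three-follower, three-resource SSSCG with weakly monotonic followers' costs in which the balanced configuration $\nu=(1,1,1)$ is an NE for a specific mixed commitment (placing probability $\tfrac12$ on two resources) but fails to be an NE under every pure commitment, differing only in the numerical cost values chosen. The extra remark on why two resources cannot suffice is a sound but inessential addition.
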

\begin{proof}
%
%% \begin{figure}[h]
%% 	\centering
%% 	{\renewcommand{\arraystretch}{1}\begin{tabular}{c|ccc}
%% 			\hline
%% 			$x$ & $c_{r_1,f}$ & $c_{r_2,f}$  & $c_{r_3,f}$ \\
%% 			\cline{1-4}
%% 			$1$ & $1$ & $4$ & $0$  \\
%% 			$ 2 $ & $3$ & $7$ & $2$ \\
%% 			$ 3 $ & $5$ & $7$ & $5$ \\
%% 		\end{tabular}}
%% 	\caption{Example of SSSCG with weakly monotonic cost functions, where $|F|=3$ and $R = \{r_1,r_2,r_3\}$. $\nu = [1,1,1]^T$ is an NE only for $\sigma_\ell(r_1) = \sigma_\ell(r_3) = \frac{1}{2}, \sigma_\ell(r_2) = 0$.}
%% 	\label{fig:ex_ne_in_mixed}
%% \end{figure}
%
Consider the following game with weakly monotonic cost functions, where $|F|=3$ and $R = \{r_1,r_2,r_3\}$.
\begin{center}
  \renewcommand{\arraystretch}{1}\begin{tabular}{c|cc|cc|cc}
    \hline
    $x$ & $c_{r_1,\ell}$ & $c_{r_1,f}$ & $c_{r_2,\ell}$ & $c_{r_2,f}$ & $c_{r_3,\ell}$ & $c_{r_3,f}$ \\
    \cline{1-7}
    $1$ & $1$ & $1$ & $3$ & $4$ & $1$ & $1$  \\
    $ 2 $ & $2$ & $3$ & $4$ & $5$ & $2$ & $3$ \\
    $ 3 $ & $3$ & $6$ & $5$ & $6$ & $3$ & $6$ \\
    \end{tabular}
\end{center}
The followers configuration $\nu = (1,1,1)^T$ in which each follower selects a different resource is not an NE if the leader commits to a pure strategy, while, for instance, it is an NE for $\sigma_\ell(r_1) = \sigma_\ell(r_3) = \frac{1}{2}$ and $\sigma_\ell(r_2) = 0$.
Moreover, notice that the game admits O/PSEs in which the leader's commitment is a mixed strategy.
For instance, for $\sigma_\ell(r_1) = \sigma_\ell(r_3) = \frac{1}{2}$ and $\sigma_\ell(r_2) = 0$, the leader incurs a cost of $2$, and there is no other strategy that allows her to pay less than 2.
\end{proof}

We now show that, when searching for an OSE in SSSCGs with weakly monotonic cost functions, one can w.l.o.g. restrict the attention to pure-strategy commitments of the leader as, given any OSE in which the leader plays a mixed strategy, one can easily construct another equilibrium in which, instead, the leader's strategy is pure.

%\SC{Throughout the section, we use the following notation, where $S = \{ i \in R \mid \sigma_\ell(i) > 0 \}$ is the set of resources played by the leader with positive probability in $\sigma_\ell$, and $i^\star \in \arg\min_{i \in S} c_{i,\ell}(\nu_i + 1)$.}

\begin{theorem}\label{thm:pure_strategy_opt}
	Every SSSCG with weakly monotonic cost functions admits an OSE $\sigma = (\sigma_\ell,\nu)$ in which $\sigma_\ell$ is pure.
\end{theorem}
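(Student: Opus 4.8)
The plan is to start from an arbitrary OSE $\sigma=(\sigma_\ell,\nu)$ in which $\sigma_\ell$ is mixed, and to show that there exists a pure strategy $\sigma_\ell'$ (placing all probability on a single resource) together with a followers' NE $\nu'$ for $\sigma_\ell'$ such that the leader's cost in $(\sigma_\ell',\nu')$ does not exceed $c_\ell^\sigma$. Since an OSE is by definition a minimizer of the leader's cost over all leader strategies and all induced equilibria, this immediately implies that a pure-strategy commitment attains the OSE value, and hence that a pure OSE exists. The natural candidate for the pure commitment is the ``cheapest'' resource in the support of $\sigma_\ell$, i.e.\ a resource $i^\star \in \arg\min_{i \in A_\ell : \sigma_\ell(i)>0} c_{i,\ell}(\nu_i+1)$; since $c_\ell^\sigma = \sum_{i} \sigma_\ell(i) c_{i,\ell}(\nu_i+1)$ is a convex combination, $c_{i^\star,\ell}(\nu_{i^\star}+1) \le c_\ell^\sigma$, so if the leader could deterministically pick $r_{i^\star}$ while keeping the followers at the configuration $\nu$, we would be done.

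The first step I would carry out is therefore: let $\sigma_\ell'$ be the pure strategy with $\sigma_\ell'(i^\star)=1$, and try to argue that $\nu$ itself (or a slight modification of it) remains an NE under $\sigma_\ell'$. The point is that moving from the mixed $\sigma_\ell$ to the pure $\sigma_\ell'$ changes the followers' effective cost functions $c_{i,f}^{\sigma_\ell}$: for resources outside the support the followers' congestion can only go up or stay the same when we deterministically add the leader's weight to $r_{i^\star}$ and remove it from other resources. Concretely, $c_{i,f}^{\sigma_\ell'}(x)=c_{i,f}(x)$ for $i\neq i^\star$ and $=c_{i,f}(x+1)$ for $i=i^\star$, whereas $c_{i,f}^{\sigma_\ell}(x)$ is a convex combination of $c_{i,f}(x)$ and $c_{i,f}(x+1)$; by weak monotonicity, $c_{i,f}^{\sigma_\ell'}(x)\ge c_{i,f}^{\sigma_\ell}(x)$ for $i=i^\star$ and $c_{i,f}^{\sigma_\ell'}(x)\le c_{i,f}^{\sigma_\ell}(x)$ for $i\neq i^\star$. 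So the only resource that becomes \emph{more} expensive is $r_{i^\star}$; everything else becomes (weakly) cheaper. If $\nu$ is already an NE under $\sigma_\ell'$ we are done; otherwise, the only followers who could profitably deviate are those currently on $r_{i^\star}$ (who now face a higher cost) while the attractive deviation targets are the resources that got cheaper.

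The main obstacle — and the place where the ``more involved'' analysis the authors allude to lives — is handling the case where $\nu$ is \emph{not} an NE for $\sigma_\ell'$: I would then run best-response dynamics from $\nu$ in the followers' SCG induced by $\sigma_\ell'$ (which terminates at a pure NE $\nu'$ since this is a congestion game, by~\cite{rosenthal1973class,monderer1996potential}), and I must control the leader's cost at the resulting $\nu'$. Here the key structural claim is that in the best-response process only resource $r_{i^\star}$ can \emph{lose} followers and only resources that got cheaper can gain them; in particular $\nu'_{i^\star} \le \nu_{i^\star}$, so $c_{i^\star,\ell}(\nu'_{i^\star}+1) \le c_{i^\star,\ell}(\nu_{i^\star}+1) \le c_\ell^\sigma$ by weak monotonicity of the leader's cost on $r_{i^\star}$. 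Since $\sigma_\ell'$ is pure on $r_{i^\star}$, the leader's cost under $(\sigma_\ell',\nu')$ is exactly $c_{i^\star,\ell}(\nu'_{i^\star}+1)\le c_\ell^\sigma$, which gives the desired inequality. Making the monotone-flow claim rigorous (that congestion on $r_{i^\star}$ never increases and that $\nu'$ is well-defined) is the technical heart; once that is in place, optimality of the original OSE forces equality $c_{i^\star,\ell}(\nu'_{i^\star}+1)=c_\ell^\sigma$, and $(\sigma_\ell',\nu')$ is the required pure OSE.
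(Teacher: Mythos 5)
Your overall strategy is exactly the one the paper uses: commit purely to $i^\star \in \arg\min_{i \in S} c_{i,\ell}(\nu_i+1)$ over the support $S$ of $\sigma_\ell$, note that $c_{i^\star,\ell}(\nu_{i^\star}+1) \le c_\ell^\sigma$ by convexity, run a better-response dynamics from $\nu$ in the followers' game induced by the pure commitment, and conclude via $\hat\nu_{i^\star}\le\nu_{i^\star}$ and weak monotonicity. However, the step you yourself flag as ``the technical heart'' --- that the congestion on $r_{i^\star}$ never increases along the dynamics --- is precisely the nontrivial content of the theorem, and your proposal asserts it rather than proves it. The informal justification you offer (``only $r_{i^\star}$ can lose followers and only resources that got cheaper can gain them'') does not follow from the pointwise cost comparison $c_{i,f}^{\sigma_\ell'} \le c_{i,f}^{\sigma_\ell}$ for $i\neq i^\star$ alone. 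The difficulty is dynamic: once followers start leaving $r_{i^\star}$, its joining cost under the pure commitment, $c_{i^\star,f}(\nu(t)_{i^\star}+2)$, can drop \emph{below} the effective cost $c_{i^\star,f}^{\sigma_\ell}(\nu_{i^\star}+1)$ it had at the original equilibrium, while the resources absorbing those followers become more congested than they were in $\nu$; a follower sitting on such an over-congested resource may then have a strict incentive to move \emph{onto} $r_{i^\star}$, and nothing in your argument excludes this.

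The paper closes this gap with two ingredients that are absent from your proposal. First, it imposes a tie-breaking rule on the dynamics --- a follower moves to $i^\star$ only if that is the \emph{only} cost-decreasing deviation available --- and justifies this as without loss of generality under the optimistic assumption (you only need to exhibit \emph{one} equilibrium of $E^{\hat\sigma_\ell}$ with low leader cost). Second, it runs a case analysis on the origin resource $j$ of a hypothetical deviation into $i^\star$: if $\nu(t)_j\le\nu_j$, the move contradicts the Nash condition of $\nu$ under the original $\sigma_\ell$ directly; if $\nu(t)_j>\nu_j$, a counting argument produces a resource $k$ with $\nu(t)_k<\nu_k$ which, by the tie-breaking rule, must be at least as attractive as $i^\star$, and again the original Nash condition yields a contradiction. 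Without the tie-breaking convention and this two-case argument, the monotone-flow claim --- and hence your proof --- does not go through; so while the plan is the right one, it is incomplete at exactly the point where the real work lies.
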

\begin{proof}
	Given an OSE $\sigma = (\sigma_\ell,\nu)$ with $\sigma_\ell$ mixed, we show how to construct another OSE $\hat \sigma = (\hat \sigma_\ell, \hat \nu)$ where $\hat \sigma_\ell$ is pure.
	Let $S = \{ i \in R \mid \sigma_\ell(i) > 0 \}$ be the set of resources played by the leader with positive probability in $\sigma_\ell$, and let $i^\star \in \arg\min_{i \in S} c_{i,\ell}(\nu_i + 1)$.
	Clearly, since the leader's utility is a convex combination, weighted by $\sigma_\ell$, of the costs she incurs in the resources chosen with positive probability, $c_\ell^{\sigma} = \sum_{i \in A_\ell} \sigma_\ell(i) c_{i,\ell}(\nu_i+1) \geq c_{i^\star,\ell}(\nu_{i^\star} + 1)$.
	Moreover, since $\nu$ is an NE for $\sigma_\ell$, the following holds:
	\begin{equation}\label{eq:starting_ne}
	c_{i, f}^{\sigma_\ell}(\nu_i) \leq c_{j, f}^{\sigma_\ell}(\nu_j + 1) \ \ \forall\ i \in R : \nu_i > 0, \ j \in R.
	\end{equation}
	Let us define $\hat \sigma_\ell$ such that $\hat \sigma_\ell(i^\star) = 1$. We now show that such $\hat \sigma_\ell$ is part of an OSE.
	Notice that $c_{i, f}^{\hat \sigma_\ell}(x) = c_{i, f}(x) \ \forall x \in \mathbb{N}$ for every $i \in R \setminus\{i^\star\}$ (as the leader does not select these resources), while $c_{i^\star, f}^{\hat \sigma_\ell}(x) = c_{i^\star, f}(x+1) \ \forall x \in \mathbb{N}$ (as the leader selects that resource).
	Since the followers behave optimistically, it is sufficient to exhibit a $\hat \nu \in E^{\hat \sigma_\ell}$ such that $\hat \sigma = (\hat \sigma_\ell, \hat \nu)$ satisfies $c_\ell^{\hat \sigma} \leq c_\ell^{\sigma}$.
	We construct a sequence of followers configurations reaching such $\hat \nu$.
	Given $\hat \sigma_\ell$, let us consider the sequence $(\nu(0)=\nu,\nu(1),\ldots,\nu(T)=\hat \nu)$ such that each configuration differs from the previous one in that a single follower has changed resource, strictly decreasing her cost.
        %% in the followers' game resulting from $\hat \sigma_\ell$.
	%
	Formally, this corresponds to showing that, for all $0 \leq t < T$, there is a pair $i,j \in R$ such that $\nu(t)_i > 0$, $\nu(t+1)_i = \nu(t)_i -1$, $\nu(t+1)_j = \nu(t)_j +1$, and $c_{i,f}^{\hat \sigma_\ell} (\nu(t)_i) > c_{j,f}^{\hat \sigma_\ell} (\nu(t+1)_j)$.
	Moreover, let us assume that a follower deviates to resource $i^\star$, i.e., $\nu(t+1)_{i^\star} > \nu(t)_{i^\star}$, only if this is the only way of strictly decreasing some follower's cost. This is w.l.o.g., as it is consistent with the assumption of optimism.
	Let us now prove the following:
	\begin{equation}\label{eq:non_increasing}
	\nu(t+1)_{i^\star} \leq \nu(t)_{i^\star} \ \ \forall\ 0 \leq t < T.
	\end{equation}
	By contradiction, assume there exists $0 \leq t < T$ such that $\nu(t+1)_{i^\star} > \nu(t)_{i^\star}$.
	Then, there is a follower in $\nu(t)$ who can strictly decrease her cost by choosing $i^\star$ instead of some resource $j \neq i^\star \in R : \nu(t)_j > 0$.
	Thus:
	\begin{equation}\label{eq:sequence_dev}
	c_{i^\star, f}^{\sigma_\ell}(\nu_{i^\star} + 1) \leq c_{i^\star, f}(\nu(t)_{i^\star} + 2) < c_{j,f}(\nu(t)_j),
	\end{equation}
	where the first inequality holds since $\nu(t)_{i^\star} = \nu_{i^\star} $.
	Two cases are possible.
	In the first one, $\nu(t)_j \leq \nu_j$, implying $c_{j,f}(\nu(t)_j) \leq c_{j,f}(\nu_j) \leq c_{j,f}^{\sigma_\ell}(\nu_j)$, which, together with Equations~\eqref{eq:starting_ne}~and~\eqref{eq:sequence_dev}, leads to a contradiction.
	In the second case, $\nu(t)_j > \nu_j$ implies that there exists $k \neq i^\star \in R$ such that $\nu(t)_k < \nu_k$ (and $\nu_k > 0$), otherwise $\sum_{i \in R} \nu(t)_i > n - 1$.
	It follows that $c_{j,f}(\nu(t)_j) \leq c_{k,f}(\nu(t)_k + 1)  \leq c_{k,f}^{\sigma_\ell}(\nu_k)$, where the first inequality holds since, due to our assumptions on the sequence, it cannot be $c_{j,f}(\nu(t)_j) > c_{k,f}(\nu(t)_k + 1)$ as $\nu(t+1)_{i^\star} > \nu(t)_{i^\star}$, and the second inequality follows from $\nu(t)_k < \nu_k$.
	Thus, Equations~\eqref{eq:starting_ne}~and~\eqref{eq:sequence_dev} give a contradiction.
	As a result, Equation~\eqref{eq:non_increasing} holds, and, thus, $\hat \nu_{i^\star} \leq \nu_{i^\star}$.
	Given the monotonicity of the costs, $\hat \sigma$ is an OSE.
\end{proof}

We prove, now, that a similar result holds for the pessimistic case, i.e., for computing a PSE.
The result is weaker though, as it requires the stronger assumption that the followers' cost functions be strictly monotonic.

\begin{theorem}\label{thm_pure_strategies_pes}
	Every SSSCG in which the leader's and followers' cost functions are, respectively, weakly and strictly monotonic, admits a PSE $\sigma = (\sigma_\ell,\nu)$ in which $\sigma_\ell$ is pure.
\end{theorem}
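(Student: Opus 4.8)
The plan is to follow the blueprint of the proof of Theorem~\ref{thm:pure_strategy_opt}, adapting it to the pessimistic (worst-case-over-followers'-equilibria) setting, and to isolate one claim from which everything follows. For a fixed leader commitment $\sigma_\ell$, write $f(\sigma_\ell) := \max_{\nu \in E^{\sigma_\ell}} c_\ell^{(\sigma_\ell,\nu)}$; this maximum is always attained since $E^{\sigma_\ell}$ is a finite set of configurations. The claim I would prove is: \emph{for every $\sigma_\ell \in \Delta_\ell$ there is a pure $\hat\sigma_\ell$ with $f(\hat\sigma_\ell) \le f(\sigma_\ell)$}. Granting the claim, since there are only finitely many pure commitments, $\min_{\hat\sigma_\ell \text{ pure}} f(\hat\sigma_\ell)$ is attained at some $\bar\sigma_\ell$, and the claim forces $f(\bar\sigma_\ell) \le f(\sigma_\ell)$ for every $\sigma_\ell$; hence $\bar\sigma_\ell$ together with a worst-case NE of the followers' game it induces is a PSE, and it is pure. (As a byproduct this shows that a PSE always exists under these monotonicity hypotheses, consistently with the earlier non-existence example, which crucially relied on a non-monotonic leader cost.)

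To prove the claim, fix a mixed $\sigma_\ell$ (nothing to do if it is already pure), let $\nu$ be a worst-case NE realizing $f(\sigma_\ell)$, let $S = \{ i \in R \mid \sigma_\ell(i) > 0 \}$ be its support, and pick $i^\star \in \arg\min_{i \in S} c_{i,\ell}(\nu_i + 1)$. Since $f(\sigma_\ell) = \sum_{i \in S}\sigma_\ell(i)c_{i,\ell}(\nu_i+1)$ is a convex combination, $c_{i^\star,\ell}(\nu_{i^\star}+1) \le f(\sigma_\ell)$, and because $|S| \ge 2$ we have $\sigma_\ell(i^\star) < 1$. Let $\hat\sigma_\ell$ be the pure commitment supported on $i^\star$. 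Under $\hat\sigma_\ell$ the leader's cost at any NE $\hat\nu$ is $c_{i^\star,\ell}(\hat\nu_{i^\star}+1)$, which is weakly increasing in $\hat\nu_{i^\star}$ by weak monotonicity of the leader's costs; so it suffices to show that \emph{every} NE $\hat\nu$ of $\hat\sigma_\ell$ satisfies $\hat\nu_{i^\star} \le \nu_{i^\star}$, since then every such $\hat\nu$ yields leader cost at most $c_{i^\star,\ell}(\nu_{i^\star}+1) \le f(\sigma_\ell)$, i.e., $f(\hat\sigma_\ell) \le f(\sigma_\ell)$.

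The heart of the argument is this last step, which I would prove by contradiction. Suppose some $\hat\nu \in E^{\hat\sigma_\ell}$ has $\hat\nu_{i^\star} > \nu_{i^\star}$; as the number of followers is fixed, there is a resource $j \ne i^\star$ with $\hat\nu_j + 1 \le \nu_j$, in particular $\nu_j > 0$. Invoking the equilibrium condition for $\hat\nu$ under $\hat\sigma_\ell$ (a follower on $i^\star$ not deviating to $j$), and using $c_{j,f}^{\hat\sigma_\ell} = c_{j,f}$, $c_{i^\star,f}^{\hat\sigma_\ell}(x) = c_{i^\star,f}(x+1)$, and weak monotonicity of the followers' costs, yields $c_{i^\star,f}(\nu_{i^\star}+2) \le c_{i^\star,f}(\hat\nu_{i^\star}+1) \le c_{j,f}(\hat\nu_j+1) \le c_{j,f}(\nu_j)$. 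Invoking instead the equilibrium condition for $\nu$ under $\sigma_\ell$ (the follower on $j$ not deviating to $i^\star$), and using $c_{j,f}^{\sigma_\ell}(\nu_j) \ge c_{j,f}(\nu_j)$ together with the fact that $c_{i^\star,f}^{\sigma_\ell}(\nu_{i^\star}+1)$ is a convex combination of $c_{i^\star,f}(\nu_{i^\star}+1)$ and $c_{i^\star,f}(\nu_{i^\star}+2)$, yields $c_{j,f}(\nu_j) \le c_{j,f}^{\sigma_\ell}(\nu_j) \le c_{i^\star,f}^{\sigma_\ell}(\nu_{i^\star}+1) \le c_{i^\star,f}(\nu_{i^\star}+2)$. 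Chaining the two, all inequalities collapse to equalities; in particular $c_{i^\star,f}^{\sigma_\ell}(\nu_{i^\star}+1) = c_{i^\star,f}(\nu_{i^\star}+2)$, and since $\sigma_\ell(i^\star) < 1$ this forces $c_{i^\star,f}(\nu_{i^\star}+1) = c_{i^\star,f}(\nu_{i^\star}+2)$, contradicting the strict monotonicity of the followers' cost function on $i^\star$.

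I expect the main obstacle to be precisely the closing-up of these two inequality chains and the bookkeeping of which equilibrium condition is applied to which configuration ($\hat\nu$ under $\hat\sigma_\ell$ versus $\nu$ under $\sigma_\ell$), together with the several background uses of weak monotonicity. This step is also what explains why the pessimistic statement is weaker than Theorem~\ref{thm:pure_strategy_opt}: the final contradiction genuinely needs a \emph{strict} increase $c_{i^\star,f}(\nu_{i^\star}+1) < c_{i^\star,f}(\nu_{i^\star}+2)$, whereas only weak monotonicity of the leader's costs is ever used.
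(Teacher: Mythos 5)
Your proof is correct and follows essentially the same route as the paper's: commit purely to the resource $i^\star$ minimizing the leader's cost over the support of $\sigma_\ell$, show $\hat\nu_{i^\star}\le\nu_{i^\star}$ for every NE $\hat\nu$ of the pure commitment by chaining the two equilibrium conditions into a cycle of inequalities that collapses to equalities, and derive a contradiction with the strict monotonicity of the followers' costs using $\sigma_\ell(i^\star)<1$. The only (welcome) difference is that by phrasing the key claim as ``for every $\sigma_\ell$ there is a pure commitment with no larger worst-case cost,'' you also establish that a PSE exists at all under these hypotheses, a point the paper's proof leaves implicit by starting from an assumed mixed PSE.
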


\begin{proof}
	Assume there exists a PSE $\sigma = (\sigma_\ell, \nu)$ in which $\sigma_\ell$ is mixed.
	We show that there must be another PSE $\hat \sigma = (\hat \sigma_\ell, \hat \nu)$ such that $\hat \sigma_\ell$ is pure.
	Let us define $i^\star \in R$ as in the proof of Theorem~\ref{thm:pure_strategy_opt}, so that $c_\ell^{\sigma} \geq c_{i^\star,\ell}(\nu_{i^\star} + 1)$ and Equation~\eqref{eq:starting_ne} holds.
	Given that the followers behave pessimistically, we need to show that, for every $\hat \nu \in E^{\hat \sigma_\ell}$, $\hat \sigma = (\hat \sigma_\ell, \hat \nu)$ satisfies $c_\ell^{\hat \sigma} \leq c_\ell^{\sigma}$.
	By contradiction, assume $c_\ell^{\hat \sigma} > c_\ell^\sigma$, which implies $c_{i^\star,\ell}(\hat \nu_{i^\star} + 1) > c_{i^\star,\ell}(\nu_{i^\star} + 1)$.
	It easily follows from the monotonicity of the costs that $\hat \nu_{i^\star} > \nu_{i^\star}$.
	Thus, there must be a resource $ j \in R$ such that $\hat \nu_{ j} < \nu_{ j}$ as, otherwise, $\sum_{i \in R} \hat \nu_i > n - 1$.
	Let us also remark that $\nu_{ j} > 0$.
	Thus:
	\begin{equation}\label{eq:pes_contradiction}
	\hspace{-0.05cm} c_{i^\star, f}^{\sigma_\ell}(\nu_{i^\star} + 1) \leq c_{i^\star, f}(\hat \nu_{i^\star} + 1) \leq c_{ j, f}(\hat \nu_{ j} + 1) \leq c_{ j, f}^{\sigma_\ell}(\nu_{ j}),
	\end{equation}
	where the first inequality follows from $\nu_{i^\star} < \hat \nu_{i^\star}$, the second one from the fact that $\hat \nu$ is an NE for $\hat \sigma_\ell$, and the third one from $\hat \nu_{ j} < \nu_{ j}$.
	Equation~\eqref{eq:starting_ne} implies $c_{ j, f}^{\sigma_\ell}(\nu_{ j}) \leq c_{i^\star,f}^{\sigma_\ell}(\nu_{i^\star}+ 1)$.
	If $c_{ j, f}^{\sigma_\ell}(\nu_{ j}) < c_{i^\star,f}^{\sigma_\ell}(\nu_{i^\star} + 1)$, then Equation~\eqref{eq:pes_contradiction} leads to a contradiction.
	Otherwise, if $c_{j, f}^{\sigma_\ell}(\nu_{ j}) = c_{i^\star,f}^{\sigma_\ell}(\nu_{i^\star} + 1)$ all the inequalities in Equation~\eqref{eq:pes_contradiction} hold as equations.
	This, however, implies $c_{i^\star, f}^{\sigma_\ell}(\nu_{i^\star} + 1) = c_{i^\star, f}(\hat \nu_{i^\star} + 1) $ and $c_{ j, f}(\hat \nu_{ j} + 1) = c_{ j, f}^{\sigma_\ell}(\nu_{ j})$, which is a contradiction since $\sigma_\ell$ is mixed and the followers' cost functions are strictly monotonic. 
\end{proof}

Theorem~\ref{thm_pure_strategies_pes} fails to hold if the followers' cost functions are weakly, rather than strongly, monotonic, as the following result shows:
\begin{proposition}
	There are SSSCGs with weakly monotonic cost functions where any PSE prescribes the leader to play a mixed strategy.
\end{proposition}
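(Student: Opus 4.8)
The plan is to give a small explicit counterexample, exploiting the fact that when the followers' cost functions are weakly but not strictly monotonic they may be \emph{flat}, so that the leader's presence on a resource does not change the followers' incentives at all. In that case the set of followers' NE configurations is the same for every leader commitment $\sigma_\ell$, and the pessimistic leader cost becomes the pointwise maximum of finitely many functions that are \emph{linear} in $\sigma_\ell$, one per NE configuration. Such a function is convex and piecewise linear, and its minimiser is typically a breakpoint, i.e.\ a properly mixed strategy; this is exactly the phenomenon we want to exhibit, and it is ruled out for strictly monotonic followers' costs by Theorem~\ref{thm_pure_strategies_pes}.

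Concretely, I would take $|F| = 2$ and $R = \{r_1, r_2\}$ (an SSSCG with three players), set $c_{r_1,f} = c_{r_2,f} = 0$ on every congestion level, and set $c_{r_1,\ell} = c_{r_2,\ell}$ with values $0, 1, 4$ at congestions $1, 2, 3$ (and $4$ from then on). All these are weakly monotonic, and $c_{i,f}(0) = c_{i,\ell}(0) = 0$ as required. The first step is to observe that, since every follower pays $0$ on either resource regardless of congestion and of $\sigma_\ell$, the three configurations $(2,0)$, $(1,1)$, $(0,2)$ (writing $(\nu_{r_1},\nu_{r_2})$) are all NEs for \emph{every} $\sigma_\ell = (p,\,1-p)$, and they are the only ones.

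Next I would compute the leader's cost in each of these configurations as a function of $p$: in $(2,0)$ it is $p\, c_{r_1,\ell}(3) + (1-p)\, c_{r_2,\ell}(1) = 4p$; in $(0,2)$ it is $4(1-p)$ by symmetry; in $(1,1)$ it is $p\, c_{r_1,\ell}(2) + (1-p)\, c_{r_2,\ell}(2) = 1$. Hence the pessimistic leader cost at $\sigma_\ell = (p,\,1-p)$ is $\max\{4p,\,1,\,4(1-p)\}$. Since for every $p$ at least one of $4p$, $4(1-p)$ is $\ge 2$, this quantity is always $\ge 2$, with equality only when $4p \le 2$ and $4(1-p) \le 2$, i.e.\ only at $p = \tfrac12$. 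The map $p \mapsto \max\{4p,\,1,\,4(1-p)\}$ is continuous on the compact set $[0,1]$, so it attains its minimum: the game admits a PSE, its value is $2$, and its only optimal commitment is the \emph{mixed} strategy $\sigma_\ell(r_1) = \sigma_\ell(r_2) = \tfrac12$ — in particular the two pure commitments both give pessimistic cost $4 > 2$, so no pure strategy is part of any PSE.

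I do not expect a real obstacle. The two points that need care are (i) checking that the NE configuration set really is $\{(2,0),(1,1),(0,2)\}$ and is independent of $\sigma_\ell$, which is immediate once the followers' costs are flat, and (ii) confirming that the infimum in Definition~\ref{def:pes} is attained, so that the statement is not vacuous — this is where having a bounded, continuous pessimistic objective on $[0,1]$ matters, in contrast with the earlier instance in which a PSE fails to exist. The conceptual content is simply that flattening the followers' costs decouples the followers' equilibrium set from the leader's commitment and turns the pessimistic problem into the minimisation of a convex piecewise-linear function whose minimiser is naturally interior.
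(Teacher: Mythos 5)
Your proposal is correct and follows essentially the same route as the paper's proof: both exhibit a tiny two-resource instance in which the followers' cost functions are flat, so every followers' configuration is an NE for every commitment, and the pessimistic leader cost becomes a convex piecewise-linear function of $\sigma_\ell(r_1)$ minimized uniquely at the uniform mixture (the paper uses a single follower with constant cost $1$; you use two followers with cost $0$, and you additionally make explicit that the minimum is attained so a PSE exists). The numerical verifications in your example all check out.
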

\begin{proof}
Consider the following instance of an SSSCG with weakly monotonic cost functions, where $|F|=1$ and $R = \{r_1,r_2\}$.
%
%% \begin{figure}[h]
%% 	\centering{\renewcommand{\arraystretch}{1}\setlength{\tabcolsep}{3pt}\begin{tabular}{c|cc|cc}
%% 			\hline
%% 			$x$ & $c_{r_1,\ell}$ & $c_{r_1,f}$  & $c_{r_2,\ell}$ & $c_{r_2,f}$ \\
%% 			\cline{1-5}
%% 			$ 1 $ & $1$ & $1$ & $1$ & $1$ \\
%% 			$ 2 $ & $2$ & $1$ & $2$ & $1$ \\
%% 		\end{tabular}}
%% 	\caption{Example of SSSCG with weakly monotonic cost functions, where $|F|=2$ and $R = \{r_1,r_2\}$. In any PSE of the game the leader's strategy prescribed $\sigma_\ell(r_1) = \sigma_\ell(r_2) = \frac{1}{2}$.}
%% 	\label{fig:pse_in_mixed}
%% \end{figure}

\begin{center}
  \renewcommand{\arraystretch}{1}\setlength{\tabcolsep}{3pt}\begin{tabular}{c|cc|cc}
    \hline
    $x$ & $c_{r_1,\ell}$ & $c_{r_1,f}$  & $c_{r_2,\ell}$ & $c_{r_2,f}$ \\
    \cline{1-5}
    $ 1 $ & $1$ & $1$ & $1$ & $1$ \\
    $ 2 $ & $2$ & $1$ & $2$ & $1$ \\
    \end{tabular}
\end{center}

Clearly, any followers' configuration is an NE in this game, independently of the leader's commitment.
Whenever the leader commits to a pure strategy, be it the selection of $r_1$ or $r_2$, the follower, due to the pessimistic assumption, chooses the same resource, so to have the leader incur a cost as large as possible (of $2$).
By uniformly randomizing between the two resources, though, the leader can reduce her cost to $2 \frac{1}{2} + \frac{1}{2} = 1.5$.
\end{proof}

%Theorems~\ref{thm:pure_strategy_opt}~and~\ref{thm_pure_strategies_pes} provide the fundamental insights which allow us to efficiently compute O/PSEs in SSSCGs with monotonic cost functions.
%
%Specifically,
Relying on Theorems~\ref{thm:pure_strategy_opt}~and~\ref{thm_pure_strategies_pes}, we can compute an OSE (respectively, PSE) by enumerating the leader's pure strategies and, for each of them, computing a followers' NE which results in the smallest (respectively, largest) leader's cost. 
Such NE can be computed by applying a simple greedy procedure that progressively assigns followers to resources.
At each step, a single follower is assigned to the resource which is cheapest for her, given how previously the considered followers have been distributed over the resources.
Moreover, at a given step, among all the resources minimizing followers' cost, the procedure selects one minimizing (respectively, maximizing) the leader's cost.
An O/PSE is then obtained by picking any leader's pure strategy for which the leader's cost is the smallest.

The detailed procedure is described in Algorithm~\ref{algorithm}, where, for some $S \subseteq R$ and $i \in S$, the function \texttt{O-Pick}$(S,i)$ (respectively, \texttt{P-Pick}$(S,i)$) returns some resource $j^\star \in S$, giving precedence to resources $j^\star \neq i$ (respectively, $j^\star = i$).

\begin{algorithm}[h!tp]
	
	\SetKwProg{Fn}{Function}{}{}
	\SetKwFunction{AlgName}{Compute-O/P-LFE}
	\SetKwFunction{Pick}{O/P-Pick}
	\SetKwInOut{Input}{input}\SetKwInOut{Output}{output}
	
	\Input{An SSSCG $\Gamma = (N,R,c_\ell,c_f)$}
	\Output{$\sigma$ that is an O/P-LFE of $\Gamma$}
	\BlankLine
	
	\Fn{\AlgName{$\Gamma$}}{
		
		\For{$i \in R$}{
			$\sigma_\ell[i] \leftarrow \sigma_\ell \in \Delta_\ell : \sigma_\ell(i) = 1$\;
			$\nu[i,j] \leftarrow 0 \ \ \forall\ i,j \in R$\;
			\While{$\sum_{j \in R} \nu[i,j] < n$}{
				$S \leftarrow \arg\min_{j \in R} \ \ c_{j,f}^{\sigma_\ell[i]} (\nu[i,j] + 1) $\;
				$j^\star \leftarrow $ \Pick{$S,i$}\;
				$\nu[i,j^\star]\leftarrow \nu[i,j^\star] + 1$\;
			}
			$c_\ell[i] \leftarrow c_{i,\ell}(\nu[i,i] + 1)$\;
		}
		$i^\star \leftarrow \arg\min_{i \in R} c_\ell[i]$\;
		\Return $\sigma = (\sigma_\ell[i^\star], \nu[i^\star,\cdot])$\;
		
	}
	
	\caption{Algorithm computing an O/PSE of an SSSCG.}
	\label{algorithm}
\end{algorithm}

Let us remark that, in Algorithm~\ref{algorithm}, $\sigma_\ell[\cdot]$, $\nu[\cdot,\cdot]$, and $c_\ell[\cdot]$ are the algorithm's variables, and, for every $i \in R$, $\nu[i,j]$ denotes the number of followers selecting resource $j \in R$ in the NE that is reached when the leader's strategy is $\sigma_\ell[i]$.

\begin{theorem}\label{thm:algorithm}
	Algorithm~\ref{algorithm} is correct and it runs in time $O(n r \log r)$.
\end{theorem}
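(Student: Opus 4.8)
The plan is to establish the two claims of Theorem~\ref{thm:algorithm} separately: first correctness, then the running time. For correctness, I would invoke Theorems~\ref{thm:pure_strategy_opt} and~\ref{thm_pure_strategies_pes}, which guarantee that an O/PSE with a pure-strategy leader commitment exists (in the pessimistic case under the stated strict-monotonicity hypothesis). Since the leader has only $r$ pure strategies, the outer \texttt{for} loop enumerates all candidates, and it remains to argue that, for each fixed pure commitment $\sigma_\ell[i]$, the inner \texttt{while} loop computes a followers' NE that achieves the smallest leader's cost (in the optimistic case) resp.\ the largest one (in the pessimistic case) among all NEs of the induced followers' game. The final \texttt{argmin} over $i \in R$ then returns the correct O/PSE.

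The core of the correctness argument is therefore the analysis of the greedy inner loop. First I would argue that the configuration it produces is an NE: this is essentially the standard ``greedy best-response'' argument for singleton congestion games (cf.~\cite{ieong2005fast}), using weak monotonicity of the induced followers' cost functions $c_{j,f}^{\sigma_\ell[i]}$ to show that once a follower is assigned to a cheapest available resource, no earlier-assigned follower regrets her choice, and that the final assignment admits no profitable unilateral deviation. The key structural fact is that the \emph{multiset of realized followers' costs} is invariant across all NEs of a symmetric singleton congestion game with (weakly) monotonic costs — NEs differ only in which identical followers sit on which resource, so the vector of congestions (hence of costs) is, up to relabeling of equally-behaving resources, forced. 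Given this, any NE assigns exactly the same congestion profile; the only freedom is a permutation among resources that are ``tied'' at a given step. The tie-breaking functions \texttt{O-Pick} / \texttt{P-Pick} exploit precisely this freedom: \texttt{O-Pick}$(S,i)$ avoids resource $i$ whenever some other resource in $S$ is equally cheap for the follower, so it minimizes $\nu[i,i]$ among all NEs and hence minimizes $c_{i,\ell}(\nu[i,i]+1)$; \texttt{P-Pick}$(S,i)$ does the opposite, maximizing $\nu[i,i]$. I would prove this by an exchange argument: given any NE $\nu'$ of the induced game, one can transform it step by step into the greedy output without changing the congestion multiset, only moving followers among tied resources in the direction dictated by \texttt{O-Pick}/\texttt{P-Pick}, which only decreases (resp.\ increases) $\nu_i$.

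For the running time, the analysis is routine: the \texttt{for} loop runs $r$ times; inside, initializing $\nu[i,\cdot]$ is $O(r)$ and the \texttt{while} loop runs $n$ times, each iteration computing an $\arg\min$ over $r$ resources — naively $O(r)$ per iteration, giving $O(nr)$ per outer iteration and $O(nr^2)$ overall, which is not yet the claimed bound. To reach $O(nr\log r)$ I would maintain, for each fixed $i$, a priority queue (heap) keyed by the follower-cost $c_{j,f}^{\sigma_\ell[i]}(\nu[i,j]+1)$ of the next slot on each resource $j$: each of the $n$ assignments performs one extract-min and one key-update, i.e.\ $O(\log r)$, for a total of $O(n\log r + r)$ per outer iteration and $O(nr\log r)$ overall (the final $\arg\min$ over $c_\ell[i]$ is a negligible $O(r)$). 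I expect the main obstacle to be the correctness step rather than the complexity bound — specifically, making fully rigorous the claim that the greedy procedure with the given tie-breaking rule attains the optimistic/pessimistic extreme over \emph{all} NEs of the induced followers' game, and not merely \emph{some} NE; this requires the invariance-of-congestion-profile lemma together with a careful exchange argument over the tied resources, and the pessimistic direction additionally needs the strict-monotonicity hypothesis inherited from Theorem~\ref{thm_pure_strategies_pes}.
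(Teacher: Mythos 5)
Your reduction to pure commitments via Theorems~\ref{thm:pure_strategy_opt} and~\ref{thm_pure_strategies_pes}, your claim that the greedy loop outputs an NE, and your heap-based $O(nr\log r)$ running-time analysis all match the paper. The gap is in the step you yourself flag as the hard one: your argument that the greedy output attains the optimistic (resp.\ pessimistic) extreme over all NEs rests on the ``key structural fact'' that the congestion profile (equivalently, the multiset of realized followers' costs) is invariant across all NEs of the induced symmetric game, with NEs differing only by permutations among tied resources. This is false, even for strictly monotonic costs. Take $|F|=2$, $R=\{r_1,r_2\}$, $c_{r_1,f}(1)=1$, $c_{r_1,f}(2)=5$, $c_{r_2,f}(1)=5$, $c_{r_2,f}(2)=6$: both $\nu=(1,1)$ (realized costs $\{1,5\}$) and $\nu=(2,0)$ (realized costs $\{5,5\}$) satisfy the equilibrium condition $c_{i,f}(\nu_i)\leq c_{j,f}(\nu_j+1)$, yet they have different congestion profiles and different cost multisets. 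Consequently your exchange argument, which only moves followers among tied resources while preserving the congestion multiset, cannot reach an arbitrary NE from the greedy output, and the optimality of the tie-breaking rule is not established.

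The paper avoids any such global invariance claim and argues locally. For the optimistic case: if some NE $\hat\nu$ for $\sigma_\ell[i]$ had $c_{i,\ell}(\hat\nu_i+1)<c_{i,\ell}(\nu_i+1)$, then $\hat\nu_i<\nu_i$ by monotonicity of the leader's costs, so some $j\neq i$ has $\hat\nu_j>\nu_j$. Consider the greedy step at which $\nu[i,i]$ reaches its final value $\nu_i$, and let $\bar\nu_j$ be the congestion of $j$ at that step. The tie-breaking rule \texttt{O-Pick} forces the \emph{strict} inequality $c_{i,f}^{\sigma_\ell[i]}(\nu_i)<c_{j,f}^{\sigma_\ell[i]}(\bar\nu_j+1)$ (a tie would have sent the follower to $j$ instead of $i$), and chaining $\bar\nu_j\leq\nu_j<\hat\nu_j$ with monotonicity gives $c_{i,f}^{\sigma_\ell[i]}(\hat\nu_i+1)\leq c_{i,f}^{\sigma_\ell[i]}(\nu_i)<c_{j,f}^{\sigma_\ell[i]}(\hat\nu_j)$, contradicting that $\hat\nu$ is an NE. You would need to replace your invariance-plus-exchange plan with a localized contradiction of this kind (and its mirror image for \texttt{P-Pick}); the remainder of your proposal is sound.
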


\begin{proof}
  We rely on the pseudocode reported in Algorithm~\ref{algorithm} to show its correctness.
  Thanks to Theorems~\ref{thm:pure_strategy_opt}~and~\ref{thm_pure_strategies_pes}, we only need to prove that, for every $i \in R$ and after the execution of the while loop, the followers configuration $\nu$ is such that, for all $j \in R$, $\nu_j = \nu[i,j]$ is an NE for $\sigma_\ell[i]$ minimizing (or maximizing) the leader's cost.
	First, let us show that $\nu$ is an NE. 
	Suppose, by contradiction, that it is not.
	Then, there exists $j \in R : \nu_j > 0$ and $k \in R$ such that $c_{j,f}^{\sigma_\ell[i]} (\nu_j) > c_{k,f}^{\sigma_\ell[i]} (\nu_k + 1)$.
	Let $\bar \nu_k$ be the value of $\nu[i,k]$ during the step in which $\nu[i,j]$ is set to its final value $\nu_j$.
	Clearly, $c_{j,f}^{\sigma_\ell[i]} (\nu_j) > c_{k,f}^{\sigma_\ell[i]} (\nu_k + 1) \geq c_{k,f}^{\sigma_\ell[i]} (\bar \nu_k + 1)$, and the algorithm would have not incremented $\nu[i,j]$ during that step, a contradiction.
	Let us show now that $(\sigma_\ell[i], \nu)$ is an O/PSE.
        In the remainder of the proof, we focus on the optimistic case (the pessimistic one can be treated analogously).
	Suppose, by contradiction, that $\nu$ is not an NE minimizing the leader's cost for $\sigma_\ell[i]$ (i.e., not an OSE).
	Then, there exists another NE $\hat \nu$ for $\sigma_\ell[i]$ such that $c_{i,\ell}(\hat \nu_i + 1) < c_{i,\ell}(\nu_i + 1)$.
	Given the monotonicity of the costs, $\hat \nu_i < \nu_i$ must hold.
	Therefore, there must exist some $j \neq i \in R$ such that $\hat \nu_j > \nu_j$. %\tinytodo{Citare ``pigeonhole principle''?}
	Let us consider the step in which $\nu[i,i]$ is set to $\nu_i$, and let $\bar \nu_j$ be the value of $\nu[i,j]$ during that step.
	Note that $c_{i,f}^{\sigma_\ell[i]}(\nu_i) < c_{j,f}^{\sigma_\ell[i]}(\bar \nu_j + 1)$ must hold as, otherwise, the algorithm would have incremented $\nu[i,j]$ instead of $\nu[i,i]$.
	But, then, $c_{j,f}^{\sigma_\ell[i]}(\bar \nu_j +1 ) \leq c_{j,f}^{\sigma_\ell[i]}(\nu_j +1) \leq c_{j,f}^{\sigma_\ell[i]}(\hat \nu_j)$, which implies $c_{i,f}^{\sigma_\ell[i]}(\hat \nu_i + 1) \leq c_{i,f}^{\sigma_\ell[i]}(\nu_i) < c_{j,f}^{\sigma_\ell[i]}(\bar \nu_j + 1) \leq c_{j,f}^{\sigma_\ell[i]}(\hat \nu_j)$, contradicting the fact that $\hat \nu$ is an NE for the given $\sigma_\ell[i]$.

	Since the while loop is executed exactly $r$ times, each execution carries out $n$ steps.
	Using efficient data structures, %% \tinytodo{menzionare quali data structures} 
        each step takes time $O(\log r)$.
	Thus, the overall running time is $O(nr \log r)$.
\end{proof}

Next, we provide a characterization of O/PSEs in SSSCGs with monotonic costs under the additional assumption that leader's and followers' costs be equal, which may be of independent interest besides the computation of such equilibria:
%
%Specifically, in such games, all O/PSEs in which the leader plays a pure strategy are NEs in the game where all players play simultaneously (i.e., without leadership).
%
\begin{theorem}\label{thm:ne_charaterization}
	Given an SSSCG with monotonic costs and $c_\ell =c_f = \{ c_i \}_{i\in R}$, any O/PSE $\sigma = (\sigma_\ell, a)$ with $\sigma_\ell$ pure is an NE.
\end{theorem}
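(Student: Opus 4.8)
The plan is to show that the action profile $\sigma=(\sigma_\ell,a)$, read as a strategy profile of the underlying $n$-player \emph{symmetric} singleton congestion game in which the leader is just an ordinary player with cost $c_\ell=c_f=\{c_i\}_{i\in R}$, admits no profitable unilateral deviation. Write $\sigma_\ell$ as the pure commitment $\sigma_\ell(i^\star)=1$, let $\nu$ be the followers' configuration induced by $a$, and let $\mu$ be the resulting \emph{total} configuration, i.e.\ $\mu_{i^\star}=\nu_{i^\star}+1$ and $\mu_i=\nu_i$ for $i\neq i^\star$; the leader's cost in $\sigma$ is then $c_{i^\star}(\mu_{i^\star})$.

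First I would dispatch the followers. Using $c_{i,\ell}=c_{i,f}$ and the definition of $c_{i,f}^{\sigma_\ell}$, a follower sitting on resource $k$ pays $c_{k,f}^{\sigma_\ell}(\nu_k)$, which equals $c_{i^\star}(\nu_{i^\star}+1)=c_{i^\star}(\mu_{i^\star})$ if $k=i^\star$ and $c_k(\nu_k)=c_k(\mu_k)$ otherwise; in both cases it is exactly $c_k(\mu_k)$, her cost in the configuration $\mu$. Likewise her cost after switching to $l$ equals $c_{l,f}^{\sigma_\ell}(\nu_l+1)=c_l(\mu_l+1)$. Hence the equilibrium condition ``$\nu\in E^{\sigma_\ell}$'' is literally the statement that no follower can profitably deviate from $\mu$. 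Moreover, if $\nu_{i^\star}\ge 1$, applying this condition to the follower who is on $i^\star$ already yields $c_{i^\star}(\mu_{i^\star})\le c_l(\mu_l+1)$ for all $l\neq i^\star$, which is precisely the leader's no-deviation inequality; so in this case $\mu$ is an NE and we are done.

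The crux is the remaining case $\nu_{i^\star}=0$ (the leader is alone on $i^\star$, so her cost is $c_{i^\star}(1)$), where I must still rule out a profitable leader deviation. Suppose, for contradiction, that $c_{j^\star}(\mu_{j^\star}+1)<c_{i^\star}(1)$ for some $j^\star\neq i^\star$ (note $\mu_{j^\star}=\nu_{j^\star}$). Consider instead the commitment $\sigma_\ell'$ with $\sigma_\ell'(j^\star)=1$. I claim that for \emph{every} equilibrium $\rho$ of the induced followers' game the leader's cost $c_{j^\star}(\rho_{j^\star}+1)$ is strictly below $c_{i^\star}(1)$; since such an equilibrium always exists (a congestion game has a pure NE) and since the bound holds for all of them, both the optimistic value ($\min$ over $\rho$) and the pessimistic value ($\max$ over the finite set of $\rho$) of committing to $j^\star$ are strictly below $c_\ell^\sigma=c_{i^\star}(1)$, contradicting that $\sigma$ is an O/PSE. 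To prove the claim, assume $c_{j^\star}(\rho_{j^\star}+1)\ge c_{i^\star}(1)>c_{j^\star}(\mu_{j^\star}+1)$; weak monotonicity of $c_{j^\star}$ forces $\rho_{j^\star}>\mu_{j^\star}=\nu_{j^\star}$. Since $\rho$ and $\nu$ are both configurations of the $n-1$ followers, there is a resource $m$ with $\rho_m<\nu_m$, and $m\notin\{i^\star,j^\star\}$ (because $\nu_{i^\star}=0\le\rho_{i^\star}$ and $\rho_{j^\star}>\nu_{j^\star}$). In $\mu$ a follower on $m$ pays $c_m(\nu_m)$ and does not want to move to $j^\star$, so $c_m(\nu_m)\le c_{j^\star}(\nu_{j^\star}+1)=c_{j^\star}(\mu_{j^\star}+1)$. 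Chaining and using $\rho_m+1\le\nu_m$, $c_m(\rho_m+1)\le c_m(\nu_m)\le c_{j^\star}(\mu_{j^\star}+1)<c_{i^\star}(1)\le c_{j^\star}(\rho_{j^\star}+1)$, so a follower on $j^\star$ in $\rho$ strictly prefers $m$ — contradicting that $\rho$ is an equilibrium. This establishes the claim, hence the leader has no profitable deviation, hence $\mu$ is an NE.

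I expect the main obstacle to be exactly this last case, $\nu_{i^\star}=0$: when the leader occupies a resource by herself the followers' equilibrium condition says nothing about her incentive to move, and one cannot simply transplant $\nu$ into the game induced by a different commitment, since the vacated resource $i^\star$ becomes attractive and followers may flock to it. The device that makes it work is to reason about \emph{all} equilibria of the alternative followers' game at once and to exploit monotonicity through the single resource $m$ on which $\rho$ is under-loaded relative to $\nu$; lining up the counting ($m$ exists, $m\neq i^\star,j^\star$) with the monotonicity chain is the delicate point, but note that only weak monotonicity is used. The remaining ingredients — the cost translation via $c_\ell=c_f$ and the $\nu_{i^\star}\ge 1$ subcase — are routine.
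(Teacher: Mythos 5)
Your proof is correct and follows essentially the same route as the paper's: followers are dispatched by the equilibrium condition, the case $\nu_{i^\star}>0$ by the follower sitting on $i^\star$, and the case $\nu_{i^\star}=0$ by showing that a profitable leader deviation to $j^\star$ would make \emph{every} followers' equilibrium under the commitment to $j^\star$ strictly cheaper for the leader, contradicting that $\sigma$ is an O/PSE. Your use of weak monotonicity to force $\rho_{j^\star}>\nu_{j^\star}$ merely condenses the paper's three-case split on $\hat\nu_j$ versus $\nu_j$ into one step, and your monotonicity chain through the under-loaded resource $m$ is the paper's chain through its resource $k$ read in the opposite direction.
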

\begin{proof}
	Let $\sigma = (\sigma_\ell, \nu)$ be an O/PSE with $\sigma_\ell(i^\star) = 1$ for some $i^\star \in R$.
	Clearly, given that $\nu \in E^{\sigma_\ell}$, $c_i^{\sigma_\ell}(\nu_i) \leq c_j^{\sigma_\ell}(\nu_j + 1)$ holds for every $i \in R : \nu_i > 0$ and for every $j \in R$. 
	Therefore, no follower has an incentive to change resource. Thus, it is sufficient to prove that the leader has no incentive to deviate from resource $i^\star$ unilaterally, i.e., without assuming that the followers would react to her deviation (which is the case in the Stackelberg setting).
	If $\nu_{i^\star} > 0$, we have $c_{i^\star}(\nu_{i^\star} + 1) = c_{i^\star}^{\sigma_\ell}(\nu_{i^\star}) \leq c_j^{\sigma_\ell}(\nu_j + 1) = c_j(\nu_j + 1)$ for every $j\neq i^\star \in R$, and it immediately follows that the leader does not deviate and $\sigma$ is an NE.
	The case in which $\nu_{i^\star} = 0$ is more involved.
	By contradiction, assume that $\sigma$ is not an NE.
	As a consequence, the leader must have an incentive to deviate to some resource $j \neq i^\star \in R$, i.e., $c_{i^\star}( \nu_{i^\star} + 1) = c_{i^\star}(1) > c_j(\nu_j + 1)$.
	Let $\hat \sigma_\ell$ with $\hat \sigma_\ell(j) = 1$ be the strategy the leader commits to.
	We prove (by contradiction) that, for every $\hat \nu \in E^{\hat \sigma_\ell}$, $\hat \sigma = (\hat \sigma_\ell, \hat \nu)$ provides the leader with a cost strictly smaller than $c_{i^\star}(1)$. 
	Assume $c_j(\hat \nu_j + 1) \geq c_{i^\star}(1)$. 
	Three cases are possible.
	In the first one, $\hat \nu_j < \nu_j$ and $c_{i^\star}(1) > c_j(\nu_j + 1) \geq c_j(\hat \nu_j + 1) \geq c_{i^\star}(1)$.
	In the second one, $\hat \nu_j = \nu_j$ and $c_j(\hat \nu_j + 1) \geq c_{i^\star}(1) > c_j(\nu_j + 1)$.
	In the third case, $\hat \nu_j > \nu_j$, which implies that there must be a resource $k \neq i^\star \in R$ such that $\hat \nu_k < \nu_k$, and $c_{i^\star}(1) > c_j(\nu_j + 1) \geq c_k(\nu_k) \geq c_k(\hat \nu_k +1 ) \geq c_j(\hat \nu_j + 1) \geq c_{i^\star}(1)$.
	As all the cases lead to a contradiction, it must be $c_j(\hat \nu_j + 1) < c_{i^\star}(1)$.
	The proof is complete as, in $\hat \sigma$, the leader's cost is $c_j(\hat \nu_j + 1) < c_{i^\star}(1)$, contradicting the fact that $\sigma$ is an O/PSE.
\end{proof}

\subsection{On the Necessity of the Assumptions We Made}

We provide some examples showing why Algorithm~\ref{algorithm} cannot be easily extended to more general settings---the reason being that Theorems~\ref{thm:pure_strategy_opt}~and~\ref{thm_pure_strategies_pes} do not hold if the assumption of monotonicity is dropped.

First, let us analyze the general case of SSSCGs in which the costs need not be monotonic functions of the resource congestion:
\begin{proposition}
There are SSSCGs in which, even if the cost functions of one player only are nonmonotonic, be it the leader or one of the followers, any O/PSE prescribes the leader to play a mixed strategy.
  %% There are SSSCGs such that:
  %% \begin{itemize}
  %% \item the followers' costs only are nonmonotonic, and any O/PSE prescribes the leader to play a mixed strategy;
  %% \item the leader's costs only are nonmonotonic, and any O/PSE prescribes the leader to play a mixed strategy.
  %% \end{itemize}
\end{proposition}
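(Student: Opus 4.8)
The plan is to prove the proposition by exhibiting two tiny SSSCGs with $|F|=1$ and $R=\{r_1,r_2\}$ (both trivially symmetric, since $A_p=R$ for every player): one in which only the leader's cost functions are nonmonotonic, and one in which only the (single) follower's cost function is nonmonotonic. In each case I would verify by a short finite computation that every pure leader commitment yields a strictly larger leader's cost than the uniform mixed commitment $\sigma_\ell(r_1)=\sigma_\ell(r_2)=\tfrac12$, under \emph{both} tie-breaking rules, and that the optimal value is in fact attained uniquely at this mixed commitment — so that an O/PSE exists and necessarily has the leader randomizing.

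\textbf{Nonmonotonic leader.} I would set, for $i\in\{1,2\}$, $c_{r_i,\ell}(1)=3$, $c_{r_i,\ell}(2)=0$ (nonmonotonic: the leader gains from sharing) and $c_{r_i,f}(1)=1$, $c_{r_i,f}(2)=10$ (strictly monotonic). First I would note that a pure commitment, say $\sigma_\ell(r_1)=1$, makes $r_2$ the follower's \emph{unique} best response ($1$ vs.\ $10$), so the leader sits alone on $r_1$ and pays $c_{r_1,\ell}(1)=3$; by the $r_1\leftrightarrow r_2$ symmetry the other pure commitment also costs $3$, and as the best response is unique there is no tie, so the optimistic and pessimistic costs of every pure commitment both equal $3$. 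Next I would parametrize $\sigma_\ell(r_1)=p$: the follower's expected costs are $1+9p$ on $r_1$ and $10-9p$ on $r_2$, so she strictly prefers $r_1$ for $p<\tfrac12$, $r_2$ for $p>\tfrac12$, and is indifferent at $p=\tfrac12$; substituting into $c_\ell^\sigma$ yields leader's cost $3-3p$, $3p$, and $\tfrac32$ (the latter for either follower choice at $p=\tfrac12$). Hence the leader's cost is minimized uniquely at $p=\tfrac12$ with value $\tfrac32<3$ for both settings, so the (unique) OSE and PSE both prescribe the mixed commitment $(\tfrac12,\tfrac12)$.

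\textbf{Nonmonotonic follower.} Symmetrically, I would take $c_{r_i,\ell}(1)=1$, $c_{r_i,\ell}(2)=5$ (monotonic: the leader prefers to be alone) and $c_{r_i,f}(1)=10$, $c_{r_i,f}(2)=0$ (nonmonotonic: the follower prefers to share). A pure commitment $\sigma_\ell(r_1)=1$ makes $r_1$ the follower's unique best response ($0$ vs.\ $10$), so the leader pays $c_{r_1,\ell}(2)=5$, and by symmetry every pure commitment costs $5$ in both settings. Writing $\sigma_\ell(r_1)=p$, the follower's expected costs are $10(1-p)$ on $r_1$ and $10p$ on $r_2$, giving leader's cost $1+4p$ for $p>\tfrac12$, $5-4p$ for $p<\tfrac12$, and $3$ (either choice) at $p=\tfrac12$; the minimum is again attained uniquely at $p=\tfrac12$ with value $3<5$, so the unique O/PSE is the mixed commitment $(\tfrac12,\tfrac12)$.

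The only point that needs any care is making the PSE statement non-vacuous: one must check that the pessimistic value is actually \emph{attained}, which here holds because at $p=\tfrac12$ the follower's two equally-good responses give exactly the same leader's cost, so there is no discontinuity preventing the infimum from being a minimum (contrast the earlier Proposition where a PSE fails to exist). Everything else is a routine check over finitely many cases and presents no real obstacle.
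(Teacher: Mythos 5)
Your proposal is correct and follows essentially the same route as the paper: two symmetric two-resource, single-follower instances (one with nonmonotonic leader costs, one with nonmonotonic follower costs) in which every pure commitment is strictly beaten by the uniform mix, which is the unique optimum for both the optimistic and pessimistic tie-breaking rules. Your explicit check that the pessimistic value is attained at $p=\tfrac12$ (because both follower responses yield the same leader cost there) is a nice touch that the paper leaves implicit.
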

%
%% \begin{figure}[h]
%% 	\centering{\renewcommand{\arraystretch}{1}
%% 	\setlength{\tabcolsep}{3pt}\begin{tabular}{c|cc|cc}
%% 			\hline
%% 			$x$ & $c_{r_1,\ell}$ & $c_{r_1,f}$  & $c_{r_2,\ell}$ & $c_{r_2,f}$ \\
%% 			\cline{1-5}
%% 			$ 1 $ & $1$ & $2$ & $1$ & $2$ \\
%% 			$ 2 $ & $2$ & $1$ & $2$ & $1$ \\
%% 		\end{tabular}}
%% 		\caption{Example of SSSCG with nonmonotonic followers' cost functions, where $|F|=1$ and $R = \{r_1,r_2\}$. An O/PSE prescribes the leader to play $\sigma_\ell(r_1) = \frac{1}{2}$.}
%% 		\label{fig:ex_follower_non_monotone}
%% \end{figure}
%
%% \begin{figure}[h]
%% 	\centering
%% 		{\renewcommand{\arraystretch}{1}\setlength{\tabcolsep}{3pt}
%% 			\begin{tabular}{c|cc|cc}
%% 				\hline
%% 				$x$ & $c_{r_1,\ell}$ & $c_{r_1,f}$  & $c_{r_2,\ell}$ & $c_{r_2,f}$ \\
%% 				\cline{1-5}
%% 				$ 1 $ & $2$ & $1$ & $2$ & $1$ \\
%% 				$ 2 $ & $0$ & $2$ & $0$ & $2$ \\
%% 			\end{tabular}}
%% 		\caption{Example of SSSCG with nonmonotonic leader's cost functions, where $|F|=1$ and $R = \{r_1,r_2\}$. An O/PSE prescribes the leader to play $\sigma_\ell(r_1) = \frac{1}{2}$.}
%% 		\label{fig:ex_leader_non_monotone}
%% 	\end{figure}
%
\begin{proof}
Consider the following SSSCG with $R = \{r_1,r_2\}$ and a single follower ($|F|=1$) with nonmonotonic cost functions:
%An O/PSE prescribes the leader to play $\sigma_\ell(r_1) = \frac{1}{2}$.

\begin{center}
  \renewcommand{\arraystretch}{1}
  \setlength{\tabcolsep}{3pt}\begin{tabular}{c|cc|cc}
    \hline
    $x$ & $c_{r_1,\ell}$ & $c_{r_1,f}$  & $c_{r_2,\ell}$ & $c_{r_2,f}$ \\
    \cline{1-5}
    $ 1 $ & $1$ & $2$ & $1$ & $2$ \\
    $ 2 $ & $2$ & $1$ & $2$ & $1$ \\
    \end{tabular}
\end{center}

\noindent The follower selects $r_2$ whenever $\sigma_\ell(r_1) \leq \frac{1}{2}$, while, if $\sigma_\ell(r_1) \geq \frac{1}{2}$, she chooses $r_1$. The leader's cost is $2 - \sigma_\ell(r_1)$ if $\sigma_\ell(r_1) \leq \frac{1}{2}$, and  $1 + \sigma_\ell(r_1)$ if $\sigma_\ell(r_1) \geq \frac{1}{2}$. 
%
%Thus, any O/PSE prescribes the leader to play $\sigma_\ell$ such that $\sigma_\ell(r_1) = \frac{1}{2}$.
There is, thus, a unique O/PSE that prescribes the leader to commit to $\sigma_\ell$ with $\sigma_\ell(r_1) = \sigma_\ell(r_2) = \frac{1}{2}$.

Consider now the following SSSCG with $R = \{r_1,r_2\}$ and single follower ($|F|=1$), with nonmonotonic leader cost functions:
%An O/PSE prescribes the leader to play $\sigma_\ell(r_1) = \frac{1}{2}$.

\begin{center}
  \renewcommand{\arraystretch}{1}\setlength{\tabcolsep}{3pt}
  \begin{tabular}{c|cc|cc}
    \hline
    $x$ & $c_{r_1,\ell}$ & $c_{r_1,f}$  & $c_{r_2,\ell}$ & $c_{r_2,f}$ \\
    \cline{1-5}
    $ 1 $ & $2$ & $1$ & $2$ & $1$ \\
    $ 2 $ & $0$ & $2$ & $0$ & $2$ \\
  \end{tabular}
\end{center}

\noindent The follower selects $r_2$ if $\sigma_\ell(r_1) \geq \frac{1}{2}$, and $r_1$ if $\sigma_\ell(r_1) \leq \frac{1}{2}$.
The leader's cost is thus $2 \sigma_\ell(r_1)$ if $\sigma_\ell(r_1) \geq \frac{1}{2}$, and $2 - 2 \sigma_\ell(r_1)$ if $\sigma_\ell(r_1) \leq \frac{1}{2}$.
%
%As a result, any O/PSE of the game prescribes the leader to play $\sigma_\ell$ such that $\sigma_\ell(r_1) = \frac{1}{2}$.
There is, thus, a unique O/PSE that prescribes the leader to commit to $\sigma_\ell$ with $\sigma_\ell(r_1) = \sigma_\ell(r_2) = \frac{1}{2}$.
\end{proof}

Finally, we show that Theorems~\ref{thm:pure_strategy_opt}~and~\ref{thm_pure_strategies_pes} do not hold for SSCGs even in the extreme case where all the cost functions are monotonic:
\begin{proposition}
	There are SSCGs with monotonic cost functions where any O/PSE prescribes the leader to play a mixed strategy.
\end{proposition}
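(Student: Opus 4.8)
The plan is to produce an explicit Stackelberg SCG with (weakly) monotonic cost functions and, crucially, \emph{non-identical} action spaces --- the latter being unavoidable, since Theorem~\ref{thm:pure_strategy_opt} already shows that pure-strategy commitments suffice whenever all players share the same action space --- and then to compute the leader's cost as a function of her committed strategy $\sigma_\ell$ and verify that it is strictly minimized only at a mixed strategy.

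Concretely, I would fix a small resource set and player set: the leader, one or two \emph{flexible} followers whose action spaces genuinely differ from the leader's (e.g.\ one containing a resource the leader cannot use, or one being a strict subset of $A_\ell$), and, if needed, one or two \emph{pinned} followers confined to a single resource so as to control the baseline congestion. After writing down a candidate cost table, I would: (i)~partition the leader's strategy simplex $\Delta_\ell$ into the regions on which the followers' best-response configuration is constant --- these regions are delimited by the thresholds at which some follower turns indifferent between two resources, and such a threshold genuinely moves with $\sigma_\ell$ because $c_{i,f}^{\sigma_\ell}(x)=\sigma_\ell(i)c_{i,f}(x+1)+(1-\sigma_\ell(i))c_{i,f}(x)$ strictly interpolates $c_{i,f}(x)$ and $c_{i,f}(x+1)$; (ii)~on each region identify the optimistic (and, if desired, pessimistic) NE among the followers, using the greedy/best-response characterisation of NEs in singleton CGs exploited in Algorithm~\ref{algorithm}; (iii)~evaluate the leader's piecewise-linear cost $c_\ell^\sigma=\sum_{i\in A_\ell}\sigma_\ell(i)c_{i,\ell}(\nu_i+1)$ on each region; and (iv)~read off that the minimum over $\Delta_\ell$ is attained only at an interior point. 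The intended mechanism is that a follower configuration favourable to the leader is sustainable as an NE precisely when $\sigma_\ell$ spreads positive probability over a suitable pair of resources --- hence never under a pure commitment --- while at every pure commitment the only available NEs are strictly worse for the leader.

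The main obstacle is the construction itself. The straightforward attempts fail because monotonicity, together with the optimistic tie-breaking, makes pure commitments naturally advantageous: concentrating all of the leader's probability on a single resource only \emph{repels} followers from it, so under a pure commitment the leader's resource tends to be left uncongested and her cost is already minimal, and the cost function then comes out minimized at the vertices of $\Delta_\ell$, the opposite of what is wanted. To defeat this one must use the difference in action spaces in an essential way: arrange that some follower cannot escape the leader's resources under any single pure commitment, yet that a favourable reshuffling of the followers becomes an equilibrium once the leader randomises, and tune the (monotonic) costs so that (a)~all cost functions are still weakly increasing, (b)~the favourable configuration is an NE only on an interior region of $\Delta_\ell$, and (c)~the optimistic selection at the region boundaries still picks the claimed configurations. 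Once a consistent cost table is found, the remaining work --- checking monotonicity, verifying the NE inequalities configuration by configuration, and comparing the leader's cost at the vertices of $\Delta_\ell$ with its value in the interior --- is routine bookkeeping.
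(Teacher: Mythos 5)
Your plan identifies the right mechanism---indeed essentially the only viable one: since Theorem~\ref{thm:pure_strategy_opt} rules out identical action spaces, the example must exploit a follower whose action space differs from the leader's, and the favourable followers' configuration must be sustainable as an NE only when the leader randomises. This is exactly how the paper's witness works: with $R=\{r_1,r_2,r_3\}$, $A_\ell=A_{p_1}=\{r_1,r_2\}$ and $A_{p_2}=\{r_2,r_3\}$, follower $p_2$ has a private escape resource $r_3$ (cost $3$) that she is willing to occupy only when the expected cost of the shared resource $r_2$ is high enough, which requires $\sigma_\ell(r_2)\ge\tfrac12$, while $p_1$ is willing to sit on $r_2$ (thereby leaving $r_1$, which the leader wants empty, uncongested) only when $\sigma_\ell(r_2)\le\tfrac12$. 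The configuration giving the leader cost $\tfrac12$ is therefore an equilibrium exactly at the interior point $\sigma_\ell(r_1)=\sigma_\ell(r_2)=\tfrac12$, whereas each pure commitment admits only equilibria of cost $1$.

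As written, however, your proposal is not a proof. The proposition is an existence statement, so its entire mathematical content is the witness game, and you never exhibit one: you describe a search procedure, explicitly report that ``the straightforward attempts fail,'' and defer the construction as ``the main obstacle.'' Steps (i)--(iv) of your plan are indeed routine once a cost table is on the page, but without the table there is nothing to apply them to, and no evidence that your conditions (a)--(c) can be satisfied simultaneously---which is precisely the nontrivial part, given the repelling effect of monotonic costs that you yourself identify. One smaller point: the statement concerns \emph{any O/PSE}, so verifying the pessimistic case is not optional (as your ``if desired'' suggests); you must also argue that no pure commitment attains the pessimistic optimum, which requires examining \emph{all} equilibria induced by each commitment, not just the favourable one. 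To close the gap, write down an explicit instance such as the one above, check monotonicity of all six cost functions, verify the NE inequalities under each pure commitment and under the mixed one, and compare the resulting leader's costs under both tie-breaking rules.
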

\begin{proof}
%% \begin{figure}[h]
%% 	\centering
%% 		{\renewcommand{\arraystretch}{1}\begin{tabular}{c|cc|cc|c}
%% 				\hline
%% 				$x$ & $c_{r_1,f}$ & $c_{r_1,\ell}$  & $c_{r_2,f}$ & $c_{r_2,\ell}$ & $c_{r_3,f}$ \\
%% 				\cline{1-6}
%% 				$1$ & $1$ & $0$ & $0$  & $1$& $3$ \\
%% 				$ 2 $ & $1$ & $1$ & $2$ &$1$ &$3$\\
%% 				$ 3 $ & $1$ & $1$ & $4$& $1$ &$3$\\
%% 			\end{tabular}}
%% 		\caption{Example of SSCG with monotonic players' cost functions, where $F=\{ p1,p2 \}$, $R = \{r_1,r_2,r_3\}$, $A_{p1} = \{r_1,r_2\}, A_{p2} = \{r_2,r_3\}, A_\ell=\{r_1,r_2\}$. An O/PSE prescribes the leader to play  $\sigma_\ell(r_1) = \sigma_\ell(r_2) = \frac{1}{2}$.}
%% 		\label{fig:ex_leader_different_actions}
%% \end{figure}

Consider the following SSCG with $R = \{r_1,r_2,r_3\}$, two followers $F = \{p_1,p_2\}$, and $A_{p1} = \{r_1,r_2\}, A_{p2} = \{r_2,r_3\}, A_\ell=\{r_1,r_2\}$:
%An O/PSE prescribes the leader to play  $\sigma_\ell(r_1) = \sigma_\ell(r_2) = \frac{1}{2}$.}

\begin{center}
  \renewcommand{\arraystretch}{1}\begin{tabular}{c|cc|cc|c}
    \hline
    $x$ & $c_{r_1,f}$ & $c_{r_1,\ell}$  & $c_{r_2,f}$ & $c_{r_2,\ell}$ & $c_{r_3,f}$ \\
    \cline{1-6}
    $1$ & $1$ & $0$ & $0$  & $1$& $3$ \\
    $ 2 $ & $1$ & $1$ & $2$ &$1$ &$3$\\
    $ 3 $ & $1$ & $1$ & $4$& $1$ &$3$\\
    \end{tabular}
\end{center}

If the leader plays $\sigma_\ell(r_1) = 1$,
%then follower $p_1$ plays $r_1$ (as $1 < 2$), while follower $p_2$ plays $r_2$ (as $1 < 3$).
there is a unique NE where follower $p_1$ plays $r_1$ and follower $p_2$ plays $r_2$. Indeed, $p_2$ incurs a cost of 0 and, thus, has no incentive to deviate, while $p_1$ would incur a cost of $2 > 1$ by deviating to $r_2$.
Thus, the leader's cost is $1$.
The leader's cost is also $1$ if she played $\sigma_\ell(r_2) = 1$, as $p_2$ would also choose $r_2$, while $p_1$ would choose $r_1$.

Let us show that the leader can commit to a mixed strategy and incur a cost smaller than $1$.
Indeed, with $\sigma_\ell(r_1) = \sigma_\ell(r_2) = \frac{1}{2}$, there is a followers' NE where $p_1$ chooses $r_2$ and $p_2$ chooses $r_3$: $p_1$, incurring a cost of $1$ (smaller or equal than any other cost), has no incentive to deviate, while $p_2$, currently incurring a cost of $3$, by switching to $r_2$ would incur  the same (expected) cost of $3$ (i.e., a cost of $2$ with probability $\frac{1}{2}$ and one of $4$ with probability $\frac{1}{2}$), thus having no incentive to deviate.
At that NE, the leader's cost is $0 \cdot \frac{1}{2} +  1 \cdot \frac{1}{2} = \frac{1}{2}$.
\end{proof}

\subsection{Pure-Strategy Commitment in SSSCGs with Arbitrary Costs}\label{sub_sec:dynamic_programming}

We propose, here, a simple polynomial-time algorithm for computing an O/PSE in SSSCGs with arbitrary costs where the leader is restricted to pure-strategy commitments.
It is based on a dynamic programming algorithm proposed in~\cite{ieong2005fast} for the computation of an optimal NE in symmetric SCGs without leadership.
The original algorithm runs in $O(n^6 r^5)$.
One can compute an O/PSE in $r$ iterations, fixing, at each iteration, the action the leader would choose and calling the previous algorithm to compute a NE which either minimizes or maximizes the leader's cost.
This takes, overall, $O(n^6 r^6)$.

We show, in the following, how to improve the lower the complexity of the original algorithm to $O(n^4 r^3)$, which allows for computing an O/PSE for the restricted case in $O(n^4 r^4)$.
The algorithm is based on the same recursive formula shown in~\cite{ieong2005fast}, which we reintroduce, here, in a different and, possibly, clearer way.

Let $A(h,B,M,V)$ be the cost of an optimal NE for a symmetric SCG without leadership restricted to $h$ resources $\{1,2,...,h\} \subseteq R$ and $B$ players, where $M$ is the largest cost incurred by a player and $V$ is the smallest cost a player would incur if she were to switch to another resource.

\begin{proposition}
$A(h,B,M,V)$ satisfies the following recursive equation:
\begin{align}
\label{dp:obj}
A(h,B,M,V) = \min_{\substack{p \in \{0, \dots, B\}\\m \in \mathbb{Z}^+, v \in \mathbb{Z}^+}} \quad
& A(h-1,p,m,v) + (B-p) \, c_h (B-p)\\
%% \label{dp:1}
\label{dp:12}
\textnormal{s.t.} \quad
%% & p \in \{0, \dots, B\}\\
& m \leq M\\
\label{dp:13}
& v \geq V\\
\label{dp:2}
& c_h(B-p) \leq M\\
\label{dp:4}
& c_h(B-p+1) \geq V\\
\label{dp:3}
& c_h(B-p) \leq v\\
\label{dp:5}
& c_h(B-p+1) \geq m.
\end{align}
\end{proposition}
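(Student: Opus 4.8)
The plan is to establish the recursion by a routine two‑sided inclusion, after fixing the precise meaning of the table. I would read $A(h,B,M,V)$ as the minimum social cost $\sum_{i=1}^h x_i\,c_i(x_i)$ over configurations $(x_1,\dots,x_h)\in\mathbb N^h$ with $\sum_i x_i=B$ that (i) form a Nash equilibrium of the congestion game restricted to resources $\{1,\dots,h\}$, (ii) satisfy $c_i(x_i)\le M$ for every used resource $i$, and (iii) satisfy $c_i(x_i+1)\ge V$ for every $i\in\{1,\dots,h\}$, with value $+\infty$ when no such configuration exists. The only structural fact needed is the obvious one that a Nash equilibrium behaves well under adding or removing the last resource: deleting resource $h$ from an equilibrium on $\{1,\dots,h\}$ leaves an equilibrium on $\{1,\dots,h-1\}$ (no player on those resources wanted to deviate in the larger game, so none does in the smaller one), and, conversely, placing $k$ fresh players on resource $h$ on top of an equilibrium on $\{1,\dots,h-1\}$ yields an equilibrium on $\{1,\dots,h\}$ precisely when the two ``cross'' incentive inequalities hold, namely that the players now on $h$ do not prefer any resource in $\{1,\dots,h-1\}$ and that the players on $\{1,\dots,h-1\}$ do not prefer $h$.

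For the first direction — that $A(h,B,M,V)$ is at most the minimum on the right — I would take an optimal triple $(p,m,v)$, a configuration $(\xi_1,\dots,\xi_{h-1})$ witnessing $A(h-1,p,m,v)$, and form $(\xi_1,\dots,\xi_{h-1},B-p)$. Then \eqref{dp:12} together with \eqref{dp:2} gives property (ii) for the enlarged configuration, and \eqref{dp:13} together with \eqref{dp:4} gives (iii); \eqref{dp:3} says $c_h(B-p)\le v$, so the $B-p$ players on $h$ weakly prefer $h$ to every resource in $\{1,\dots,h-1\}$ (whose one‑more costs are all $\ge v$), and \eqref{dp:5} says $c_h(B-p+1)\ge m$, so every player on $\{1,\dots,h-1\}$ (incurring at most $m$) weakly prefers staying to moving to $h$. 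Combined with the equilibrium inherited on $\{1,\dots,h-1\}$, the enlarged configuration is feasible for $A(h,B,M,V)$, and its cost is exactly $A(h-1,p,m,v)+(B-p)\,c_h(B-p)$, the objective \eqref{dp:obj} evaluated at $(p,m,v)$.

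For the reverse direction, I would take an optimal configuration $(x_1,\dots,x_h)$ for $A(h,B,M,V)$, set $p:=B-x_h$, $m:=\max_{i\le h-1,\,x_i>0}c_i(x_i)$ and $v:=\min_{i\le h-1}c_i(x_i+1)$, and check that $(x_1,\dots,x_{h-1})$ is feasible for $A(h-1,p,m,v)$ — it is an equilibrium on $\{1,\dots,h-1\}$ by the restriction fact, and (ii),(iii) hold for it by the definitions of $m$ and $v$ — and that $(p,m,v)$ is feasible for the minimization: \eqref{dp:12} and \eqref{dp:13} hold because $m$ is a maximum of numbers $\le M$ and $v$ a minimum of numbers $\ge V$; \eqref{dp:2} and \eqref{dp:4} hold because $c_h(x_h)$ is an incurred cost and $c_h(x_h+1)$ a one‑more cost of the original configuration; and \eqref{dp:3}, \eqref{dp:5} are exactly the two cross incentive conditions of the equilibrium $(x_1,\dots,x_h)$. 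Hence the right‑hand side is at most $A(h-1,p,m,v)+(B-p)\,c_h(B-p)\le\sum_{i=1}^h x_i\,c_i(x_i)=A(h,B,M,V)$, and the two inequalities give the claimed identity.

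The one place requiring genuine care — and the step I expect to be the main obstacle — is the degenerate configurations $B-p=0$ (nobody on the new resource $h$) and $p=0$ (the sub‑configuration is empty). There the ``largest incurred cost'' is a maximum over the empty set, so one must fix the right convention and use $c_i(0)=0$ (together with the fact that the values of $M$ and $V$ that actually arise are among the nonnegative cost values) to verify that \eqref{dp:2}--\eqref{dp:5} still read correctly: in particular that dropping all players from resource $h$ is never spuriously forbidden, and that an empty sub‑configuration is consistently treated as an equilibrium of cost $0$. Once these conventions are pinned down, everything else is the bookkeeping of matching each of \eqref{dp:12}--\eqref{dp:5} to one equilibrium‑or‑bound condition, as above.
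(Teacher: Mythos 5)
Your proof is correct, and it is in fact more complete than the one the paper gives. The paper's argument only checks, constraint by constraint, that each of \eqref{dp:12}--\eqref{dp:5} is \emph{necessary} --- i.e., that any configuration attaining $A(h,B,M,V)$ decomposes into a feasible point of the right-hand minimization, which is only the inequality ``$\mathrm{RHS}\le\mathrm{LHS}$'' (your second direction). It never verifies sufficiency: that every feasible $(p,m,v)$ together with a witness for $A(h-1,p,m,v)$ can be recombined into a valid configuration on $\{1,\dots,h\}$ of the stated cost, which is needed for ``$\mathrm{LHS}\le\mathrm{RHS}$'' and hence for equality. Your first direction supplies exactly this, via the observation that the two cross-incentive inequalities \eqref{dp:3} and \eqref{dp:5}, together with the inherited equilibrium on $\{1,\dots,h-1\}$, are precisely what is needed for the extended configuration to be a Nash equilibrium. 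Two further points in your favour: you pin down the reading of $M$ and $V$ as upper and lower \emph{bounds} rather than exact values, which is the only reading under which the monotonicity claim used later in Theorem~\ref{thm:step} ($m'>m$, $v'<v$ enlarges the feasible region) makes sense; and you flag the empty-resource and empty-subconfiguration corner cases ($B-p=0$ and $p=0$), where the max and min defining $m$ and $v$ are taken over empty sets and the convention $c_i(0)=0$ must be invoked --- cases the paper passes over silently. The only cosmetic mismatch is that the paper restricts $m,v$ to $\mathbb{Z}^+$ while costs are rational; this is an imprecision of the paper, not of your argument.
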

\begin{proof}
%Constraint~\eqref{dp:1} allows to quantify how many players, out of $B$, choose resource $h$ and, therefore, the number of those who choose resources in $\{1, \dots, h-1\}$.
%
We show that all the %% other
constraints are necessary for the definition of $A(h,B,M,V)$ to be respected.
If Constraint~\eqref{dp:12} were not satisfied, $m > M$ would imply that there is at least a resource among those in $\{1, \dots, h-1\}$ costing strictly more than $M$.
If Constraint~\eqref{dp:13} were not satisfied, $v < V$ would imply that the cost to deviate to a resource among those in $\{1, \dots, h-1\}$ is strictly smaller than $V$.
If Constraint~\eqref{dp:2} were not satisfied, $c_h(B-p) > M$ would imply that $M$ is smaller than the cost of the most expensive chosen resource.
If Constraint~\eqref{dp:4} were not satisfied, $c_h(B-p+1) < V$ would imply that $V$ is larger than the cheapest cost a player would incur upon deviating to another resource.
If Constraint~\eqref{dp:3} were not satisfied, $c_h(B-p) > v$ would imply that each of the $B-p$ players who chose resource $h$ would have an incentive to deviate to any of the resources in $\{1, \dots, h-1\}$.
If Constraint~\eqref{dp:5} were not satisfied, $c_h(B-p+1) < m$ would imply that at least one of the $p$ players who selected a resource in $\{1, \dots, h-1\}$ (i.e., all those incurring a cost of $m$) would have an incentive to deviate to resource $h$.
\end{proof}

We show to simplify the recursive formula for $A(h,B,M,V)$:
\begin{theorem}\label{thm:step}
$A(h,B,M,V)$ satisfies the following recursive equation:
\begin{align}
A(h,B,M,V) = \min_{p \in \{0, \dots, B\}} \quad
& A(h-1,p,m(p)^*,v(p)^*) + (B-p) \, c_h(B-p)\\
\textnormal{s.t.} \quad
%% & p \in \{0, \dots, B\}\\
& c_h(B-p) \leq M\\
& c_h(B-p+1) \geq V,
\end{align}
where $m(p)^* = \min\{M, c_h(B-p+1)\}$ and $v(p)^* = \max\{V, c_h(B-p)\}$.
\end{theorem}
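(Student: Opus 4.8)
The plan is to eliminate the two auxiliary variables $m$ and $v$ from the recursion \eqref{dp:obj}. Fix a value of $p \in \{0,\dots,B\}$; I would first note that the term $(B-p)\,c_h(B-p)$ in the objective does not involve $m$ or $v$, so, for that $p$, minimizing over $(m,v)$ is the same as minimizing $A(h-1,p,m,v)$. Of the six constraints \eqref{dp:12}--\eqref{dp:5}, the two constraints \eqref{dp:2} and \eqref{dp:4} do not mention $m$ or $v$ and are simply retained. The remaining four, namely \eqref{dp:12} together with \eqref{dp:5} on the one hand, and \eqref{dp:13} together with \eqref{dp:3} on the other, combine pairwise into the single box
\[
m \le \min\{M,\, c_h(B-p+1)\} =: m(p)^*, \qquad v \ge \max\{V,\, c_h(B-p)\} =: v(p)^*.
\]

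The key step is the monotonicity of $A$ in its last two arguments: $A(h-1,p,m,v)$ is non-increasing in $m$ and non-decreasing in $v$, with the usual convention that $A = +\infty$ when no admissible configuration of the $p$ players exists. This is immediate from the definition, since raising $m$ relaxes the cap on the cost incurred by a player, and lowering $v$ relaxes the floor on the cost of a deviation, so in both cases the set of configurations over which $A(h-1,p,\cdot,\cdot)$ is minimized can only grow. Hence, over the box just described, $A(h-1,p,m,v)$ attains its minimum by taking $m$ as large and $v$ as small as the box allows, i.e., at the corner $(m,v) = (m(p)^*, v(p)^*)$, and this corner is always contained in the box, so nothing feasible is discarded. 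Substituting $m = m(p)^*$ and $v = v(p)^*$, keeping \eqref{dp:2} and \eqref{dp:4}, and then taking the outer minimum over $p$ reproduces exactly the recursion in the statement.

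To make this airtight I would prove the two inequalities between the two optimization problems. For the ``$\le$'' direction, given an optimizer $(\hat p,\hat m,\hat v)$ of \eqref{dp:obj} one has $\hat m \le m(\hat p)^*$ and $\hat v \ge v(\hat p)^*$ from the constraints, so monotonicity gives $A(h-1,\hat p,m(\hat p)^*,v(\hat p)^*) \le A(h-1,\hat p,\hat m,\hat v)$, while $\hat p$ still satisfies \eqref{dp:2} and \eqref{dp:4}; for the ``$\ge$'' direction one checks directly that $(p, m(p)^*, v(p)^*)$ satisfies all of \eqref{dp:12}--\eqref{dp:5} for every $p$ admissible in the simplified program, so each term of the simplified minimum is a feasible value of \eqref{dp:obj}. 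I expect the only genuinely delicate point to be the monotonicity claim --- in particular getting the direction right for each of the two arguments and handling the $+\infty$ convention consistently, so that $(m(p)^*,v(p)^*)$ fails to be feasible for the inner problem precisely when every $(m,v)$ in the box does. Everything else is bookkeeping: merging pairs of inequalities into box constraints and tracking which constraints survive the elimination.
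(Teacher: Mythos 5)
Your proposal is correct and follows essentially the same route as the paper: collapse the four constraints involving $m$ and $v$ into the box $m \le m(p)^*$, $v \ge v(p)^*$, observe that $A(h-1,p,\cdot,\cdot)$ is non-increasing in its third argument and non-decreasing in its fourth because enlarging $m$ or shrinking $v$ only enlarges the underlying feasible region, and conclude that the inner minimum is attained at the corner $(m(p)^*, v(p)^*)$. The two-sided inequality you sketch at the end is just a more explicit rendering of the same argument the paper gives.
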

\begin{proof}
Constraints~\eqref{dp:12}--\eqref{dp:5} and~\eqref{dp:13}--\eqref{dp:3} imply, respectively, $m \leq \min\{M, c_h(B-p+1)\}$ and $v \geq \max\{V, c_h(B-p)\}$.
Hence, $m(p)^*$ and $v(p)^*$ are feasible for Problem~\eqref{dp:obj}--\eqref{dp:5}. 
Note that, if $m'>m$ and  $v'<v$, the feasible region underlying $A(h,p,m',v')$ contains the one underlying $A(h,p,m,v)$, which implies $A(h,p,m',v') \leq A(h,p,m,v)$.
The claim follows since $m(p)^*$ and $v(p)^*$ are, respectively, the largest and smallest values $m$ and $v$ can take.
\end{proof}

\begin{corollary}
In symmetric SCGs without leadership, an optimal NE can be found in $O(n^4 r^3)$. In SSSCGs with the leader restricted to pure strategies, an O/PSE can be found in $O(n^4 r^4)$.
\end{corollary}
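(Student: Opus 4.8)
The plan is to bound the size of the table underlying the recursion of Theorem~\ref{thm:step} and the per-entry work. First I would observe that the only values of $M$ and $V$ ever needed are costs that actually occur in the instance, i.e.\ elements of the finite set $\mathcal{C} := \{c_i(x) \mid i \in R,\ 0 \le x \le n+1\} \cup \{-\infty, +\infty\}$, of cardinality $O(nr)$: the quantity we ultimately want is $A(r,n,+\infty,-\infty)$ (no upper bound on incurred costs, no lower bound on deviation costs), and Theorem~\ref{thm:step} only ever replaces $M$ and $V$ by $m(p)^\star = \min\{M, c_h(B-p+1)\}$ and $v(p)^\star = \max\{V, c_h(B-p)\}$, which again lie in $\mathcal{C}$. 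Hence the reachable states are indexed by $h \in \{0,\dots,r\}$, $B \in \{0,\dots,n\}$, and $M, V \in \mathcal{C}$, giving $O(r) \cdot O(n) \cdot O(nr) \cdot O(nr) = O(n^3 r^3)$ of them; by Theorem~\ref{thm:step} each entry is the minimum of $B+1 = O(n)$ candidates, every one of which is computed with $O(1)$ cost-function evaluations plus one table lookup. Tabulating $A$ bottom-up in $h$ therefore costs $O(n^4 r^3)$, and an optimal NE of a symmetric SCG without leadership is read off from $A(r,n,+\infty,-\infty)$, with a realizing configuration recovered by the standard backtracking; this proves the first claim.

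For the second claim I would enumerate the leader's $r$ pure commitments. When the leader commits to a resource $i^\star$, the followers play a symmetric SCG on $R$ whose follower-cost on $i^\star$ is shifted up by one, and the leader's cost in any resulting NE equals $c_{i^\star,\ell}(q+1)$, where $q$ is the number of followers on $i^\star$ --- so it depends only on $(i^\star,q)$, not on the arrangement of the remaining $n-1-q$ followers. The key step is to decide, for each $q \in \{0,\dots,n-1\}$, whether that followers' game admits a pure NE with exactly $q$ followers on $i^\star$. Relabelling so that $i^\star$ is the last resource, this holds if and only if the $n-1-q$ followers on $R \setminus \{i^\star\}$ admit a configuration that is a pure NE of the sub-game on $R\setminus\{i^\star\}$ and in which (a)~every incurred cost is at most the cost $c_{i^\star,f}(q+2)$ of deviating onto $i^\star$, and (b)~the cheapest deviation into $R\setminus\{i^\star\}$ costs at least $c_{i^\star,f}(q+1)$, the cost borne by the $q$ incumbents of $i^\star$ (condition~(b) being vacuous when $q=0$). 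Condition~(a) is exactly the upper bound $M=c_{i^\star,f}(q+2)$ on incurred costs and (b) exactly the lower bound $V=c_{i^\star,f}(q+1)$ on the minimum deviation cost, so --- using the monotonicity of $A$ in its last two arguments established in the proof of Theorem~\ref{thm:step} --- the desired NE exists if and only if $A(r-1,\, n-1-q,\, c_{i^\star,f}(q+2),\, c_{i^\star,f}(q+1)) < +\infty$ (with $-\infty$ in the last slot for $q=0$), a single lookup in the $(r-1)$-resource table for $R \setminus \{i^\star\}$.

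Finally I would assemble the answer: over all $(i^\star,q)$ for which this lookup is finite, an OSE is obtained by minimizing $c_{i^\star,\ell}(q+1)$, and a PSE by choosing $i^\star$ attaining $\min_{i^\star}\max_{q\ \mathrm{feasible}} c_{i^\star,\ell}(q+1)$ (both extrema range over finite sets, so a PSE exists once the leader is confined to pure strategies); a realizing configuration is recovered by backtracking in the relevant table. Since one $(r-1)$-resource table costs $O(n^4 r^3)$ and we build $r$ of them, the total is $O(n^4 r^4)$. I expect the main obstacle to be the bookkeeping in the second part: keeping the off-by-one congestion shifts straight (the leader is always on $i^\star$, its $q$ incumbents see congestion $q+1$, a deviator into it sees $q+2$), handling the $q=0$ boundary, and justifying that testing the single threshold pair $(c_{i^\star,f}(q+2),\,c_{i^\star,f}(q+1))$ --- rather than a whole range of $(M,V)$ --- is both necessary and sufficient, which is precisely where the monotonicity of $A$ is invoked.
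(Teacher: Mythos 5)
Your proposal is correct and follows essentially the same route as the paper: count the $O(n^3 r^3)$ reachable states of $A(h,B,M,V)$ (using that $M$ and $V$ only ever take the $O(nr)$ cost values occurring in the instance), charge $O(n)$ per entry via Theorem~\ref{thm:step}, and then run the procedure once per leader's pure commitment for an extra factor of $r$. Your second half is considerably more explicit than the paper's one-line reduction --- in particular the feasibility test $A(r-1,\,n-1-q,\,c_{i^\star,f}(q+2),\,c_{i^\star,f}(q+1))<+\infty$ for each congestion level $q$ on the committed resource --- but this is an elaboration of the same idea rather than a different argument.
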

\begin{proof}
Since there are at most $n r$ different values of $c_j(i)$, for all $j \in R$ and $i \in N$, there are at most $nr$ values of $M$ and at most $nr$ values of $V$.
There are also exactly $r$ values of $h$ and exactly $n$ of $B$.
Hence, the dynamic programming table of $A(h,B,M,V)$ contains $O(n^3 r^3)$ entries.
Due to Theorem~\eqref{thm:step}, computing an entry of the table requires $O(n)$.
Overall, an optimal NE is computed in $O(n^4 r^3)$.
For the case with leadership restricted to pure strategies, it suffices to run the algorithm for each resource the leader may choose, i.e., $O(r)$ times, obtaining a complexity of $O(n^4 r^4)$.
\end{proof}

%% \begin{corollary}
%% Letting $C = \max_{r \in R, j \in N} \{c(r,j)\}$, the optimal social cost is equal to:
%% $$\min_{m,v \in \{0,\dots,C\}} A(k,n,m,v,1).$$
%% \end{corollary}

\section{Mixed-Integer Linear Programming Formulations for Computing OSEs in Intractable SSCGs and SSSCGs}\label{sec:practical_algorithms}

In this section, we provide two MILP formulations for the problem of computing an OSE (in, in the worst case, exponential time) in SSCGs and SSSCGs for which the problem is intractable (see Sections~\ref{sec:complexity_different_actions}~and~\ref{sec:complexity_same_actions}).
Our goal is to provide methods which work suitably well in practice, even though their worst-case running time is exponential.\footnote{
We recall that, while we do not directly propose algorithms for the computation of PSEs for these intractable cases, their computation can be carried out with the general method proposed in~\cite{algorithmica} for general Stackelberg games in normal form.}

We start from SSSCGs, for which the MILP formulation is simpler, and then extend the result to the more general case of SSCGs. 

\subsection{Computing an OSE in SSSCGs (with arbitrary costs)}\label{sub_sec:milp_same_actions}

For the ease of notation, let $V = \{1, \ldots, n-1 \}$ be the set of possible congestion levels induced by the followers on a resource.
Let, for every resource $i \in R$ and value $v \in V$, the binary variable $y_{i v}$ be equal to 1 if and only if $\nu_i = v$, i.e., if and only if $v$ followers select resource $i \in R$. We use these variables to achieve a binarized representation of the followers' configuration $\nu \in \mathbb{N}^r$, namely, $\nu_i = \sum_{v \in V} v \, y_{i v}$ for all $i \in R$.
Let, for each $i \in R$, $\alpha_i \in [0,1]$ be equal to $\sigma_\ell(i)$.
Let also, for each $i \in R$ and $v \in V$, the auxiliary variable $z_{i v}$ be equal to the bilinear term $y_{i v} \alpha_i$.

The complete MILP formulation reads:
\begin{subequations}\label{eq:milp_same_actions}
	\begin{align}
		\min & \sum_{i \in R} \ \sum_{v \in V } c_{i, \ell}(v+1) \,  z_{i v} \label{eq:milp_same_actions_of} \\
   \textnormal{s.t.} & \sum_{v \in V } y_{iv} \leq 1 & \forall i \in R \label{eq:milp_same_actions_cons1} \\
		& \sum_{i \in R} \ \sum_{v \in V} v \, y_{i v}  = n-1 &  \label{eq:milp_same_actions_cons2}\\
		& \sum_{v \in V} \left(  y_{j v} c_{j,f}(v+1) + z_{j v} \Big( c_{j,f}(v+2) - c_{j,f}(v+1)  \Big)  \right) \geq \nonumber \hspace{-2cm}\\
                & \sum_{v \in V} \left(  y_{i v} c_{i,f}(v) + z_{i v} \Big( c_{i,f}(v+1) - c_{i,f}(v)  \Big)  \right) & \forall i \neq j \in R \label{eq:milp_same_actions_cons3} \\
		& z_{i v} \leq \alpha_i &  \forall i \in R, \forall v \in V \label{eq:milp_same_actions_cons4} \\
		& z_{i v} \leq y_{i v} & \forall i \in R, \forall v \in V \label{eq:milp_same_actions_cons5} \\
		& z_{i v} \geq \alpha_i + y_{i v} -1 & \forall i \in R, \forall v \in V \label{eq:milp_same_actions_cons6} \\
		& z_{i v} \geq 0 & \forall i \in R, \forall v \in V \label{eq:milp_same_actions_cons8}\\
		& \sum_{i \in R} \alpha_i = 1 \label{eq:milp_same_actions_cons7}\\
		& \alpha_i \geq 0 & i \in R \\
		& y_{i v} \in \{ 0,1 \} & \forall i \in R, \forall v \in V.
	\end{align}
\end{subequations}

%The Objective
Function~\eqref{eq:milp_same_actions_of} represents the leader's expected cost (to be minimized).
Constraints~\eqref{eq:milp_same_actions_cons1} ensure that at most one variable $y_{i v}$ be equal to $1$ for each resource $i \in R$, thus guaranteeing that the congestion level of each resource be uniquely determined (note that $\sum_{v \in V} y_{iv} = 0$ if no followers select resource $i \in R$).
Constraints~\eqref{eq:milp_same_actions_cons2} guarantee that the followers' configuration be well-defined, i.e., that $\sum_{i \in R} \nu_i$ be equal to $n-1$ (the number of followers).
Constraints~\eqref{eq:milp_same_actions_cons3} force the followers' configuration defined by the $y_{i v}$ variables to be an NE for the leader's strategy identified by the $\alpha_i$ variables.
This follows from the fact that $\displaystyle \sum_{v \in V} \left(  y_{i v} c_{i,f}(v) + z_{i v} \Big( c_{i,f}(v+1) - c_{i,f}(v)  \Big)  \right)$ (recall that $z_{iv} = y_{i v} \alpha_i $) is equal to the cost incurred by the followers who select resource $i \in R$, while $\displaystyle \sum_{v \in V} \left(  y_{j v} c_{j,f}(v+1) + z_{j v} \Big( c_{j,f}(v+2) - c_{j,f}(v+1)  \Big)  \right)$ (recall that $z_{j v} =  y_{j v} \alpha_j$) is equal to the cost they would incur after deviating to resource $j \in R$.
Let us remark that Constraints~\eqref{eq:milp_same_actions_cons3} are trivially satisfied if $y_{i v}= 0$ for all $v \in V$. This is correct as, if no followers choose resource $i \in R$, no equilibrium conditions need to be enforced.
Constraints~\eqref{eq:milp_same_actions_cons4}--\eqref{eq:milp_same_actions_cons8} are McCormick envelope constraints~\cite{mccormick1976computability} which guarantee $z_{i v} = y_{i v} \alpha_i$ whenever $y_{iv} \in \{0,1\}$.

We remark that Formulation~\eqref{eq:milp_same_actions} features $r (2n + 1)$ variables, $n r$ of which binary, and $r (r - 1) + r (3n + 1) + 2$ constraints.

\subsection{Computing an OSE in SSCGs}\label{sub_sec:milp_different_actions}

We now extend Formulation~\eqref{eq:milp_same_actions} to the case where the followers may have different action spaces, i.e., SSCGs.

For the ease of notation, let, for every $i \in R$, $\bar v_i = | \{ p \in F \mid i \in A_p \} |$ be the maximum number of followers who can select resource $i$, and let $V(i) = \{1, \ldots, \bar v_i \}$ be the set of possible congestion levels for resource $i$.
For every follower $p \in F$ and resource $i \in a_p$, let the binary variable $x_{p i}$ be equal to 1 if and only player $p$ selects resource $i$, i.e., if and only if $a_p = i$.
All the variables in~Formulation~\eqref{eq:milp_same_actions} are used with the same meaning.

The complete MILP formulation reads:
\begin{subequations}\label{eq:milp_different_actions}
	\begin{align}
		\min & \sum_{i \in R} \ \sum_{v \in V(i) } c_{i, \ell}(v+1) \,  z_{i v} \label{eq:milp_different_actions_of} \\
   \textnormal{s.t.} & \sum_{i \in A_p} x_{p i} = 1 & \forall p \in F \label{eq:milp_different_actions_cons0} \\
                & \sum_{v \in V(i) } y_{iv} \leq 1 & \forall i \in R \label{eq:milp_different_actions_cons1} \\
		&  \sum_{v \in V(i)} v \, y_{i v}  = \sum_{p \in F} x_{p i} & \forall i \in R  \label{eq:milp_different_actions_cons2}\\
		& \sum_{v \in V(i)} \left(  y_{j v} c_{j,f}(v+1) + z_{j v} \Big( c_{j,f}(v+2) - c_{j,f}(v+1)  \Big)  \right) \geq \nonumber \hspace{-5cm}\\
   & \geq \sum_{v \in V(i)} \left(  y_{i v} c_{i,f}(v) + z_{i v} \Big( c_{i,f}(v+1) - c_{i,f}(v)  \Big)  \right) \hspace{-.35cm} &  \forall p \in F, i \neq j \in A_p \label{eq:milp_different_actions_cons3} \\
		& z_{i v} \leq \alpha_i &  \forall i \in R, \forall v \in V(i) \label{eq:milp_different_actions_cons4} \\
		& z_{i v} \leq y_{i v} & \forall i \in R, \forall v \in V(i) \label{eq:milp_different_actions_cons5} \\
		& z_{i v} \geq \alpha_i + y_{i v} -1 & \forall i \in R, \forall v \in V(i) \label{eq:milp_different_actions_cons6} \\
		& z_{i v} \geq 0 & \forall i \in R, \forall v \in V(i) \label{eq:milp_different_actions_cons8}\\
		& \sum_{i \in R} \alpha_i = 1 \label{eq:milp_different_actions_cons7}\\
		& \alpha_i \geq 0 & i \in R \label{eq:milp_different_actions_cons77}\\
		& \alpha_i = 0 & i \in R \setminus A_\ell \label{eq:milp_different_actions_cons9} \\
                & x_{p i} \in \{ 0,1 \} & \forall p \in F, \forall i \in R \\
		& y_{i v} \in \{ 0,1 \} & \forall i \in R, \forall v \in V(i).
	\end{align}
\end{subequations}

Objective Function~\eqref{eq:milp_different_actions_of}, Constraints~\eqref{eq:milp_different_actions_cons1}, and Constraints~\eqref{eq:milp_different_actions_cons3}--\eqref{eq:milp_different_actions_cons77} have the same meaning as their counterparts in Formulation~\eqref{eq:milp_same_actions}. 
Constraints~\eqref{eq:milp_different_actions_cons0} ensure that each follower selects exactly one resource.
Constraints~\eqref{eq:milp_different_actions_cons2} guarantee that the followers' configuration be well-defined, i.e., that, for each $i \in R$, $\nu_i = \sum_{v \in V} v \, y_{iv}$ be equal to $\sum_{p \in F} x_{pi}$, i.e., to the number of followers who select resource $i$.
Notice that, differently from the previous formulation, Constraints~\eqref{eq:milp_different_actions_cons3} are enforced for each follower $p \in F$ here, and only for pairs of resources $i, j \in R$ follower $p$ has access to.
Note also that, via Constraints~\eqref{eq:milp_different_actions_cons9}, $\alpha_i$ is forced to be equal to 0 for all the resources $i \in R$ the leader has no access to.

We observe that Formulation~\eqref{eq:milp_different_actions} features $\sum_{p \in F} |A_p| + 2 \sum_{i \in R} \bar{v}_i + r = O(r(3n+1))$ variables, $\sum_{p \in F} |A_p| + \sum_{i \in R} \bar{v}_i = O(2rn)$ of which binary, and $n+2r+3 \sum_{i \in R} \bar{v}_i  + \sum_{p \in F} |A_p| \left( |A_p| - 1 \right) = O(n+2r+3nr+nr(r-1))$ constraints.

\subsection{Experimental Evaluation}\label{subsec:experimental}

While the scalability of all the efficient algorithms we proposed in Section~\ref{sec:algorithms_same_actions_polynomial} is clear due to their polynomiality, it is of interest to assess, experimentally, how state-of-the-art branch-and-bound methods behave when solving our formulations for the intractable cases on game instances of increasing size.

%In this section, we experimentally evaluate
For the purpose, we experiment with two MILP formulations we proposed
%for computing an OSE for the intractacble cases SSCGs and SSSCGs, i.e., Formulation~\eqref{eq:milp_same_actions} and Formulation~\eqref{eq:milp_different_actions},
on a testbed of randomly generated game instances of two classes:
\begin{itemize}
\item \emph{SSSCG instances}: we assume a number of followers
%$n-1$
in $\{20,40,60,80,100\}$, with $r$ resources in the range  $\{10,20,30,40,50\}$ and players' costs randomly generated by sampling from $\{1,\ldots,(n-1) \, r\}$ with a uniform probability.\footnote{The value $(n-1) \, r$ is chosen as, when looking for pure-strategy NEs, cost functions taking $(n-1) \, r$ different values are sufficient to represent every possible singleton congestion game.}
%In principle, when looking for mixed-strategy NEs, the number of different values representing any possible congestion game is infinite. \SC{\bf Non direi niente sul caso mixed; rende la scelta di quel valore arbitraria---evitiamo di attirare l'attenzione ;-)}}
%
\item \emph{SSCG instances}: we assume a number of followers
%$n-1$
in $\{20,40,60,80,100\}$, with $r = 30$ resources and a number of actions $|A_p|$ per player in the range $\{7, 15, 22\}$, generated by sampling without replacement; the players' costs are sampled from  $\{1,\ldots,(n-1) \, r\}$ with uniform probability.
\end{itemize}
We generate 15 instances per combination of the parameters.
All the experiments are run on a UNIX machine with a total of 32 cores working at 2.3 GHz, equipped with 128 GB of RAM.
Each game instance is solved on a single core within a time limit of 7200 seconds.
%The algorithms are implemented in Python 2.7, while the MILP formulations are solved with GUROBI 7.0 (via the Python interface).
We use Python 2.7, solving the MILP formulations with GUROBI 7.0.

%\subsection{Baseline algorithm(s)}

We use, as baseline for the comparisons, a simple algorithm which, starting from a randomly generated assignment of players to the resources,
simulates a best-response dynamics, halting after a time limit of 10 minutes.
When ties arise, i.e., whenever the are more than a single player who are not playing their best response, we select a player lexicographically and make her switch to playing her (currently) best response.
We refer to this algorithm as a {\em best response dynamics heuristic} as the method is not exact when applied to the intractable cases of SSCGs and SSSCGs.
On average, within the time limit of 10 minutes we observe a number of deviations to a best response of the order of $10^5$.
Let us recall that the method always produces, by design, pure-strategy NEs.
%by definition of best-response dynamics. \tinytodo{Non del tutto chiaro?}

%% \SCst{Clearly, the algorithm does not provide any theoretical guarantee on the quality of the solution it returns. However, let us recall that the problems we address in this section are not in $\mathsf{Poly}$-APX (unless $\mathsf{P}=\mathsf{NP}$), and, therefore, one cannot expect to design algorithms with approximation guarantees for them.} \tinytodo{Excusatio non petita---toglierei}

%% \SCst{We also experimented with the algorithm described in the previous section but, due to obtaining a quite poor performance with it, we omit its results. In particular, when used to find a pure-strategy OSE in instances with 10 resources and 25 followers, we observed a computing time per instance not shorter than 11,000 seconds, while, as we discuss below, the computing time required by our mathematical programming formulations for the same instances is negligible.} \tinytodo{Non serve piu' per come e' strutturata la sezione ora.}

%\subsection{Results}

%The results are reported in Figure~\ref{fig:gull12} for SSSCGs, and in Figure~\ref{fig:gull34} for SSCGs.
%
Figure~\ref{fig:gull12}~(a) and~(b) report the results for SSSCGs with arbitrary costs with 30 resources.
Figure~\ref{fig:gull12}~(a) displays the average computing time required by MILP Formulation~\eqref{eq:milp_same_actions}, as a function of the number of followers and for a different number of actions available to each player.
One can see that, with Formulation~\eqref{eq:milp_same_actions}, an optimal solution is always found within the time limit of 7200 seconds in all the instances.
This suggests that, even if the problem is hard in the worst case, an optimal solution can be found in a reasonable amount of time on randomly generated instances.
%\tinytodo{Molto bello e molto sound, ma non stiamo invitando il reviewer a chiederci risultati su istanze strutturate qua?}
%
Figure~\ref{fig:gull12}~(b) reports, as a function of the number of followers, the average leaders' cost of the solutions obtained with MILP Formulation~\eqref{eq:milp_same_actions}, compared to the average cost obtained with the best response dynamics heuristic.
As the figure shows, the difference in leader's utility between solutions found with the two methods can be quite large as the number of followers increases, up to a factor of 6 with $n = 100$, showing a clearly growing trend.
%% \tinytodo{C'era un commento sul fatto che la BRD dia sol within un approx factor empirico dal MILP. A guardare il grafico non mi pare proprio. Ho spostato il commento sulla figura dopo, dove l'approx factor lo vedo. Qua ora racconto che la forbice cresce.}

\begin{figure}[h!]
  \centering
  \begin{subfigure}[b]{0.49\textwidth}
    \includegraphics[width=1.1\textwidth]{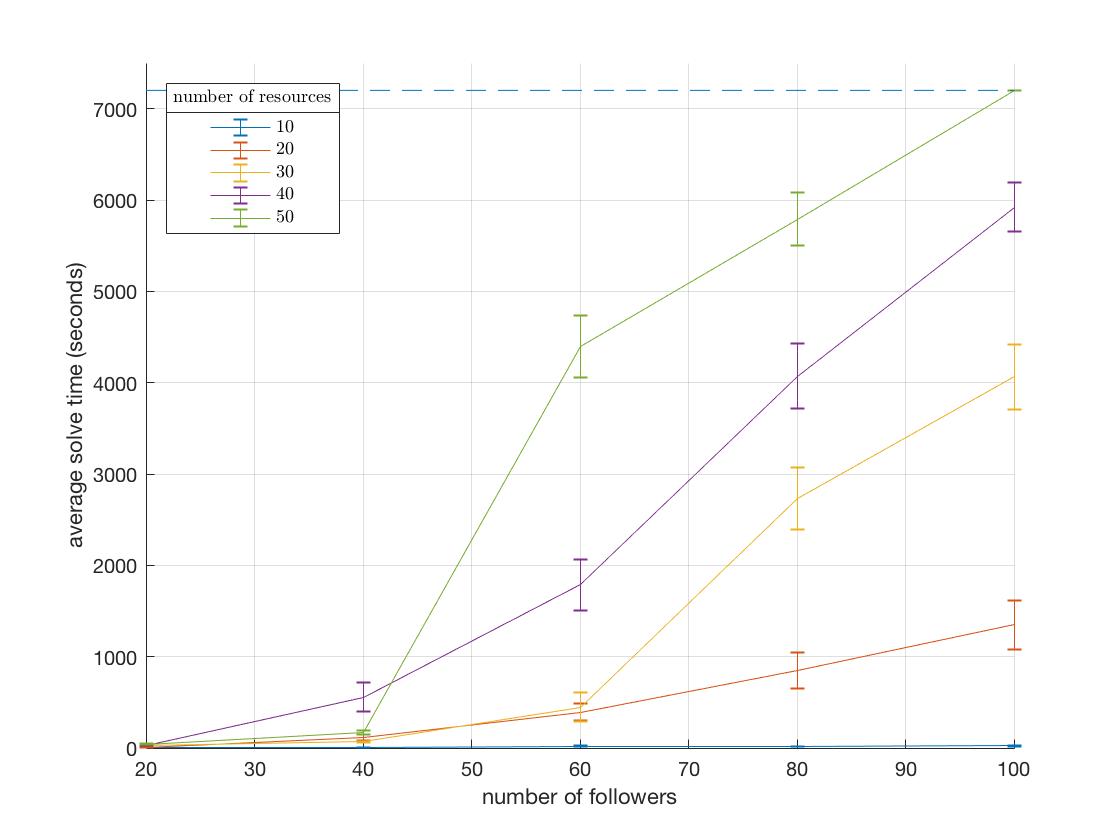}
    \caption{ }
  \end{subfigure}
  \begin{subfigure}[b]{0.49\textwidth}
    \includegraphics[width=1.1\textwidth]{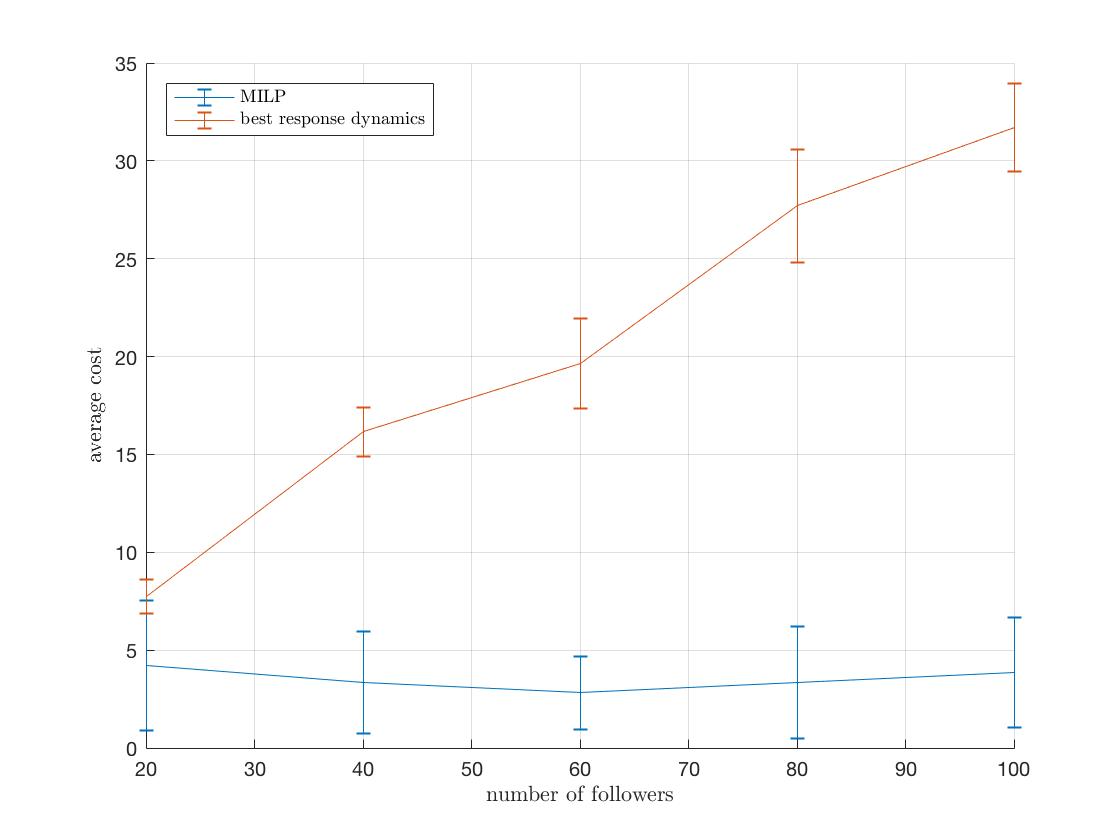}
    \caption{ }
  \end{subfigure}
  \caption{
    Results for the computation of an OSE in SSSCGs with arbitrary costs 30 resources.
    (a) Average computing time required by Formulation~\eqref{eq:milp_same_actions}, as a function of the number of followers and for a different number of actions available to each player.
    (b) Average leaders' cost of the solutions obtained with MILP Formulation~\eqref{eq:milp_same_actions} and with the best response dynamics heuristic as a function of the number of followers, with 15 actions per player.}
  \label{fig:gull12}
\end{figure}        

Figure~\ref{fig:gull34}~(a) and~(b) report the results for SSCGs with arbitrary costs with 30 resources.
Figure~\ref{fig:gull34}~(a) reports the average computing time required by Formulation~\eqref{eq:milp_different_actions} to find an OSE, as a function of the number of followers and for a different number of actions available to each player.
Similarly to the case of SSSCGs, the chart shows that with Formulation~\eqref{eq:milp_different_actions} we can find an optimal solution within the time limit of 7200 seconds in all the instances.
This suggests that, even if the problem is hard in the worst case, also for SSGGs one can find an optimal solution in a reasonable amount of computing time on randomly generated instances.
The chart also shows, though, that the time required to solve this class of problems is much larger than the time required to solve their SSSCGs counterparts.
Figure~\ref{fig:gull34}~(b) reports, for games with 15 actions per player, the average leader's cost of the solutions obtained with the MILP Formulation~\eqref{eq:milp_different_actions} and with the best response dynamics heuristic, as a function of the number of followers.
%
%Also in this case, one can observe that this latter algorithm returns solutions of good quality, empirically within a constant approximation factor of the optimal ones.\tinytodo{Andrebbe quantificato questo}
Differently from the case of SSSCGs, we observe that for SSCGs the heuristic
%typically returns good-quality
returns solutions which, empirically, appear to be within a constant approximation factor of the optimal ones, never larger than 5.
%\tinytodo{Ci ho messo un numerello.}

Overall, the results suggest the practical viability of our MILP formulations for finding provably optimal solutions also for games where a simple best response heuristic provide poor-quality solutions.

%\tinytodo[inline]{Non sono completamente sicuro che questa conclusione giri bene. Mancherebbe anche una nota sul fatto che gli SSCGs sono piu' difficili da risolvere per il MILP.}

%the results show that simple best-response dynamics can provide an effective approximation of the optimal solutions for large random instances. \tinytodo{Non credo possiamo dirlo in generale: la forbice cresce per gi SSSCGs.}

\begin{figure}[h!]
  \centering
  \begin{subfigure}[b]{0.49\textwidth}
    \includegraphics[width=1.1\textwidth]{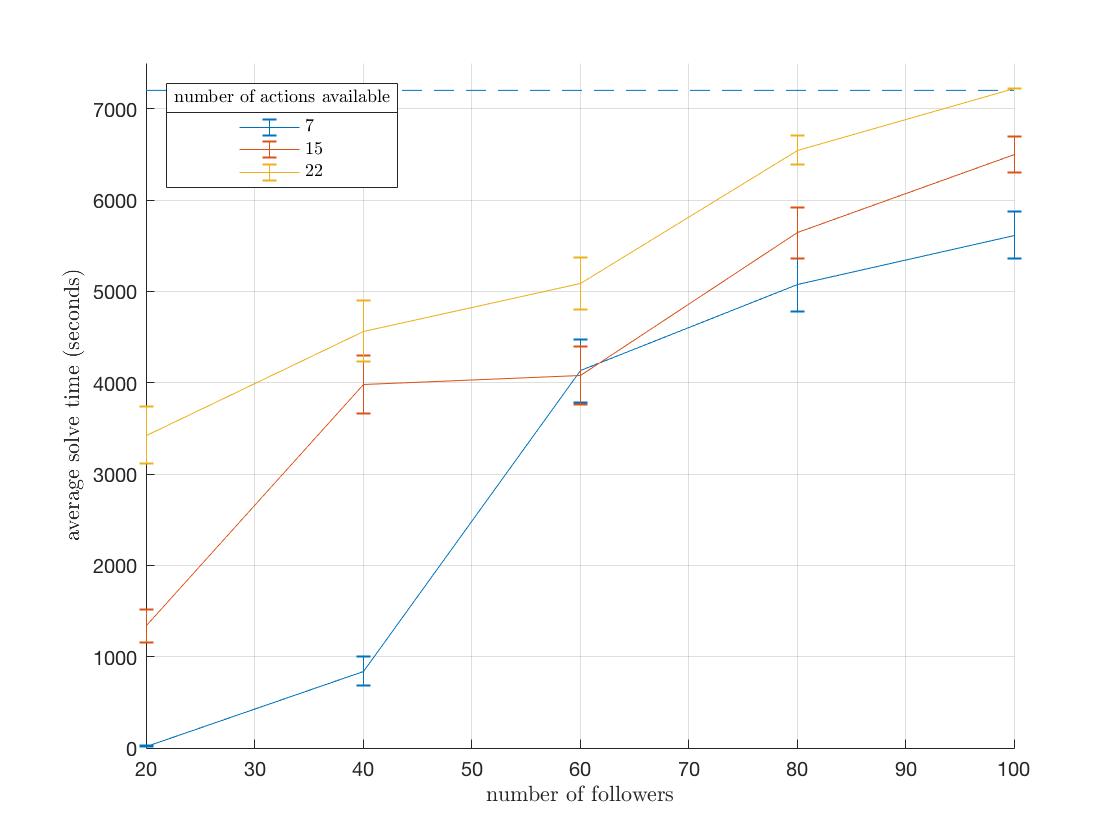}
    \caption{ }
  \end{subfigure}
  \begin{subfigure}[b]{0.49\textwidth}
    \includegraphics[width=1.1\textwidth]{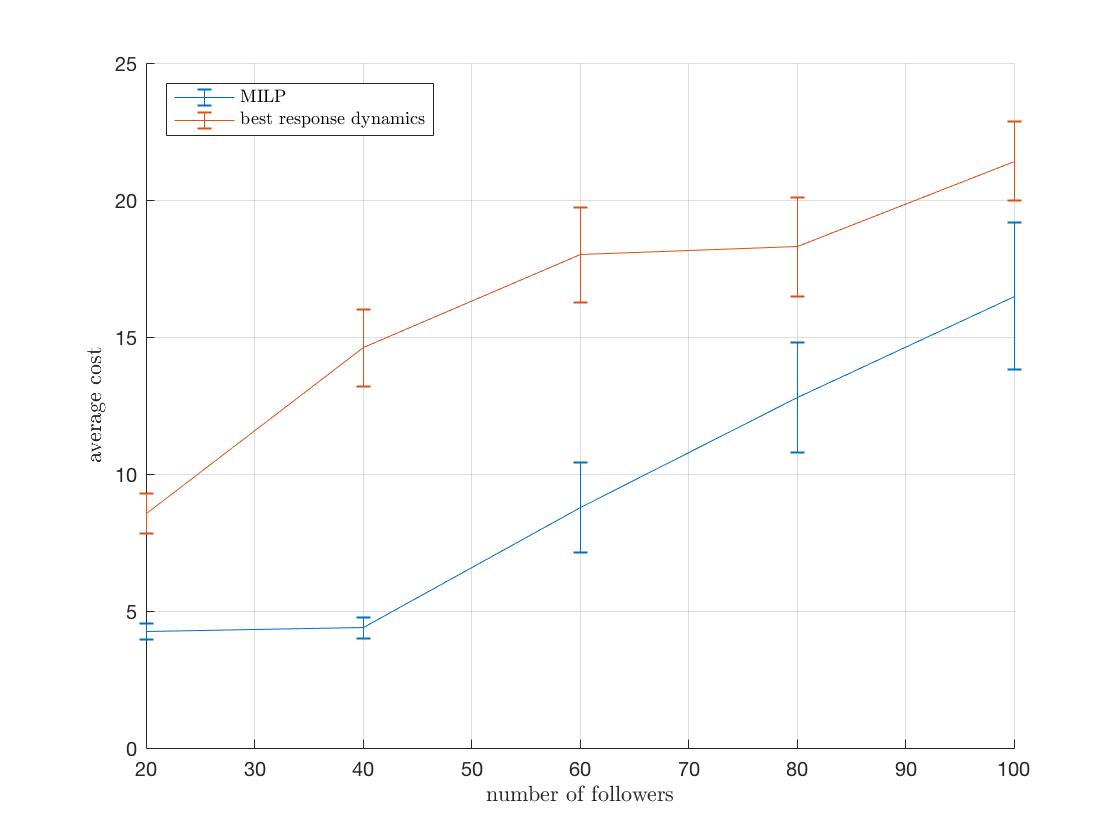}
    \caption{ }
  \end{subfigure}
    \caption{
      Results for the computation of an OSE in SSCGs with arbitrary costs and 30 resources.
      (a) Average computing time required by Formulation~\eqref{eq:milp_different_actions}, as a function of the number of followers and for a different number of actions available to each player.
      (b) Average leader's cost of the solutions obtained with the MILP Formulation~\eqref{eq:milp_different_actions} and with the best response dynamics heuristics, as a function of the number of followers and with 15 actions per player.}
    \label{fig:gull34}
\end{figure}

\section{Conclusions and Future Works}\label{sec:conclusions}

We have analyzed Stackelberg games where the underlying structure is a congestion game, focusing on the case in which the players' actions are singletons.

%Although this is the simplest case,
We have shown that the problem of computing a Stackelberg Equilibrium (SE) in such games is hard, except for the case in which all the players share the same resources and the cost functions are monotonically increasing in the congestion level.
More precisely, we have shown that, for games where either the players have different action spaces and their cost functions are monotonic, or their action spaces are the same but their cost functions are nonmonotonic, it is not possible to approximate in polynomial time the leader's cost at an either optimistic SE (OSE) or pessimistic SE (PSE) up to within any factor polynomial in the size of the game, unless $\mathsf{P}=\mathsf{NP}$.

We have proposed a polynomial-time algorithm for finding an O/PSE for the case where the players have the same action spaces and their cost functions are monotonic, and we have shown that games in this class always admit a pure-strategy SE.
We have also shown how to improve the complexity of the state-of-the-art algorithm for the computation of an optimal NE in singleton congestion games, which has allowed us to compute an O/PSE in polynomial time for the case where the leader is restricted to pure strategies.
For the intractable cases with different action spaces and generic cost functions, we have proposed a mixed-integer linear programming formulation for finding an OSE, and a more compact one for the case in which the space of actions are the same.
We have shown that state-of-the-art solvers scale well in practice when solving our formulations on random game instances, allowing for tackling games with up to 40 resources and 100 followers.
The experiments have also revealed that a simple heuristic algorithm based on the repetition of best-response dynamics
%(whose length is polynomial, our games being singleton)
returns high-quality solutions.
%, empirically providing an approximation with constant additive factor.

In the future, we will investigate whether congestion games with a special structure allow for efficient solution algorithms.
We will also investigate whether the introduction of more than a single leader makes the problem harder and, finally, we will study practical applications of the methods we developed to, e.g., routing and queuing problems.

%\end{document}

\section*{References}

\bibliography{refs}

\end{document}